\newlength{\longeqmarginwidth}
\newlength{\longeqwidth}
\newlength{\longeqskiplength}
\newcommand{\Vzero}[2]{\mathord{\stackrel{\textstyle\kern1pt\circ}
{\smash V\vbox to6pt{}}}\vphantom{V}_{#1}^{#2}}
\newcommand{\function}[2]{:#1 \longrightarrow #2}
\newcommand{\of}[1]{\left( #1 \right)}
\newcommand{\set}[2]{\left\{\hspace{0.2ex} #1 \left|\: #2
\right. \right\}}
\newcommand{\df}{\stackrel{\rm def}{=}}
\newcommand{\res}{\mbox{Res}}
\newcommand{\unaryinf}[2]{{\strut {\displaystyle #1}\over #2}}
\newcommand{\binaryinf}[3]{\frac{\strut{\displaystyle  #1}\hspace{1cm}
{\displaystyle #2}}{#3}}
\newcommand{\ternaryinf}[4]{\frac{\strut{\displaystyle  #1}\hspace{1cm}
\strut{\displaystyle #2}\hspace{1cm} {\displaystyle #3}}{#4}}
\newcounter{operator}
\begin{document}
\title{On CDCL-based proof systems with the ordered decision strategy}
\author{Nathan Mull\thanks{University of Chicago, Department of Computer Science, {\tt nmull@cs.uchicago.edu.}} \and Shuo Pang\thanks{University of Chicago, Department of Mathematics, {\tt spang@math.uchicago.edu}.} \and Alexander Razborov\thanks{University of Chicago, USA, {\tt razborov@math.uchicago.edu} and Steklov Mathematical Institute, Moscow, Russia, {\tt razborov@mi-ras.ru}.}}
\maketitle
\abstract{
    We prove that conflict-driven clause learning SAT-solvers with the ordered decision strategy and the DECISION
learning scheme are equivalent to ordered resolution. We also prove that, by
replacing this learning scheme with its opposite that stops after the first
new clause when backtracking, it becomes equivalent to general resolution. To
the best of our knowledge, this is the first theoretical study of the
interplay between specific decision strategies and clause learning.

For both results, we allow nondeterminism in the solver's ability to perform unit propagation, conflict analysis, and restarts, in a way that is similar to previous works in the literature.
To aid the presentation of our results, and possibly future research, we define a model and language for discussing CDCL-based proof systems that allows for succinct and precise theorem statements.

}
\section{Introduction}
    %
% Introduction
%

SAT-solvers have become standard tools in many application domains such as hardware verification, software verification, automated theorem proving, scheduling and computational biology (see \cite{2006-Gupta-Ganai-Wang, 2008-Ivancic-et-al, 2009-Bonet-StJohn,2011-McGregor, 2017-Cohen-et-al} among the others).
Since their conception in the early 1960s, SAT-solvers have become significantly more efficient, but they have also become significantly more complex.
Consequently, there has been increasing interest in understanding the theoretical limitations and strengths of contemporary SAT-solvers.
Much of the recent literature has focused on the connections between SAT-solvers and subsystems of the resolution proof system originally introduced in \cite{1938-Blake, 1965-Robinson}.

This connection essentially started with the Davis-\-Putnam-\-Logemann-\-Loveland procedure (DPLL) \cite{1962-Davis-Logemann-Loveland}, a backtracking search algorithm that builds partial assignments one literal at a time until a satisfying assignment is found or all assignments have been exhausted.
Since DPLL is sound and complete, its computational trace when applied to an unsatisfiable formula is a \textit{proof} of unsatisfiability.
It is generally accepted as a folklore result that the computational trace of DPLL on an unsatisfiable formula can be converted into a tree-like resolution refutation.
Thus, tree-like resolution lower bounds imply DPLL running time lower bounds.
And in some sense, these lower bounds are tight: DPLL, given oracle access to a tree-like resolution refutation $\Pi$ of the input formula, can run in time that is polynomial in the length of $\Pi$.
That is, DPLL is essentially equivalent to tree-like resolution and thus can be viewed as a propositional proof system in the Cook-Reckhow sense \cite{1979-Cook-Reckhow}.

Nearly all contemporary SAT-solvers are variants of DPLL augmented with modern algorithmic techniques and heuristics.
The technique most often credited for their success is \textit{conflict-driven clause learning} (CDCL)
\cite{1997-Bayardo-Schrag, 1999-Grasp}, so these solvers are interchangeably called CDCL SAT-solvers, CDCL solvers, or simply CDCL (for further information regarding the design of SAT-solvers, see the 2009 \textit{Handbook of Satisfiability} \cite{2009-SAT-Handbook}).
Just as with DPLL, the computational trace of CDCL can be converted into a resolution refutation, but may no longer be tree-like or even regular.
Thus, general resolution lower bounds imply CDCL running time lower bounds, but it is unclear \textit{a priori} whether these bounds are tight in the same sense as above.

The line of work on the question of whether CDCL solvers simulate general resolution was initiated by Beame et al.\ \cite{2004-Beame-Kautz-Sabharwal} and continued by many others \cite{2005-Van-Gelder, 2006-Nieuwenhuis-Oliveras-Tinelli, 2008-Hertel-Bacchus-Pitassi-Van-Gelder, 2008-Buss-Hoffmann-Johannsen, 2010-Ben-Sasson-Johannsen, 2011-Pipatsrisawat-Darwiche, 2011-Atserias-Fichte-Thurley, 2016-Elffers-et-al}.
The primary difference between all these papers is in the details of the model, the models considered by Pipatsrisawat and
Darwich \cite{2011-Pipatsrisawat-Darwiche} and Atserias et al.\ \cite{2011-Atserias-Fichte-Thurley} being perhaps the most faithful to actual implementations of CDCL SAT-solvers.
But almost all models appearing in the literature make a few nonstandard assumptions.
\begin{enumerate}
    \item
        \textit{Very frequent restarts.}
        The solver restarts roughly $O(n^2)$ times for every clause in the given resolution refutation $\Pi$.
        Though many solvers do restart frequently in practice \cite{2015-Biere-Frohlich}, it is unclear if this is
        really necessary for the strength of CDCL.
    \item
        \textit{No clause deletion policy.}
        The solver has to keep every learned clause.
        In practice, some solvers periodically remove half of all learned clauses \cite{2009-Audemard-Simon}.
    \item
        \textit{Nondeterministic decision strategy.}
        The solver uses oracle access to $\Pi$ to construct a very particular decision strategy.
        In practice, solvers use heuristics \cite{1999-Marques-Silva, 2001-Chaff, 2015-Liang-et-al}.
\end{enumerate}

It is natural to ask whether these assumptions can be weakened or removed entirely.
In this respect, the first two assumptions have become topics of recent interest.
With regards to the first, much research has been dedicated to the study of \textit{nonrestarting} SAT-solvers \cite{2005-Van-Gelder, 2008-Buss-Hoffmann-Johannsen, 2014-Buss-Kolodziejczyk, 2014-Bonet-Buss-Johannsen, 2014-Beame-Sabharwal}.
The exact strength of CDCL without restarts is still unknown and, arguably, makes for the most interesting
open problem in the area.
With regards to the second, Elffers et.\ al.\ \cite{2016-Elffers-et-al} proved size-space tradeoffs in a very tight model of CDCL, which may be interpreted as results about aggressive clause deletion policies.

In this paper we are primarily concerned with the third assumption, i.e.\ how much does the efficiency of CDCL-solvers
depend on the nondeterminism in the decision strategy?
To the best of our knowledge, this has not yet been considered in the literature, perhaps due
to the complexity of decision strategies in contemporary SAT-solvers.
We study a simple decision strategy that we call the \textit{ordered} decision strategy and,
as a historical motivation, we remark that it is identical to the strategy studied by Beame et.\ al.\ \cite{2002-Beame-Karp-Pitassi-Saks} in the context of DPLL without clause learning.
It is defined naturally: when the solver has to choose a variable to assign, the ordered decision strategy dictates that it chooses the smallest unassigned variable according to some fixed order.
There is still a choice in whether to fix the variable to $0$ (\textit{false}) or $1$ (\textit{true}), and
we allow the solver to make this choice nondeterministically. If unit propagation is used, the solver may assign variables out
of order; a unit clause does not necessarily correspond to  the smallest unassigned variable.
This possibility to ``cut the line'' is precisely what makes the situation much more subtle and nontrivial.

Thus, our motivating question is the following:
\begin{displayquote}
    \textit{
        Is there a family of contradictory CNFs $\{\tau_n\}_{n = 1}^\infty$ that possess polynomial
        size resolution refutations but require superpolynomial time for CDCL using the ordered decision scheme, for any order?
    }
\end{displayquote}
Before describing our contributions towards this question, let us briefly review analogous separations in
the context of proof and computational complexities.
Bonet et.\ al.\ \cite{2000-Bonet-Esteban-Galesi-Johannsen} proved that a certain family of formulas requires exponential-sized ordered resolution refutations but have polynomial-sized regular resolution refutations.
Bollig et.\ al.\ \cite{1999-Bollig-et-al} proved that a certain boolean function requires exponential-sized ordered binary decision diagrams (OBDDs) but have polynomial-sized general BDDs.
These results tell us that order tends to be a strong restriction, and the above question asks whether this same phenomenon occurs for CDCL.
It is also worth noting that this question may be motivated as a way of understanding the strength of \textit{static} decision strategies such as MINCE \cite{2001-Aloul-et-al} and FORCE \cite{2003-Aloul-et-al}.
But since such decision strategies are rarely used in practice we will not dwell on this anymore.

\subsection*{Our contributions}

Per the discussion above, a proof system that captures any class of CDCL solvers should be no stronger than general resolution.
It can also be reasonably expected (and in two particular situations will be verified below as easy directions of Theorems \ref{thm:last-l}, \ref{thm:main}) that CDCL with the ordered decision strategy should be at least as strong as ordered resolution w.r.t.\ the same order.
Our main results show that, depending on the learning scheme employed, both of these extremes can be attained.
More specifically, we prove
\begin{enumerate}
    \item
        CDCL with the ordered decision strategy and a learning scheme we call $\decisionl$ is equivalent to ordered resolution (Theorem \ref{thm:last-l}).
        In particular, it does not simulate general resolution.

    \item CDCL with the ordered decision strategy and a learning scheme we call \textsf{FIRST-L} is equivalent to general resolution (Theorem \ref{thm:main}).
\end{enumerate}

\begin{remark}
As the name suggests, $\decisionl$ is the same as the so-called DECISION learning scheme in the literature.\footnote{We use this slightly different name so that it fits our naming conventions below.}
Hence these two results, taken together, go somewhat against the ``common wisdom".
Namely, it turns out that in the case of ordered decision strategy, an assertive learning scheme is badly out-performed by a scheme that, to the best of our knowledge, has not been used in practice.
That said, \textsf{FIRST-L} is similar to the learning scheme FirstNewCut \cite{2004-Beame-Kautz-Sabharwal}, and both schemes have the property that they are designed somewhat artificially to \textit{target} particular resolution steps in a given refutation.
\end{remark}

We also prove linear width lower bounds for the second variant (Theorem \ref{thm:width}), which are in sharp contrast with the size-width relationship for general resolution proved by Ben-Sasson and Wigderson \cite{1999-Ben-Sasson-Wigderson}.

In all these results, the CDCL solver may arbitrarily choose the conflict/unit clause if there are several, may elect not to do conflict analysis/unit propagations at all, and may restart at any time.
We work under these assumptions in part because our intention here is to focus on the impact of decision strategies.
But this substantial amount of nondeterminism also allows us to identify two proof systems that are, more or less straightforwardly, equivalent to the corresponding CDCL variant.
Determining the exact power of these systems constitutes the main technical part of this paper.

The first proof system might be of independent interest; we call it {\em half-ordered resolution}. For
a given order on the variables, ordered resolution can be alternately described by the requirement that
in every application of the resolution rule, the resolved variable is larger than any other variable
appearing in both of the two antecedent clauses. We relax this requirement by asking that this property holds only
for \textit{one} of them, which  reflects the inherent asymmetry in resolution rules resulting from clause learning
in CDCL solvers. Somewhat surprisingly (at least to us), it turns out (Theorem \ref{thm:p1}) that this relaxation
does not add any extra power, and half-ordered resolution is polynomially equivalent to ordered resolution w.r.t.\ the same order.

The second proof system (let us call it $P_0$) extends half-ordered resolution and is more auxiliary in nature. It is based
on the observation that with the amount of nondeterminism we allow, all trails\footnote{A \textit{trail} is essentially an ordered partial assignment constructed by CDCL during its execution.} that a CDCL solver manages to create can be
easily recreated when needed.
Accordingly, the system works with lines of two types, one for clauses and another for trails.
Clauses entail nontrivial trails via a unit propagation rule while trails can be used to enhance the half-ordered resolution rule.
We show that $P_0$ is polynomially equivalent to resolution (Theorem \ref{thm:p0}), and since it is by far our most
difficult result, let us reflect a bit on the ideas in its proof.

Like other CDCL-based proof systems, $P_0$ is not closed under restrictions or weakening, so many standard methods no longer apply.
Instead, we use two operations on resolution proofs (lifting and variable deletion) in tandem with some additional structural information to give us a fine-grained understanding of the size and structure of the general resolution refutation being simulated.
The properties of these operators allow for a surgery-like process; we simulate small local pieces of the refutation and then stitch them together into a new global refutation.

Theorem \ref{thm:main} may at first seem strange to those unfamiliar with CDCL-based proof systems.
We allow nondeterminism in parts of CDCL that are typically deterministic, but then prove an upper bound rather than a lower bound.
Morally speaking, we show that it is possible to \textit{redistribute} some of the nondeterminism in the decision strategy to different parts of CDCL while maintaining the ability to simulate general resolution.
Although the resulting algorithm deviates fairly drastically from standard implementations of CDCL, this deviation is not unprecedented.
The correspondence between proof systems and algorithms here is very similar to the correspondence between \textit{regWRTI} and a variant of CDCL with similar features called DLL-LEARN, both introduced by Buss et.\ al.\ \cite{2008-Buss-Hoffmann-Johannsen}.
Static proof systems are easier to analyze, but these nonstandard sources of nondeterminism manifest themselves naturally when translating CDCL into a static proof system.
Like our systems, lower bounds on \textit{regWRTI} imply lower bounds on standard implementations of CDCL, but recent results for \textit{regWRTI} have been upper bounds \cite{2014-Buss-Kolodziejczyk, 2014-Bonet-Buss-Johannsen}.
Of course, upper bounds are interesting in their own right, but even for those who are lower bound inclined, these results have value: they demonstrate, often nontrivially, what convenient features of simple proof systems must be dropped in order to prove separations.

Finally, in order to aid the above work (and, perhaps, even facilitate further research in the area), we present a model and language for studying CDCL-based proof systems.
This model is not meant to be novel, and is heavily influenced by previous work \cite{2006-Nieuwenhuis-Oliveras-Tinelli, 2011-Atserias-Fichte-Thurley, 2016-Elffers-et-al}.
However, the primary goal of our model is to \textit{highlight} possible nonstandard sources of nondeterminism in variants of CDCL, as opposed to creating a model
completely faithful to applications.
For example, Theorem \ref{thm:main} can be written in this language as:
\begin{displayquote} \em
    For any order $\pi$, \textnormal{\textsf{CDCL}(\textsf{FIRST-L}, \textsf{$\pi$-D})} is equivalent to general resolution.
\end{displayquote}
We will also try to pay a special attention to finer details of the model sometimes left implicit in previous works.
This entails several subtle choices to be made, and we interlace the mathematical description of our model with informal
discussion of these choices.

The paper is organized as follows. In Section \ref{sec:prel} we give all necessary definitions and formulate our main results as we go along.

In Section \ref{sec:last-l} we prove Theorem \ref{thm:last-l} on the power of $\cdcl$ with the ordered decision strategy and the $\decisionl$ learning strategy. Section \ref{sec:proofcomplexity} contains proof-complexity theoretic arguments
about half-ordered resolution, while in Section \ref{sec:equiv1} we establish its translation to the language of $\cdcl$.

In Section \ref{sec:first-l} we prove Theorem \ref{thm:main} on the power of $\cdcl$ with the ordered decision strategy and the $\firstl$ learning strategy. To that end, in Section \ref{sec:equiv2} we show the equivalence of this system to the proof system $P_0$
mentioned above, and in Section \ref{sec:main} we establish that $P_0$ is actually equivalent to general resolution
(Theorem \ref{thm:p0}).

In Section \ref{sec:width} we prove Theorem \ref{thm:width} that, roughly speaking, states that the simulation provided by Theorem \ref{thm:main} fails extremely badly with respect to width. Among other things, this implies that there does not seem to exist any useful width-size relation in the context of CDCL with ordered decision strategy.

We conclude in Section \ref{sec:concl} with a few remarks and suggestions for future work.

\section{Preliminaries and main results} \label{sec:prel}
    Throughout the paper, we assume that the set of propositional variables is fixed as $V\df\{x_1,\ldots,x_n\}$.
A {\em literal} is either a propositional variable or its negation.
We will sometimes use the abbreviation $x^0$ for $\bar x$ and $x^1$ for $x$ (so that the Boolean assignment $x=a$ {\em satisfies} the literal $x^a$).
A {\em clause} is a set of literals, thought of as their disjunction, in which no variable appears together with its negation.
For a clause $C$, $\var(C)$ is the set of variables appearing in $C$.
A {\em CNF} is a set of clauses thought of as their conjunction.
For a CNF $\tau$, $\var(\tau)$ is the set of variables appearing in $\tau$, i.e.\ the union of $\var(C)$ for all $C \in \tau$.
We denote the empty clause by 0.
The {\em width} of a clause is the number of literals in it. A {\em $w$-CNF} is a CNF in which all clauses have width $\leq w$.

The {\em resolution proof system} is a Hilbert-style proof system whose lines are clauses and that
has only one {\em resolution rule}
\begin{equation} \label{eq:res}
\binaryinf{C\lor x_i^a}{D\lor x_i^{1-a}}{C\lor D},\ a\in\{0,1\}.
\end{equation}
We will sometimes make use of the notation $\res(C\lor x_i^a, D\lor
x_i^{1-a})$.

The {\em size} of a resolution proof $\Pi$, denoted as $|\Pi|$, is the number
of lines in it. For a CNF $\tau$ and a clause $C$, $S_R(\tau\vdash C)$ is the
minimal possible size of a resolution proof of the clause $C$ from clauses in
$\tau$ ($\infty$ if $C$ is not implied by $\tau$). Likewise, $w(\tau\vdash
C)$ is the minimal possible width of such a proof, defined as the maximal
width of a clause in it. For a proof $\Pi$ that derives $C$ from $\tau$, the
clauses in $\tau$ that appear in $\Pi$ are called {\it axioms}, and if $C =
0$ then $\Pi$ is called a {\it refutation}. Let $\var(\Pi)$ denote the set of
variables appearing in $\Pi$, i.e.\ the union of $\var(C)$ for $C$ appearing
in $\Pi$.

Note that the {\em weakening rule}
$$
\unaryinf{C}{C\lor D}
$$
is {\em not} included by default. In the full system of resolution it is admissible in the sense that $S_R(\tau\vdash 0)$ does not change if we allow it. But this will not be the case for some of the CDCL-based fragments we will be considering below.

\begin{remark}\label{rem:weakening}
  Despite the above distinction, it is often convenient to consider systems that do allow the weakening rule.
  We make it clear when we do this by adding the annotation `$+$ weakening' to the system.
  For example, resolution $+$ weakening is the resolution proof system with the weakening rule included (as in Section \ref{sec:weakening}).
\end{remark}

\subsubsection*{Resolution Graphs}

Our results depend on the careful analysis of the structure of resolution proofs.
For example, it will be useful for us to maintain structural properties of the proof while changing the underlying clauses and derivations.
We build up the following collection of definitions for this analysis, to which we will refer throughout the later sections. The reader may skip this section for now and return to it in the future as needed.

\begin{definition}\label{df:graph}
For a resolution $+$ weakening proof $\Pi$, its {\it{resolution graph, $G(\Pi)$}}, is an acyclic directed graph representing $\Pi$ in
the natural way: each clause in $\Pi$ has a distinguished node, and for each node there are incoming edges from the nodes
corresponding to the clauses from which it is derived. Every node has in-degree 0, 1, or 2 if its corresponding clause is an axiom,
derived by weakening, or derived by resolving two clauses, respectively. Denote the set of nodes by $V(\Pi)$, and the clause
at $v\in V(\Pi)$ by $c_\Pi(v)$. We do {\em not} assume that $c_\Pi$ is injective, that is we allow the same clause to appear in the proof several times. There is a natural partial order on $V(\Pi)$ reflecting the order of appearances of clauses
in $\Pi$: $v>u$ if and only if $v$ is a descendant of $u$, or equivalently, there is a (directed) path
from $u$ to $v$. We sometimes say that
$v$ is {\it{above}}
(resp.\ {\it below}) $u$ if $v>u$ (resp.\ $v<u$). If, moreover, $(u,v)$ is an edge (directed from $u$ to $v$), we say that $u$ is
a {\em parent} of $v$. A set of nodes is {\it independent} if any two nodes in the set are incomparable.
 {\it{Maximal}}
and {\it{minimal}} nodes of any nonempty $S\subset V(\Pi)$ are defined with respect to this partial order: $\max_\Pi S \df \{
v\in S: \forall u\in S\neg(v<u)\}$, and similarly for $\min_\Pi S$.
\end{definition}

\begin{definition}
Let $S \subseteq V(\Pi)$.
The \textit{upward closure} and \textit{downward closure} of $S$ in $G(\Pi)$ are $\ucl_{\Pi}(S) \df \{v \in V(\Pi) :
\exists w \in S (v \geq w)\}$ and $\dcl_{\Pi}(S) \df \{v \in V(\Pi) : \exists w \in S (v \leq w)\}$, respectively.
A subset of nodes $S$ is {\it{parent-complete}} if for any $v\in S$ of in-degree 2, one parent of $v$ being in $S$ implies that the other parent of $v$ is also in $S$.
It is {\it{path-complete}} if for any directed path $p$ in $G(\Pi)$, the two end points of $p$ being in $S$ implies all nodes of $p$ are.
\end{definition}

\begin{example} \label{ex:complete}
The upward closure $\ucl_{\Pi}(S)$ is path-complete but need not be parent-complete.
The downward closure $\dcl_{\Pi}(S)$ is always both path-complete and parent-complete.
\end{example}

These definitions behave naturally, as demonstrated by the following proposition.

\begin{proposition}\label{prop:p1}
Let $S\subset V(\Pi)$ be a nonempty set of nodes that is both parent-complete and path-complete. Then the following holds.
\begin{enumerate}
\item \label{a} The induced subgraph of $G(\Pi)$ on $S$ is the graph of a
subproof in $\Pi$ of $\max_\Pi S$ from  $\min_\Pi S$;

\item \label{b} If $v\in S$ has a parent that is not in $S$ then $v\in\min_{\Pi}S$.
\end{enumerate}
\end{proposition}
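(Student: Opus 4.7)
The plan is to prove part (b) first and then deduce part (a) from it, since (b) is exactly the structural fact that lets the induced subgraph inherit the derivations of $\Pi$.

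For (b), I would argue by contradiction: suppose $v \in S$ has a parent $u \notin S$, yet $v \notin \min_\Pi S$. Then there exists $w \in S$ with $w < v$, so there is a directed path $w = v_0, v_1, \ldots, v_k = v$ in $G(\Pi)$. Since both endpoints of this path are in $S$, path-completeness forces every $v_i$, including the predecessor $v_{k-1}$, to be in $S$. But $v_{k-1}$ is a parent of $v$, so $v$ has at least one parent in $S$. If $v$ has in-degree $1$, then $v_{k-1} = u \in S$, contradicting $u \notin S$. If $v$ has in-degree $2$, then parent-completeness upgrades $v_{k-1} \in S$ to the conclusion that both parents of $v$ lie in $S$; in particular $u \in S$, again a contradiction.

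For (a), I would use (b) to check that the induced subgraph on $S$ is a valid resolution(+ weakening) proof, and that its leaves (in-degree $0$ nodes) are exactly $\min_\Pi S$, while its sinks (out-degree $0$ nodes within the subgraph) are exactly $\max_\Pi S$. The inclusion $\min_\Pi S \subseteq \{\text{leaves of the induced subgraph}\}$ is immediate because a parent in $S$ would contradict minimality, and similarly for $\max_\Pi S$. For the reverse direction, let $v \in S \setminus \min_\Pi S$; by (b), every parent of $v$ in $G(\Pi)$ lies in $S$, so $v$'s in-degree in the induced subgraph equals its in-degree in $G(\Pi)$, and its derivation from its parents in $\Pi$ is still valid. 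Hence non-minimal nodes of $S$ are not leaves, and each one inherits a legitimate resolution or weakening step. An analogous argument (via the upward-directed path and the fact that descendants of $v$ in $S$ force $v$ not to be a sink) handles $\max_\Pi S$.

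The only nontrivial point is the interaction between the two completeness assumptions in (b); once that is handled, (a) is essentially a bookkeeping check that the induced subgraph is closed under ``taking the derivation of each non-minimal node.'' I do not expect any serious obstacle beyond phrasing (b) cleanly.
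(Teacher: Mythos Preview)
Your proposal is correct and relies on the same mechanism as the paper: use path-completeness to find a parent of a non-minimal node inside $S$, then parent-completeness to conclude all parents lie in $S$. The only difference is organizational: you prove (b) first and invoke its contrapositive to check (a), whereas the paper proves (a) directly via a closure argument $S^\ast = S$ (whose inductive step is precisely your (b) argument inline) and then handles (b) separately; your ordering is arguably cleaner since it avoids repeating the same argument.
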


\begin{proof}
\\
\ref{a}. Let $S^\ast\subseteq S$ be the set of all nodes in $S$ ``provable'' from $\min_\Pi S$ inside $S$. Formally, it is the closure of
$\min_\Pi S$ according to the following rule: if $v\in S$ and all its parents are in $S^\ast$ then $v$ is also in $S^\ast$. We need to
show that $S^\ast=S$.

Assume not, and fix an arbitrary $v\in\min_\Pi(S\setminus S^\ast)$. Since $v\not\in\min_\Pi S$, there exists $w\in S$ below $v$.
Since $S$ is path-complete, we can assume w.l.o.g. that $w$ is a parent of $v$, and since $S$ is parent-complete, all parents of
$v$ are in $S$. Now, since $v$ is {\em minimal} in $S\setminus S^\ast$, all of them must be actually in $S^\ast$. Hence
$v\in S^\ast$, a contradiction.
\medskip

\noindent \ref{b}. All parents of $v$ are not in $S$ by parent-completeness and, for all $u < v$, it follows that $u\not\in S$ by path-completeness.
\end{proof}

\begin{definition}
  A resolution graph is \textit{connected} if $|\max_\Pi V(\Pi)|=1$, i.e.  there is a unique sink.
\end{definition}

\begin{example}
  For a resolution proof $\Pi$ and $v \in V(\Pi)$, the subgraph on $\dcl_\Pi(\{v\})$ is a connected resolution graph whose axiom nodes are among those of $G(\Pi)$.
\end{example}

\subsubsection*{Ordered and Half-Ordered Resolution}

Fix now an order $\pi\in S_n$.
For any literal $l=x_k^{a}$, $\pi(l)\df\pi(k)$.
For $k \in [n]$, let $\var_\pi^k$ denote the $k$ smallest variables according to $\pi$.
Say that a clause $C$ is \textit{$k$-small} with respect to $\pi$ if $\var(C) \subseteq \var_\pi^k$.

The proof system {\em $\pi$-ordered resolution} is the subsystem of resolution defined by imposing the following restriction on the resolution rule \eqref{eq:res}:
$$
\forall l \in C\lor D\ (\pi(l) < \pi(x_i)).
$$
That is, the two antecedents are $i$-small. We note that in the literature this system is usually defined differently, namely in a top-down manner (see e.g.
\cite{2000-Bonet-Esteban-Galesi-Johannsen}). It is easy to see, however, that our version is equivalent.

\begin{definition}
{\em $\pi$-half-ordered resolution} is the subsystem of resolution in which the rule \eqref{eq:res} is restricted by the requirement
\begin{equation} \label{eq:half-ordered}
\forall l\in C\ (\pi(l)<\pi(x_i)).
\end{equation}
That is, at least one of the antecedents is $i$-small.
\end{definition}

Recall \cite{1979-Cook-Reckhow} that a proof system $P$ $p$-simulates another proof system $Q$ if there exists a polynomial time algorithm that takes any $Q$-proof to a $P$-proof from the same axioms (in particular, the size of the $P$-proof is bounded by a polynomial in the size of the original proof). Two systems $P$ and $Q$ are {\em polynomially equivalent} if they $p$-simulate each other.

We are now ready to state our first result.

\begin{theorem} \label{thm:p1}
For any order $\pi\in S_n$, $\pi$-ordered resolution is polynomially equivalent to $\pi$-half-ordered resolution.
\end{theorem}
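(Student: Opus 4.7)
The direction $\pi$-ordered $\Rightarrow$ $\pi$-half-ordered is immediate, since every $\pi$-ordered step is already $\pi$-half-ordered. For the converse I propose to proceed by induction on $n$, the number of variables, eliminating at each step the $\pi$-largest variable $y$ appearing in a given $\pi$-half-ordered refutation $\Pi$ of $\tau$.

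First I would isolate the following structural observation. Partition the nodes of $G(\Pi)$ into $\Pi^0,\Pi^+,\Pi^-$ according to whether their clause omits $y$, contains $y$, or contains $\bar y$. For any non-axiom $v\in\Pi^+$, the resolved variable $x_i$ at $v$ satisfies $\pi(i)<\pi(y)$, and the $C$-side of the half-ordered rule has $\var(C)\subseteq\{x_j:\pi(j)<\pi(i)\}$; in particular it cannot mention $y$, so the $C$-parent of $v$ lies in $\Pi^0$ while the $D$-parent, which must still carry the $y$-literal of $v$, lies in $\Pi^+$. Hence $\Pi^+$ inherits a forest structure in which each tree is rooted at an axiom of $\tau^+\df\{A\in\tau:y\in A\}$ and each non-root node is obtained from its unique $\Pi^+$-parent by resolving on some $x_i\neq y$ with a $y$-free ``side clause'' drawn from $\Pi^0$; the analogue holds for $\Pi^-$ and $\tau^-$.

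I would then reduce to a smaller instance. Let
\[
\tau''\df(\tau\setminus(\tau^+\cup\tau^-))\cup\{\res(A,B):A\in\tau^+,\,B\in\tau^-\},
\]
where $\res$ is taken on $y$; these extra clauses can be produced from $\tau$ by single $y$-resolutions between axioms, which are automatically $\pi$-ordered because $y$ is the $\pi$-largest variable. I then build a $\pi$-half-ordered refutation $\Pi'$ of $\tau''$ that never mentions $y$, by mimicking $\Pi^0$: its axioms from $\tau\setminus(\tau^+\cup\tau^-)$ stay, each resolution internal to $\Pi^0$ is reused verbatim, and every clause $E\in\Pi^0$ that $\Pi$ derives as the $y$-resolvent of some $U\in\Pi^+$ and $V\in\Pi^-$ is reconstructed in $\Pi'$ from the axiom $\res(A_U,A_V)\in\tau''$ by successively applying the side clauses $q_1,\ldots,q_k$ along $U$'s $\Pi^+$-backbone and $q'_1,\ldots,q'_{k'}$ along $V$'s $\Pi^-$-backbone. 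Because none of the $q_t$ or $q'_t$ contains $y$, resolution on $y$ commutes with resolution on each $x_{i_t}\neq y$, so this reconstruction indeed yields $E$; and each step is half-ordered because every side clause is already $\pi$-small with respect to the variable being resolved in $\Pi$. Applying the inductive hypothesis to $\Pi'$ produces a $\pi$-ordered refutation of $\tau''$, which, preceded by the $y$-resolutions on axioms, is a $\pi$-ordered refutation of $\tau$.

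The main obstacle I anticipate is quantitative: the cost of rebuilding each $y$-resolvent $E$ in $\Pi'$ is $O(k+k')$, so the naive size of $\Pi'$ is of order $|\Pi^0|+\sum_{y\text{-res}}(k+k')$, and the second summand can be as large as $\Theta(|\Pi|^2)$ in the worst case. Since the induction is iterated $n$ times, such a per-step quadratic blow-up would compound into a super-polynomial bound. To obtain genuine polynomial equivalence I expect to need to share backbone reconstructions across different $y$-resolutions---for example, deriving, once per relevant pair $(A,B)\in\tau^+\times\tau^-$ of backbone roots, a single ``$y$-shifted'' copy of the $\Pi^+$-subtree at $A$ from which every needed resolvent can be read off by appending only the $V$-backbone, and amortizing these shifted copies globally against the original $\Pi$. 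Verifying such an amortized linear per-level blow-up, and thereby ensuring that the final polynomial degree is independent of $n$, is the core technical step.
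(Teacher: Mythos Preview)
Your structural observation is correct: when $y$ is the $\pi$-largest variable, the half-ordered constraint forces one parent of every non-axiom node in $\Pi^+$ (resp.\ $\Pi^-$) to be $y$-free, so $\Pi^\pm$ are forests with side clauses in $\Pi^0$. This is a nice lemma. However, the proposal has two genuine gaps.

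\textbf{The reconstruction is not as clean as claimed.} Writing $E=\res(U,V)$ on $y$ and tracing $U,V$ back to axiom roots $A_U,A_V$, you start from $\res(A_U,A_V)$ and replay both backbones. But $A_U\setminus\{y\}$ and $A_V\setminus\{\bar y\}$ may clash on some other variable, so that $\res(A_U,A_V)$ is tautological and you have no starting point. Even when it is a legitimate clause, the literals of $A_V\setminus\{\bar y\}$ can interfere with the $U$-backbone resolutions (creating intermediate tautologies, or silently deleting a literal that the $V$-backbone later needs). ``Resolution on $y$ commutes with resolution on $x_{i_t}$'' is only true when the clauses involved share no other variable; you would need a case analysis handling these collisions, and it is not obvious that it terminates in a half-ordered derivation of exactly $E$ rather than some subclause.

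\textbf{The size control is the fatal issue, and your proposed fix does not close it.} You correctly observe that rebuilding each $y$-resolvent costs $\Theta(k+k')$ and that this can make $|\Pi'|=\Theta(|\Pi|^2)$. But any bound of the form $|\Pi_{i+1}|\le c\,|\Pi_i|$ (or even $|\Pi_{i+1}|\le |\Pi_i|+c\,|\Pi_i|$) compounds to $c^n|\Pi|$ after eliminating all $n$ variables, which is not polynomial. To get a polynomial simulation you need an \emph{additive} increment bounded in terms of the \emph{original} $|\Pi|$, i.e.\ a global invariant that survives all $n$ rounds. Your sketch of ``sharing $y$-shifted copies of the $\Pi^+$-subtree'' does not supply such an invariant, and it is unclear how to make it do so in this top-down (largest-variable-first) direction.

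For comparison, the paper proceeds in the \emph{opposite} direction: it builds refutations $\Pi_k$ that are ``ordered up to $k$'' for $k=0,\dots,n-1$, i.e.\ it fixes the smallest variables first. At stage $k$ it locates the frontier $L_k$ of minimal $k$-small nodes, and in the region below $L_k$ it eliminates every resolution on $x_{k+1}$ by \emph{pushing} one of the two $x_{k+1}$-literals upward through the proof (turning those resolutions into weakenings and then contracting). The only new nodes introduced sit at the frontier, so $|\Pi_{k+1}|\le|\Pi_k|+|L_k|$. The crucial invariant is that $|\dcl_{\Pi_{k+1}}(L_{k+1})|\le|\dcl_{\Pi_k}(L_k)|$, which pins $|L_k|\le|\Pi|$ for all $k$ and yields the final bound $|\Pi_{n-1}|\le n|\Pi|$. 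Neither the direction of induction, the literal-pushing surgery, nor this downward-closure invariant appears in your plan; obtaining a comparable global invariant is precisely the ``core technical step'' you left open.
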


The next proof system, $P_0$, is even more heavily motivated by CDCL solvers. For this reason we interrupt our proof-complexity exposition to define the corresponding model.

\subsection{CDCL-based proof systems} \label{sec:cdcl}

As we noted in Introduction, we will try to highlight certain subtle points in the definition of the model by injecting informal remarks.

A {\em unit clause} is a clause consisting of a single literal.
An {\em assignment} is an expression of the form $x_i=a\ (1\leq i\leq n,\ a\in\{0,1\})$.
A {\em restriction} $\rho$ is a set of assignments in which all variables are pairwise distinct.
We denote by $\var(\rho)$ the set of all variables appearing in $\rho$.
Restrictions naturally act on clauses, CNFs and resolution proofs, etc.; we denote by $C|_\rho,\ \tau|_\rho,\ \Pi|_\rho\ldots$ the result of this action. Note that both $\pi$-ordered resolution and $\pi$-half-ordered resolution are closed under restrictions, i.e.\ if $\Pi$ is a $\pi$-(half)-ordered resolution proof, then $\Pi|_\rho$ is a $\pi|_\rho$-(half)-ordered resolution proof of no-bigger size, where $\pi|_\rho$ is the order induced by $\pi$ on $V\setminus \var(\rho)$.

An {\em annotated assignment} is an expression of the form $x_i \stackrel * = a\ (1\leq i\leq n,\ a\in\{0,1\}, * \in \{d, u\})$.
Informally, a CDCL solver builds (ordered) restrictions one assignment at a time, and the annotation indicates in what way the assignment is made: `$d$' means by a decision, and `$u$' means by unit propagation.
See Definition \ref{df:action} and Remark \ref{rem:annotations} below for details about these annotations.

\begin{definition} \label{df:trail}
A {\em trail} is an ordered list of annotated assignments in which all
variables are again pairwise distinct. A trail acts on clauses, CNFs etc.\
just in the same way as does the restriction obtained from it by disregarding
the order and the annotations on assignments. For a trail $t$ and an
annotated assignment $x_i \stackrel * = a$ such that $x_i$ does not appear in
$t$, we denote by $[t,x_i \stackrel * = a]$ the trail obtained by appending
$x_i \stackrel * = a$ to its end. $t[k]$ is the $k$th assignment of $t$.  A
{\em prefix} of a trail $t=[x_{i_1} \stackrel{*_1}= a_1,\ldots, x_{i_r}
\stackrel{*_r}= a_r]$ is any trail of the form $[x_{i_1} \stackrel{*_1}= a_1,
\ldots, x_{i_s} \stackrel{*_s}= a_s]\ (0\leq s\leq r)$ denoted by $t[\leq
s]$. $\Lambda$ is the empty trail.

A {\em state} is a pair $(\mathbb C, t)$, where $\mathbb C$ is a CNF and $t$ is a trail. The state $(\mathbb C, t)$ is
{\em terminal} if either $C|_t\equiv 1$ for all $C\in\mathbb C$ or $\mathbb C$ contains 0. All other states are nonterminal. We let $\mathbb S_n$ denote the set of all states (recall that $n$ is reserved for the number of variables), and let $\mathbb S_n^{\text o}\subset \mathbb S_n$ be the set of all nonterminal states.
\end{definition}

\begin{remark} \label{rem:positional}
As unambiguous as Definition \ref{df:trail} may seem, it already reflects one important choice,  to consider only  {\em positional}\footnote{The name is suggested by a similar term ``positional strategy'' in game theory.} solvers, i.e. those that are allowed to carry along only CNFs and trails, but not any other auxiliary information. The only mathematical ramification of this restriction is that we will have to collapse the whole clause learning stage into one step, but that is a sensible thing to do anyway. From the practical perspective, however, this restriction is far from obvious and we will revisit this issue in our concluding section \ref{sec:concl}.
\end{remark}

\begin{remark}
We are now about to describe the core of our (or, for that matter, any other) model, that is transition rules between states. But since this definition is the longest one, we prefer to change gears and {\em precede} it with some informal remarks rather than give them after the definition.

Proof systems attempting to capture performance of modern CDCL solvers are in general much bulkier than their logical counterparts and are built from several heterogeneous blocks. At the same time, most papers highlight the impact of one or a few of the features, with a varying degrees of nondeterminism allowed, while the features out of focus are treated in often unpredictable and implicit ways. We have found this state of affairs somewhat impending for the effort of trying to compare different results to each other or to build useful structure around them of the kind existing in ``pure'' proof complexity. Therefore, we adapt an approach that in a sense is the opposite. Namely, we rigorously describe a {\em basic} model that is very liberal and nondeterministic and intends to approximate the union of most conceivable features of CDCL solvers. Then models of actual interest will be defined by their {\em deviations} from the basic model. These deviations will take the form of ``amendments'' forbidding certain forms of behavior or, potentially, allowing for new ones.

Besides this point, there are only few (although sometimes subtle) differences from the previous models, so our description is given more or less matter-of-factly.
\end{remark}

\begin{definition}\label{df:action}
For a (nonterminal) state $S = (\mathbb C,t)\in \mathbb S_n^{\text o}$, we define the finite nonempty set $\text{Actions}(S)$ and the function $\text{Transition}_S\function{\text{Actions}(S)}{\mathbb S_n}$; the fact
$\text{Transition}_S(A)=S'$ will be usually abbreviated to $S\stackrel{A}{\Longrightarrow} S'$. Those are described as follows:
$$
\text{Actions}(S) \df D(S) \stackrel . \cup U(S) \stackrel .\cup L(S),
$$
where the letters $D,U,L$ have the obvious meaning\footnote{Restarts will be treated as a part of the learning scheme.}.

\begin{itemize}
\item $D(S)$ consists of all annotated assignments $x_i \stackrel d = a$ such that $x_i$ does not appear in $t$ and $a\in\{0,1\}$. We naturally let
    \begin{equation} \label{eq:decision}
    (\mathbb C,t)\stackrel{x_i \stackrel d = a}{\Longrightarrow}(\mathbb C, [t, x_i \stackrel d = a]).
    \end{equation}

\item $U(S)$ consists of all those assignments $x_i \stackrel u = a$ for which $\mathbb C|_t$ contains the unit clause $x_i^a$; the transition function is given by the same formula \eqref{eq:decision} but with a different annotation:
\begin{equation} \label{eq:unit-prop}
(\mathbb C,t)\stackrel{x_i \stackrel u = a}{\Longrightarrow}(\mathbb C, [t, x_i \stackrel u = a]).
\end{equation}

\item As should be expected, $L(S)$ is the most sophisticated part of the
    definition (cf. \cite[Section 2.3.3]{2011-Atserias-Fichte-Thurley}).
    Let $t = [x_{i_1} \stackrel{*_1}= a_1,\ldots,x_{i_r} \stackrel{*_r}=
    a_r]$. By reverse induction on $k=r+1,\ldots,1$ we define the set
    $\mathbb C_k(S)$ that, intuitively, is the set of clauses that can be
    learned by backtracking up to the prefix $t[\leq k]$.

We let
\[
    \mathbb C_{r+1}(S) \df \set{D\in\mathbb C}{D|_t=0}
\]
be the set of all {\em conflict clauses}.

For $1\leq k \leq r$, we do the following:
if the $k$-th assignment of $t$ is of the form $x_{i_k} \stackrel d = a_k$, then $\mathbb{C}_k(S) \df \mathbb{C}_{k + 1}(S)$.
Otherwise, it is of the form $x_{i_k} \stackrel u= a_k$, and we build up $\mathbb C_k(S)$ by processing every clause $D\in\mathbb C_{k+1}(S)$ as follows.
\begin{itemize}

\item If $D$ does not contain the literal $\overline{x_{i_k}^{a_k}}$ then we include $D$ into $\mathbb C_k(S)$ unchanged.

\item If $D$ contains $\overline{x_{i_k}^{a_k}}$, then we resolve $D$
    with all clauses $C \in \mathbb C$ such that $C|_{t[\leq k-1]}= x_{i_k}^{a_k}$ and
    include into $\mathbb C_k(S)$ all the results $\res(C,D)$. $D$ itself is not
    included.
\end{itemize}

To make sure that this definition is sound, we have to guarantee that $C$ and $D$ are actually resolvable (that is, they do not contain any other conflicting variables but $x_{i_k}$). For that we need the following observation, easily proved by reverse induction on $k$, simultaneously with the definition:
\begin{claim} \label{clm:c_property}
$D|_t=0$ for every $D\in\mathbb C_k(S)$.
\end{claim}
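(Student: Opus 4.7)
The plan is to carry out the reverse induction on $k$ that the definition already requires, being careful about two things that must happen in lockstep: (i) every time the construction claims to form a resolvent $\res(C,D)$, the two clauses must actually be resolvable (i.e.\ clash on exactly one variable), and (ii) the output still satisfies $D|_t = 0$.

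For the base case $k = r+1$, the definition $\mathbb C_{r+1}(S) = \setleft{D\in\mathbb C}{D|_t=0}$ gives the statement directly. For the inductive step, I would split on the type of the $k$-th assignment. If $t[k]$ is a decision, then $\mathbb C_k(S) = \mathbb C_{k+1}(S)$ and there is nothing to do. If $t[k]$ is a unit propagation $x_{i_k}\stackrel u= a_k$, I would process the two possibilities for how a clause lands in $\mathbb C_k(S)$: clauses $D\in\mathbb C_{k+1}(S)$ not containing $\overline{x_{i_k}^{a_k}}$ are copied over and immediately inherit $D|_t=0$ from the inductive hypothesis; otherwise $D$ arises as $\res(C, D')$ where $D'\in\mathbb C_{k+1}(S)$ contains $\overline{x_{i_k}^{a_k}}$ and $C\in\mathbb C$ satisfies $C|_{t[\leq k-1]} = x_{i_k}^{a_k}$.

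The main obstacle is the second bullet, and specifically verifying resolvability. Suppose, towards contradiction, that $C$ and $D'$ clash on some variable $x_j\neq x_{i_k}$, with $x_j^b\in C$ and $x_j^{1-b}\in D'$. Since $C|_{t[\leq k-1]}$ equals the unit clause $x_{i_k}^{a_k}$, every literal of $C$ other than $x_{i_k}^{a_k}$ is falsified by $t[\leq k-1]$; in particular $t[\leq k-1]$ assigns $x_j = 1-b$. But $t[\leq k-1]$ is a prefix of $t$, so $t$ also assigns $x_j = 1-b$, which means the literal $x_j^{1-b}\in D'$ is \emph{satisfied} by $t$. This contradicts the inductive hypothesis $D'|_t = 0$. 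Hence $x_{i_k}$ is the unique conflicting variable, and $\res(C,D')$ is well-defined.

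Given resolvability, the conclusion $\res(C,D')|_t = 0$ is immediate: the literals inherited from $C\setminus\{x_{i_k}^{a_k}\}$ are all falsified by $t[\leq k-1]\subseteq t$, and the literals inherited from $D'\setminus\{\overline{x_{i_k}^{a_k}}\}$ are falsified by $t$ by the inductive hypothesis applied to $D'$. Since the two inductions on $k$ (building $\mathbb C_k(S)$ and verifying Claim \ref{clm:c_property}) proceed in parallel, one never appeals to a yet-to-be-established case, and the claim is established for all $k$.
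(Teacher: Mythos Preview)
Your proof is correct and follows exactly the approach the paper indicates: a reverse induction on $k$ carried out simultaneously with the construction of $\mathbb C_k(S)$, verifying both resolvability and the invariant $D|_t=0$ at each step. The paper does not spell out the details beyond noting that the claim is ``easily proved by reverse induction on $k$, simultaneously with the definition,'' and your write-up fills in precisely those details.
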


Finally, we let
$$
\mathbb C(S)\df \bigcup_{k=1}^r\mathbb C_k(S),
$$
\begin{equation} \label{eq:learning}
L(S)\df
\begin{cases}
\{(0,\Lambda)\}\ \text{if}\ 0\in \mathbb C(S);\\
\set{(C,t^\ast)}{C\in \of{\mathbb C(S)\setminus \mathbb C},\ t^\ast\ \text{a prefix of}\ t\ \text{such that}\  C|_{t^\ast} \neq 0}\ \text{otherwise}
\end{cases}
\end{equation}
and
$$
(\mathbb C,t) \stackrel{(C,t^\ast)}{\Longrightarrow} (\mathbb C\cup \{C\}, t^\ast).
$$
\end{itemize}
This completes the description of the basic model.
\end{definition}

\begin{remark} \label{rem:annotations}
For nearly all modern implementations of CDCL, the annotations are redundant because CDCL solvers typically require unit propagation always to be performed when it is applicable (in our language of amendments, this feature will be called {\sf ALWAYS-U}).
Nevertheless, the presence of annotations makes the basic model flexible enough to carry on various, sometimes subtle, restrictions and extensions. In particular, we consider solvers that are not required to {\em record} unit propagations as such.
This allows for the situation in which $x_i \stackrel d = a$ {\em and} $x_i \stackrel u = a$ are in $\text{Actions}(S)$, and the set of learnable clauses is sensible to this.
\end{remark}

\begin{remark}
In certain pathological cases, mostly resulting from neglecting to {\em do} unit propagation, the set $\text{Actions}(\mathbb C,t)$ may turn out to be empty
even if $(\mathbb C,t)$ is nonterminal and $\mathbb C$ is contradictory. But for the reasons already discussed above, we prefer to keep the basic model as clean as possible {\em syntactically}, postponing such considerations for later.
\end{remark}

The {\em transition graph} $\Gamma_n$ is the directed graph on $\mathbb S_n$ defined by erasing the information about actions; thus $(S,S') \in E(\Gamma_n)$ if and only if $S'\in\text{im}(\text{Transition}_S)$. It is easy to see (by double induction on $(|\mathbb C|, n - |t|)$) that $\Gamma_n$ is acyclic. Moreover, both the set
$\set{(S,A)}{A\in \text{Actions}(S)}$ and the function $(S,A)\mapsto \text{Transition}_S(A)$ are polynomially\footnote{in the size of the state $S$, not in $n$} time computable. These observations motivate the following definition.

\begin{definition} \label{df:solver}
Given a CNF $\mathbb C$, a \textit{partial run} on $\mathbb C$ from the state $S$ to the state $T$ is a sequence
\begin{equation} \label{eq:run}
S = S_0 \stackrel{A_0}{\Longrightarrow} S_1\stackrel{A_1}{\Longrightarrow}\ldots S_{L-1}\stackrel{A_{L-1}}{\Longrightarrow} S_L = T,
\end{equation}
where $A_k\in \text{Actions}(S_k)$.
In other words, a partial run is an annotated path in $\Gamma_n$.
A \textit{successful run} is a partial run from $(\C, \Lambda)$ to a terminal state.
A {\em CDCL solver} is a {\bf partial} function $\mu$
on $\mathbb S_n^{\text o}$ such that $\mu(S)\in \text{Actions}(S)$ whenever $\mu(S)$ is defined. The above remarks imply that
when we apply a CDCL solver $\mu$ to any initial state $(\mathbb C,\Lambda)$, it will always result in a
finite sequence like \eqref{eq:run}, with $T$ being a terminal state (successful run) or such that $\mu(T)$
is undefined (failure).
\end{definition}

\begin{remark}
Theoretical analysis usually deals with {\em classes} (i.e., sets) of individual solvers rather
than with individual implementations, and there might be several different approaches to defining such classes.
One might consider for example various complexity restrictions like demanding that $\mu$ be polynomially time
computable.  But in this paper we are more interested
in classes defined by prioritizing and restricting various actions.
\end{remark}

\begin{definition} \label{df:local}
A {\em local} class of CDCL solvers is described by a collection of subsets $\text{AllowedActions}(S)\subseteq
\text{Actions}(S),\ S\in\mathbb S_n^{\text o}$. It consists of all those solvers $\mu$ for which $\mu(S)\in \text{AllowedActions}(S)$, whenever $\mu(S)$ is defined.
\end{definition}

We will describe local classes of solvers in terms of {\em amendments} prescribing what actions should be
{\em removed} from the set $\text{Actions}(S)$ to form $\text{AllowedActions}(S)$.
 Without further ado, let us give a few examples illustrating
how do familiar restrictions look in this language. Throughout the description, we fix a nonterminal state
$S = (\mathbb C,t)$.

\begin{description}
\item[\sf{ALWAYS-C}] If $\mathbb C|_t$ contains the empty clause, then $D(S)$ and $U(S)$ are removed from
$\text{Actions}(S)$. In other words, this amendment requires the solver to perform conflict analysis if it
can do so.

\item[{\sf ALWAYS-U}] If $\mathbb C|_t$ contains a unit clause, then $D(S)$ is removed from
$\text{Actions}(S)$. This amendment insists on unit propagation, but leaves to nondeterminism the choice of the unit to propagate if there are several choices. Note that as defined, {\sf ALWAYS-U} is a
lower priority amendment than {\sf ALWAYS-C}: if both a conflict and a unit clause are present, the solver
must do conflict analysis.

\item[{\sf ALWAYS-R}] In the definition \eqref{eq:learning} of learning actions we keep only those $(C,t^\ast)$
for which $t^\ast = \Lambda$.

\item[{\sf NEVER-R}] In the definition \eqref{eq:learning}, we require that $t^\ast$ is the {\em longest}
prefix of $t$ satisfying $C|_{t^\ast}\neq 0$ (in which case $C|_{t^\ast}$ is necessarily a unit clause).
As described, this amendment does not model nonchronological backtracking or require that the last assignment in the trail is a decision.
However, this version is easier to state and it is not difficult to modify to have the aforementioned properties.
Furthermore, all open questions pertaining to this amendment remain open for either version.

\item[{\sf ASSERTING-L}] In the definition \eqref{eq:learning}, we shrink $\mathbb C(S) \setminus \mathbb C$ to $\of{\bigcup_{k=1}^s \mathbb C_k(S)} \setminus \mathbb C$, where $s<r$ is the largest index for which $x_{i_s}=a_s$ is annotated as `$d$' in $t$.
This amendment is meaningful (and mostly used) only when combined with {\sf ALWAYS-C} and {\sf ALWAYS-U}, in which case we can state expected properties like the fact that every learned clause contains the literal $x_{i_s}^{1 - a_s}$.

\item[$\decisionl$] In the definition \eqref{eq:learning}, we shrink $\mathbb C(S) \backslash \mathbb C$ to
$\mathbb C_1(S) \backslash \mathbb C$.

\item[{\sf FIRST-L}] In the definition \eqref{eq:learning}, we shrink $\mathbb C(S)
\setminus \mathbb C$ to those clauses that are obtained by resolving, in the notation of Definition \ref{df:action}, between pairs
$C$ and $D$ with $D\in\mathbb C$. As noted in the introduction, this is similar to the scheme FirstNewCut \cite{2004-Beame-Kautz-Sabharwal}.

\item[\protect{$\pi$-{\sf D}, where $\pi\in S_n$ is an order on the variables}] We keep in $D(S)$ only the two assignments
$x_i \stackrel d = 0$, $x_i \stackrel d= 1$, where $x_i$ is the {\em smallest} variable w.r.t. $\pi$ that does not appear in $t$. Note that this amendment does not have any effect upon $U(S)$, and the main technical contributions of our paper
can be also phrased as asking under which circumstances this ``loophole" can circumvent the severe restriction placed on the set $D(S)$.

\item[{\sf WIDTH}-$w$, where $w$ is an integer] In the definition \eqref{eq:learning}, we keep in $\mathbb C(S) \setminus \mathbb C$ only clauses of width $\leq w$. Note that
this amendment still allows us to use wide clauses as intermediate results {\em within} a single
clauses learning step.

\item[{\sf SPACE}-$s$, where $s$ is an integer] If $|\mathbb C|\geq s$, then $L(S)$ is entirely
removed from $\text{Actions}(S)$. This amendment makes sense when accompanied by the possibility to
do bookkeeping by removing ``unnecessary" clauses. We will briefly discuss positive amendments in
Remark \ref{rem:positive} below.
\end{description}

Thus, our preferred way to specify local classes of solvers and the corresponding proof systems is by
listing one or more amendments, with
the convention that their effect is cumulative: an action is removed from $\text{Actions}(S)$ if and
only if it should be removed according to at least one of the amendments present. More formally,

\begin{definition}
For a finite set $\mathcal A_1,\ldots,\mathcal A_r$ of poly-time computable amendments, we let
$\text{{\sf CDCL}}(\mathcal A_1,\ldots,\mathcal A_r)$ be the (possibly incomplete) proof system whose proofs
are those successful runs \eqref{eq:run} in which none of the actions $A_i$ is affected by any of the
amendments $\mathcal A_1,\ldots,\mathcal A_r$.
\end{definition}

\begin{remark}
The amendments {\sf ALWAYS-C}, {\sf ALWAYS-U} are present in almost all previous work and, arguably, it is
precisely what distinguishes conflict-driven clause learning techniques. Nonetheless, we have decided
against including them into the basic model as they may be distracting in theoretical studies focusing
on other features; our work is one example. Also, from the practical point of view (this sentence is
admittedly rather hypothetical and speculative!) it is not \textit{a priori} clear why it is always a good idea
to jump up at the very first opportunity to do unit propagation or conflict analysis instead of waiting
for ``more promising'' clauses to pop up.
\end{remark}

\begin{remark} \label{rem:positive}
Let us briefly discuss the possibility of \textit{extending} the basic model rather than restricting it.
The most substantial deviation would be to forfeit the assumption of positionality (see Remark \ref{rem:positional}) or, in other words, to allow the solver to carry along more information than
just a set of clauses and a trail.
Two such examples are \textit{dynamic variable ordering} and \textit{phase saving}.
The first is very pertinent to the technical part of our paper, so we defer the corresponding discussion to Section \ref{sec:concl}.

For positional solvers, extending the basic model amounts to introducing {\em positive amendments}
enlarging the sets $\text{Actions}(S)$ instead of decreasing them. Here are a few suggestions we came
across during our deliberations.
\begin{description}
\item[{\sf CLAUSE DELETION}] For $S = (\mathbb C,t)\in \mathbb S_n^{\text o}$, we add to $\text{Actions}(S)$
all subsets $\mathbb C_0\subseteq \mathbb C$. The transition function is obvious:
$$
(\mathbb C, t) \stackrel{\mathbb C_0}{\Longrightarrow} (\mathbb C_0, t).
$$
This is the space model whose study was initiated in \cite{2016-Elffers-et-al}, and like in that paper, we do not see compelling
reasons to differentiate between original clauses and the learned ones.

\item[{\sf MULTI-CLAUSE LEARNING}] In the definition \eqref{eq:learning} of $L(S)$, we can allow arbitrary
nonempty subsets $\mathbb C_0\subseteq \mathbb C(S)\setminus \mathbb C$ instead of
a single clause $C$ {and require that $C|_{t^\ast}\neq 0$} for {\em any} $C\in\mathbb C$, with the obvious transition
$$
(\mathbb C,t) \stackrel{(\mathbb C_0,t^\ast)}{\Longrightarrow} (\mathbb C\cup \mathbb C_0,t^\ast).
$$
Though existing SAT-solver implementions tend not to do this, it is natural to consider when thinking of Pool resolution or RTL proof systems as variants of CDCL.

\item[{\sf INCOMPLETE LEARNING}] In the definition \eqref{eq:learning} of $L(S)$, we could remove the
restriction $C|_{t^\ast}\neq 0$ on the prefix $t^\ast$. This positive amendment
could make sense in the absence of {\sf ALWAYS-C}, that is, if we are prepared for delayed conflict analysis.
\end{description}
\end{remark}

\bigskip
In this language, the (nonalgorithmic part of the) main result from \cite{2011-Atserias-Fichte-Thurley,2011-Pipatsrisawat-Darwiche} can be roughly summarized as
\begin{displayquote}
{\sf CDCL}({\sf ALWAYS-C}, {\sf ALWAYS-U}, {\sf ALWAYS-R}, $\decisionl$) \it is polynomially equivalent to resolution.
\end{displayquote}
The algorithmic part from \cite{2011-Atserias-Fichte-Thurley} roughly says
that {\em any} CDCL solver in the associated class, subject to the only
condition that the choice of actions from $D(S)$ (when it is allowed by the
amendments) is random, polynomially simulates bounded-width
resolution\footnote{That is, has running time $n^{O(w(\tau_n\vdash 0))}$ with
high probability, given an contradictory CNF $\tau_n$ as an input.}. The open
question asked in \cite[Section 2.3.4]{2011-Atserias-Fichte-Thurley} can be
reasonably interpreted as whether {\sf CDCL}({\sf ALWAYS-C}, {\sf ALWAYS-U},
{\sf WIDTH}-$w$) is as powerful as width-$w$ resolution, perhaps with some
gap between the two width constraints (We took the liberty to remove those
amendments that do not appear to be relevant to the question.) Finally, we
would like to abstract the ``no-restarts" question as

\begin{displayquote}
\it Does {\sf CDCL}{\rm (}{\sf ALWAYS-C}, {\sf ALWAYS-U}, {\sf NEVER-R}{\rm )} {\rm (}or at least {\sf CDCL}{\rm (}{\sf NEVER-R}{\rm ))} simulate general resolution?
\end{displayquote}
where we have again removed all other amendments in the hope that this will make the question more clean mathematically.

\subsection{Our contributions}

As they had already been discussed in the introduction, here we formulate our results (in the language just introduced) more or less matter-of-factly.

\begin{theorem} \label{thm:last-l}
For any fixed order $\pi$ on the variables, the system {\sf CDCL}{\rm (}$\decisionl$, $\pi$-{\sf D}{\rm )} is polynomially
equivalent to $\pi$-ordered resolution.
\end{theorem}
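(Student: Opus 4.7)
My plan is to invoke Theorem \ref{thm:p1} to reduce the claim to showing that \textsf{CDCL}($\decisionl$, $\pi$-\textsf{D}) is polynomially equivalent to $\pi$-half-ordered resolution, which I would then prove in both directions.

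For the direction from $\pi$-half-ordered resolution to \textsf{CDCL}, I simulate each resolution step $C \vee D = \res(C \vee x_i^a,\, D \vee x_i^{1-a})$ (with $C$ being $i$-small) by a single CDCL episode. After a restart I decide variables in $\pi$-order: first $x_1, \ldots, x_{i-1}$, with values chosen to falsify $C$ together with the literals of $D$ lying below $x_i$. Then $C \vee x_i^a$ becomes a unit clause that propagates $x_i = a$. I continue deciding in $\pi$-order, assigning each remaining variable of $D$ above $x_i$ to falsify its literal, until $D \vee x_i^{1-a}$ becomes a conflict clause. Applying $\decisionl$ resolves this conflict with $C \vee x_i^a$ on $x_i$ (the only propagation in the trail) and skips every decision step, yielding exactly the learned clause $C \vee D$. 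Iterating over $\Pi$ produces a CDCL run of polynomial length.

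For the direction from \textsf{CDCL} to $\pi$-half-ordered resolution, I convert each $\decisionl$ learning event into a polynomial-size $\pi$-half-ordered derivation of the learned clause $C$ from the current CNF $\mathbb{C}$. The obstacle is that the native $\decisionl$ derivation is not itself half-ordered: when it resolves on a propagation variable $x_p$ using its propagation clause $C_p \in \mathbb{C}$, the clause $C_p$ may contain variables $>_\pi x_p$ (propagations at earlier trail steps whose $\pi$-indices exceed $\pi(x_p)$), so neither antecedent need be $p$-small. To circumvent this I would construct inductively, in trail order, a $p$-small surrogate $C_p^*$ of each $C_p$ by iteratively resolving away every variable $>_\pi x_p$ outside the pivot literal, using the already-built $C_q^*$'s of earlier propagations $x_q$. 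Processing these eliminations in strictly decreasing $\pi$-order of the offending variables ensures each intermediate resolution is itself half-ordered and drives the $\pi$-maximum of the ``still-large'' part strictly down, so each $C_p^*$ is built in $O(n)$ half-ordered steps. Substituting the $C_p^*$'s for the $C_p$'s in the original $\decisionl$ derivation then yields a half-ordered derivation of the same clause $C$, since $C_p^*$ still unit-propagates $x_p^{a_p}$ under $t[\leq k-1]$ exactly as $C_p$ does. Concatenating across all learning events produces a polynomial-size $\pi$-half-ordered refutation, which Theorem \ref{thm:p1} lifts to $\pi$-ordered.

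The main obstacle is the polynomial bound on the construction of the $C_p^*$'s: a naive recursive construction could blow up because substituted $C_q^*$'s may reintroduce variables $>_\pi x_p$. The key is the decreasing-$\pi$-order processing, which forces the $\pi$-maximum of still-to-be-eliminated variables to strictly decrease at each step and caps the construction at $O(n)$ half-ordered resolutions per $C_p^*$, giving an overall polynomial bound summed across all learning events.
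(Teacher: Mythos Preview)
Your plan matches the paper's proof almost exactly: both factor through Theorem~\ref{thm:p1}, and both simulate a half-ordered step in \textsf{CDCL}($\decisionl$,$\pi$-\textsf{D}) by deciding down to the $i$-small antecedent, doing a single unit propagation on $x_i$, and then deciding the rest so that $\decisionl$ produces $C\lor D$ (this is the paper's Proposition~\ref{prop:cdcl=p1,1}). For the converse direction, the paper also replaces each propagation clause $C_\nu$ by an ordered surrogate $C'_\nu$ built from earlier surrogates and then runs the learning chain on the surrogates (Proposition~\ref{prop:cdcl=p1,main}).

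The one place where you diverge slightly is the construction of the surrogates. The paper sets $C'_\nu \df C_\nu \circ^{y_{i_{\nu-1}}} C'_{\nu-1} \circ \cdots \circ^{y_{i_1}} C'_1$, i.e.\ it resolves with \emph{all} earlier surrogates in trail-reverse order (allowing null steps), and uses a small invariant (Lemma~\ref{lm:rest}) together with the ``Moreover'' clause of Lemma~\ref{lm:learned_deriv} to conclude that every non-pivot variable left in $C'_\nu$ is a decision, hence $\pi$-below $y_{i_\nu}$. Your version instead eliminates only the offending variables, in $\pi$-decreasing order. Both give half-ordered derivations of polynomial size; your argument that the $\pi$-maximum of the offending part strictly drops is a clean termination measure, while the paper's blanket resolution avoids having to reason about which new variables are reintroduced. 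One small caution: your claim that substitution yields ``the same clause $C$'' is not quite justified by the unit-propagation invariant alone; what you actually get is a \emph{subclause} of the learned clause (the paper is explicit about this at the end of Proposition~\ref{prop:cdcl=p1,main} and then invokes admissibility of weakening in half-ordered resolution). With that adjustment your argument goes through.
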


\begin{theorem} \label{thm:main}
For any fixed order $\pi$ on the variables, the system {\sf CDCL}{\rm (}{\sf FIRST-L}, $\pi$-{\sf D}{\rm )} is polynomially
equivalent to general resolution.
\end{theorem}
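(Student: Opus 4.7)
The plan is to prove the theorem as two polynomial simulations. The direction from $\textsf{CDCL}(\textsf{FIRST-L}, \pi\text{-}\textsf{D})$ to general resolution is immediate from the definition of the learning scheme: each clause learned in the run \eqref{eq:run} is, by construction, the resolvent of an existing clause $C\in\mathbb{C}$ with a conflict clause $D$, so reading off the learned clauses along a successful run yields a resolution refutation of the input CNF whose size is at most the length of the run. For the converse direction, which is the substantive one, I would follow the route outlined in the introduction: introduce an intermediate proof system $P_0$, show that it is polynomially equivalent to $\textsf{CDCL}(\textsf{FIRST-L}, \pi\text{-}\textsf{D})$, and separately show that it is polynomially equivalent to general resolution (Theorem \ref{thm:p0}).

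The system $P_0$ is designed to capture exactly what a $\textsf{CDCL}(\textsf{FIRST-L}, \pi\text{-}\textsf{D})$ solver can do while abstracting away the fact that trails can be regenerated on demand. Lines come in two sorts: clauses, and trails built by a unit-propagation rule from the currently available clauses under the order $\pi$. A trail $t$ may then be used to license an enhanced $\pi$-half-ordered resolution step: from $C\lor x_i^a$ and $D\lor x_i^{1-a}$ one infers $C\lor D$, provided that \eqref{eq:half-ordered} holds for $C$ and that $D$ can be produced (as a clause in the current database or by a short sub-derivation) with the help of $t$. The equivalence with $\textsf{CDCL}(\textsf{FIRST-L}, \pi\text{-}\textsf{D})$ (to be carried out in Section \ref{sec:equiv2}) proceeds by translating each solver action into at most a constant number of $P_0$-lines, using the $\pi$-\textsf{D} amendment to control decision variables, and conversely replaying any $P_0$-derivation as a CDCL run in which we restart, unit-propagate and do conflict analysis precisely when $P_0$ instructs us to; the nondeterminism we allow in the amendments of the basic model is exactly what makes this two-way translation possible without blowup.

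The heart of the argument is the simulation of general resolution by $P_0$ (Theorem \ref{thm:p0}). Given a resolution refutation $\Pi$ of $\tau$, I would analyse $G(\Pi)$ using the structural machinery of parent-complete and path-complete subgraphs developed in Section \ref{sec:prel} together with two operations on resolution proofs: a \emph{lifting} operation that rewrites a local piece so that, modulo an auxiliary trail assigning a prefix of $\pi$, it becomes $\pi$-half-ordered, and a \emph{variable deletion} operation that contracts away variables already handled by the current trail. The simulation proceeds by processing $\Pi$ in increasing order of $\pi$: for each variable $x_i$ in turn, I isolate the subproofs in which $x_i$ is the pivot, use lifting to turn the antecedents into ones where the small side is $i$-small, invoke the unit-propagation rule of $P_0$ to materialise the trail needed to legally apply the enhanced half-ordered rule, and then stitch these local simulations into a global $P_0$-derivation via Proposition \ref{prop:p1}, which guarantees that the parent- and path-complete pieces really do form legitimate subproofs.

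The chief obstacle is preventing the surgery from blowing up the proof size. Unlike resolution, $P_0$ is not closed under restrictions or weakening, so when we delete a variable and move to a larger prefix of $\pi$ we cannot simply restrict the remaining refutation; clauses that previously relied on deleted literals must be rederived, and the auxiliary trails must be kept short enough to fit inside a polynomial budget. Controlling this requires carefully choosing which nodes of $G(\Pi)$ are processed together and using the fact that the lifting operation interacts well with the downward closures $\dcl_\Pi(\cdot)$ from Definition \ref{df:graph}. Once this bookkeeping is in place, the polynomial bound on $|P_0\text{-proof}|$ in terms of $|\Pi|$ follows, and combining the two equivalences gives Theorem \ref{thm:main}.
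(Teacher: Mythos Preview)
Your overall architecture is correct and matches the paper: reduce to the intermediate system $\pi$-$P_0$ (Definition~\ref{df:p0}), establish $\pi$-$P_0 =_p \textsf{CDCL}(\textsf{FIRST-L},\pi\text{-}\textsf{D})$ (Section~\ref{sec:equiv2}), and then prove $\pi$-$P_0 =_p$ resolution (Theorem~\ref{thm:p0}). The easy direction and the equivalence with $P_0$ are essentially as you describe.

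However, your account of the Learning rule of $P_0$ and of the hard simulation is off in ways that matter. First, the side condition is \emph{not} \eqref{eq:half-ordered}: it requires that all variables of $C$ appear before $x_i$ \emph{in the trail $t$} (and that $(C\lor D)|_t=0$). The trail order may differ from $\pi$ precisely because of unit propagations, and this gap between $\pi$-order and trail-order is the whole source of power of $P_0$ over half-ordered resolution; conflating the two would make your ``use lifting to turn the antecedents into ones where the small side is $i$-small'' step a non-starter.

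Second, the paper's proof of Theorem~\ref{thm:p0} is not a variable-by-variable sweep over pivots in $\pi$-order. It is a recursion on the $\pi$-smallest variable $x_1$ whose declared goal is stronger than refuting $\tau$: it derives \emph{every literal of $\var(\Pi)$ as a unit clause}. The algorithm (i) recurses on $\Pi|_{\{x_1=0\}}$ and applies a \emph{lifting} operator that merely prepends $x_1\!\stackrel d=\!0$ to all trails and possibly adds the literal $x_1$ to clauses (it does not rewrite antecedents to be half-ordered); (ii) iterates a \emph{variable-deletion} operator $\del_S(\Pi)$ to pick up the literals missed by the first restriction, with the key size control that the resolutions appearing across these iterates correspond to disjoint resolutions of $\Pi$ (Lemma~\ref{lm:size}); and (iii) once all literals outside $x_1$ are available, simulates $\Pi|_{\{x_1=1\}}$ directly via unit-propagation-built trails to obtain $\overline{x_1}$ and clean up. Your sketch mentions lifting and deletion by name but assigns them different roles, and it lacks the crucial invariant (deriving all unit literals) that makes the trails for arbitrary resolution steps constructible and the recursion polynomial. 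Without that invariant, ``invoke the unit-propagation rule to materialise the trail needed'' is exactly the step that fails: you cannot in general unit-propagate your way to an arbitrary trail unless you already hold the relevant unit clauses.
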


\begin{theorem} \label{thm:width}
For any fixed order $\pi$ on the variables and every $\epsilon>0$ there exist contradictory CNFs $\tau_n$
with $w(\tau_n\vdash 0)\leq O(1)$ not provable in {\sf CDCL}{\rm (}$\pi$-\sf D, {\sf WIDTH}-$(1-\epsilon)n${\rm )}.
\end{theorem}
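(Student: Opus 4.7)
The plan is to exhibit, for each order $\pi$ and each $\epsilon > 0$, a family $\{\tau_n\}$ of constant-width contradictory CNFs satisfying $w(\tau_n \vdash 0) = O(1)$ while admitting no successful CDCL run under $\pi$-{\sf D} and {\sf WIDTH}-$(1-\epsilon)n$. By renaming variables I may assume $\pi$ is the identity. The candidates are pebbling-type contradictions on a carefully chosen constant-indegree DAG $D_n$, with the variable labelling arranged so that the DAG's natural bottom-up pebbling order runs opposite to $\pi$. I would augment the construction with auxiliary ``guard'' variables to ensure that the resolution chain arising in any learning step cannot produce a narrow clause.

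The upper bound $w(\tau_n \vdash 0) = O(1)$ follows from the standard pebbling refutation: derive, in topological order, the literal $x_v$ asserting each vertex $v$ is true, each step a resolution of a width-$O(1)$ inductive clause with previously derived literals, concluding with the sink's unit axiom.

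For the CDCL lower bound I would analyse any hypothetical successful run, splitting the argument into two parts. (A) An invariant argument using the reversed labelling shows that, because decisions under $\pi$-{\sf D} progress against the DAG's flow, no conflict can arise until the trail assigns at least $(1-\epsilon)n$ variables: the inductive clauses of $\tau_n$ remain inert until nearly all of their $D_n$-predecessors are set. (B) At the first such conflict, analyse the reverse-induction computation of $\mathbb{C}(S)\setminus\mathbb{C}$ described in Definition \ref{df:action}. Each step resolves against a constant-width inductive clause, introducing fresh literals from the trail; an expansion-type argument on $D_n$, together with the guard variables, shows that at every step of the chain the accumulated clause already has width $>(1-\epsilon)n$. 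Since {\sf WIDTH}-$(1-\epsilon)n$ forbids adding any such clause to $\mathbb{C}$, the solver has no allowable learning action, contradicting the existence of the run.

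The main obstacle is part (B), specifically making the width bound robust to the full nondeterminism the basic model retains: the solver may skip unit propagations, pick arbitrarily among simultaneously available ones, extract any element of $\mathbb{C}(S)\setminus\mathbb{C}$ of admissible width, and choose any prefix $t^\ast$ for backtracking. This requires engineering $D_n$ and the guard-variable substructure so that no choice of propagation order produces a short clause anywhere along the reverse-induction chain; the core technical work is a careful accounting of which literals survive each resolution step, verifying that the design admits no narrow shortcut and that the lower bound applies uniformly to every clause in $\mathbb{C}(S)\setminus\mathbb{C}$ regardless of the learning scheme the amendments leave in place.
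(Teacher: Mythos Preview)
Your plan is not a proof: the construction is never given, and the two structural claims you hang the argument on are, as stated, either false or undischarged.

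\medskip

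\noindent\textbf{Part (A) fails for pebbling formulas.} In the standard pebbling contradiction the source axioms $x_s$ and the sink axiom $\overline{x_t}$ are unit clauses. With the reversed labelling you propose, the sink variable is $\pi$-smallest, so the very first decision $x_t \stackrel d= 1$ already falsifies $\overline{x_t}$; likewise, once a source variable $x_s$ comes up for decision, $x_s \stackrel d= 0$ falsifies its axiom. Conflicts therefore arise after $O(1)$ assignments, not $(1-\epsilon)n$. Your ``guard variables'' are introduced to control Part (B), not this, and you give no mechanism by which they would eliminate these early conflicts. One can try to widen the source and sink axioms with guards, but then the constant-width refutation must be redone, and you would still have to rule out that the solver reaches a conflict via unit propagations on the now-wider axioms after only a few decisions. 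None of this is addressed.

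\medskip

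\noindent\textbf{Part (B) is asserted, not argued.} Even granting a long trail at the first conflict, you must show that \emph{every} clause in $\mathbb C(S)\setminus\mathbb C$ is wide, for \emph{every} state $S$ reachable without learning (not just ``the first such conflict''): the solver may continue extending the trail past a conflict, and the sets $\mathbb C_k(S)$ change with $S$. The ``expansion-type argument on $D_n$'' that is supposed to deliver this is precisely the content of the theorem, and you do not sketch it. With constant in-degree, each resolution against an inductive clause introduces $O(1)$ new literals and removes one; controlling width along an arbitrary resolution chain of this kind is delicate and does not follow from any standard pebbling expansion property.

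\medskip

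\noindent\textbf{What the paper does instead.} The paper avoids all of this by taking $\tau_n = \mathrm{Ind}_m[\oplus_r]$, the $r$-ary XOR substitution of the induction principle, with $r > 2/\epsilon$ a constant, and an order $\pi$ that lists the first $r-2$ copies of every original variable before the last two. The key structural notion is that $\pi$ is \emph{$(r-2)m$-robust}: for any restriction $\rho$ whose domain is contained in the first $(r-2)m$ variables plus at most one more, $\tau_n|_\rho$ is minimally unsatisfiable and every assigned variable still occurs in some surviving clause. From robustness one shows inductively that every trail appearing before the first almost-$(r-2)m$-small clause in the (simulated) $\pi$-$P_0$ proof consists purely of decisions in $\pi$-order; hence all resolutions so far are on the last $2m$ variables, and minimal unsatisfiability forces that first small clause (and, via a short extra step, the learned clause it sits inside) to contain all of the first $(r-2)m$ variables. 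No expansion argument or DAG engineering is needed: the XOR substitution creates a buffer of variables that are individually useless, which is exactly the property your guard variables are groping toward but never made precise.
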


\medskip
Finally, let us mention that while $\text{{\sf CDCL}}(\mathcal A_1,\ldots, \mathcal A_r)$ is a (possibly incomplete) proof system
in the Cook-Reckhow sense, it need not necessarily be a Hilbert-style proof system,
operating with ``natural" lines and inference rules. Assume, however, that the set $\text{AllowedActions}(S)$
additionally satisfies the following two properties:
\begin{enumerate}
\item whenever $\text{AllowedActions}(S)\cap L(S)\neq\emptyset$, it contains an action leading to a state of the form
$(C, \Lambda)$ (i.e, restarts are allowed);

\item (monotonicity) If $S=(\mathbb C,t)$, $S'= (\mathbb C', t)$ and $\mathbb C\subseteq \mathbb C'$ then
$\text{AllowedActions}(S)\cap (D(S)\stackrel . \cup U(S))\subseteq \text{AllowedActions}(S')\cap (D(S') \stackrel{.}{\cup} U(S'))$.
\end{enumerate}

Then every trail $t$ that appears in a run can always be {\em recreated}, at a low cost, when it is needed again.
Thus, under these restrictions we get a ``normal" proof system with nice properties. We formulate it explicitly
for the case $\pi$-{\sf D} we are mostly interested in.

\begin{definition}\label{df:p0}
Fix an order $\pi$ on the variables. $\pi$-$P_0$ is the following (two-typed) proof system. Its
lines are either clauses or trails, and it has the following rules of inference:
$$
\unaryinf t{[t,x_i\stackrel d=a]},\ \ \ \text{(Decision rule)}
$$
where $x_i$ is the $\pi$-smallest index such that $x_i$ does not appear in $t$ and $a\in\{0,1\}$ is arbitrary;

$$
\binaryinf{t}{C}{[t,x_i\ueq a]},\ \ \ \text{(Unit propagation rule)}
$$
where $C|_t = x_i^a$;

$$
\ternaryinf{C\lor x_i^a}{D\lor x_i^{1-a}}{t}{C\lor D},\ \ \ \text{(Learning rule)}
$$
where $(C\lor D)|_t = 0$, $(x_i\stackrel *=a)\in t$ and all other variables of $C$ appear before
$x_i$ in $t$.
\end{definition}

It is straightforward to see that without the unit propagation rule, this is just the $\pi$-half-ordered resolution.

Then, the main technical part in proving Theorem \ref{thm:main} is the following

\begin{theorem} \label{thm:p0}
For every fixed order $\pi$ on the variables, $\pi$-$P_0$ is polynomially equivalent to general resolution.
\end{theorem}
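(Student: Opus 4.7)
The plan is to establish both directions of polynomial equivalence. The easy direction — that general resolution $p$-simulates $\pi$-$P_0$ — follows almost immediately: in a $\pi$-$P_0$ refutation, the subsequence of clauses (ignoring all trails and all applications of the Decision and Unit-propagation rules) is already a valid resolution derivation, since every Learning rule application is, at the clause level, just a resolution step between its two clause premises. The trail serves only as a side-condition enabler and carries no clausal content.

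For the hard direction, that $\pi$-$P_0$ $p$-simulates general resolution, I would aim to transform a given resolution refutation $\Pi$ of a CNF $\tau$ into a $\pi$-$P_0$ refutation of polynomial size. The high-level goal is to derive (a subset of) the clauses of $\Pi$ in $\pi$-$P_0$, where each resolution step $C\vee D=\res(C\vee x_i^a,\,D\vee x_i^{1-a})$ of $\Pi$ is realized by an application of the Learning rule together with an appropriate trail $t$ satisfying $(C\vee D)|_t=0$, $(x_i\stackrel*{=}a)\in t$, and the half-ordering side condition that every variable of $C$ appears before $x_i$ in $t$. When a step happens to be half-ordered in $\pi$, the trail can be built using only the Decision rule, by deciding variables in $\pi$-order up through $x_i$, assigning them to falsify the literals of $C\vee D$ when possible and arbitrarily otherwise, then continuing past $x_i$ to set the remaining variables of $D$.

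The real difficulty lies in non-half-ordered resolution steps, where some variable $z$ of $C$ is $\pi$-larger than $x_i$: then $z$ must appear in $t$ before $x_i$, which forbids placing $z$ by Decision and forces Unit propagation from a clause whose restriction to the current trail is a unit asserting $z$. Such clauses must already be present in the ongoing $\pi$-$P_0$ refutation, and this is exactly where the two structural operations on resolution proofs suggested by the paper come into play. ``Lifting'' converts a resolution refutation of a restricted CNF on fewer variables into a derivation, from $\tau$, of a short auxiliary clause over the original variables that can serve as the missing unit-propagation trigger, while ``variable deletion'' lets us restrict $\Pi$ to smaller substructures on which the induction hypothesis applies. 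Combining these on carefully chosen subgraphs of $G(\Pi)$ gives the ``surgery-like'' construction: simulate small, locally well-structured portions of $\Pi$ via induction, obtaining short auxiliary derivations that feed the necessary Unit propagations, and stitch the results together into a global $\pi$-$P_0$ refutation.

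The main obstacle I expect is controlling the size blow-up during stitching. A purely recursive ``decide $x_1$, combine the two sub-refutations via the learned units $x_1$ and $\bar x_1$'' strategy is tree-like in spirit and would give at best an exponential simulation, matching tree-like resolution rather than general resolution. Because $P_0$ is explicitly \emph{not} closed under restrictions or weakening, the standard normal-form arguments (of the kind underlying the proof of Theorem \ref{thm:p1}) are unavailable, so the induction must proceed in a way that preserves the DAG structure of $\Pi$, sharing derived clauses across many simulated resolution steps rather than re-deriving them. Making lifting and variable deletion commute with this sharing, while keeping the half-ordering side condition of the Learning rule satisfied at every application, is the delicate technical point on which the whole argument hinges.
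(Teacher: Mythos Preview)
Your easy direction is correct, and you have correctly identified the relevant ingredients for the hard direction: the lifting and variable-deletion operators, and the danger that a naive branch-on-$x_1$ recursion degenerates to tree-like resolution. But your plan stops exactly at the point where the real idea is needed, and your stated target---``derive the clauses of $\Pi$, realizing each resolution step by a suitable trail''---is not what the paper does and, as you yourself worry, does not obviously lead to a polynomial bound.

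The paper's key move is to change the inductive goal: instead of deriving the clauses of $\Pi$, the simulation derives \emph{all literals of $\var(\Pi)$ as unit clauses}. Once every $x_i^0$ and $x_i^1$ is available, any trail whatsoever can be built by unit propagation, so any resolution step can be simulated directly in $O(n)$ lines; this is what ultimately lets the DAG structure of $\Pi$ be reused without duplication. The recursion is then: take $\Pi^0 \df \Pi|_{\{x_1=0\}}$, recurse to get all literals of $\var(\Pi^0)$ (lifted, so possibly polluted by $x_1$), derive the unit $x_1$, and then \emph{iterate} variable deletion, setting $\Pi^i \df \del_{\mathcal S^i \cup \{x_1\}}(\Pi)$ where $\mathcal S^i$ is the set of variables already handled, recursing on each $\Pi^i$ until all variables are covered. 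The size bound comes from the fact (Lemma~\ref{lm:size}) that the $\var(\Pi^i)$ are pairwise disjoint, so $\sum_i |\Pi^i| \le |\Pi|$; this is precisely the mechanism that avoids the exponential blow-up you flag. Finally one simulates $\Pi|_{\{x_1=1\}}$ directly (now possible, since all needed literals are present) to obtain $\overline{x_1}$ and strip the pollution.

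Your description of lifting---``produces a short auxiliary clause that can serve as the missing unit-propagation trigger''---is not quite what it does; lifting is a syntactic transformation on entire $\pi$-$P_0$ proofs that prepends $x_1\stackrel d=0$ to every trail and replaces axioms of the restricted CNF by their preimages in $\tau$, at the cost of possibly adding $x_1$ to every derived clause. Without the unit-clause target and the iterated-deletion size analysis, your sketch does not yet contain a proof.
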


\section{$\cdcl(\pid, \decisionl)=_p\pi$-Ordered} \label{sec:last-l}
    In this section we prove Theorem \ref{thm:last-l}.
The proof breaks into two parts (Theorem \ref{thm:p1}, Theorem \ref{thm:cdcl=p1}), with half-ordered resolution as the intermediate.

\subsection{ Half-Ordered $=_p$ Ordered} \label{sec:proofcomplexity}

Half-ordered resolution trivially $p$-simulates ordered resolution, so the core of Theorem \ref{thm:p1} is the other direction.
In this section we will depend heavily on resolution graphs (Definition \ref{df:graph}) and related definitions from Section \ref{sec:prel}.

\begin{definition}
A resolution refutation $\Pi$ is \textit{ordered up to $k$} (with respect to
an order $\pi$) if it satisfies the property that $\forall v\in V(\Pi)$, if
$c_\Pi(v)$ is derived by resolving two clauses on some variables $x_i \in
\var_\pi^k$, then all resolution steps above $v$ are on variables in
$\var_\pi^{k - 1}$. We note that $\pi$-ordered resolution proofs are
precisely those that are ordered up to $n-1$.
\end{definition}

We now prove the main part of Theorem \ref{thm:p1}, namely that $\pi$-ordered resolution $p$-simulates $\pi$-half-ordered resolution.

\begin{proof}{\it{(of theorem \ref{thm:p1})}}\
Let $\Pi$ be a $\pi$-half-ordered resolution refutation of $\tau$.
Without loss of generality, assume that $\pi=\text{id}$ (otherwise rename variables).

We will construct by induction on $k$ ($0\leq k\leq n-1$) a half-ordered
resolution refutation $\Pi_k$ of $\tau$, which is ordered up to $k$. For the
base case, let $\Pi_0 = \Pi$. Suppose $\Pi_k$ has been constructed; without
loss of generality we can assume that $\Pi_k$ is connected (otherwise take
the subrefutation below any occurrence of 0).

Consider the set of nodes whose clauses are $k$-small. Note that this set is
parent-complete. We claim that it is also upward-closed (and hence
path-complete). Indeed, let $u$ be a parent of $v$, and assume that $c(u)=c_{\Pi_k}(u)$ is
$k$-small. Then (since we disallowed weakenings), $c(v)$ is obtained by
resolving on a variable $x_i \in \var_\pi^k$. But since $\Pi_k$ is ordered up
to $k$, $\var(c(v)) \subseteq \var_\pi^{i - 1} \subseteq \var_\pi^k$, as
otherwise some variable in $c(v)$ would have remained unresolved on a path
connecting $v$ to the sink (here we have used connectedness). Hence $c(v)$ is
also $k$-small.

So by Proposition \ref{prop:p1}(\ref{a}), it
defines the subrefutation of the independent set
\begin{equation}\label{eq:$L_i$}
L_k \df {\min}_{\Pi_k} \{v |\ \text{$c(v)$ is $k$-small}\}.
\end{equation}
Since any node in $\Pi_k$ can be connected to the sink (which is $k$-small)
by a path, and the set of $k$-small nodes is parent-complete, by Proposition
\ref{prop:p1}, $L_k$ splits $\Pi_k$ into two parts:
$\Pi_k=\ucl_{\Pi_k}(L_k)\cup \dcl_{\Pi_k}(L_k)$, $L_k=\ucl_{\Pi_k}(L_k)\cap
\dcl_{\Pi_k}(L_k)$, where $\dcl_{\Pi_k}(L_k)$ is (the graph of) a subproof of
$L_k$ and $\ucl_{\Pi_k}(L_k)$ is (the graph of) a subrefutation of $L_k$.
Denote these two subproofs by $D$ and $U$, respectively: $D$ is comprised of
all nodes in $\Pi$ that either are marked by a clause that is not $k$-small
or belong to $L_k$, and $U$ is comprised of all nodes marked by an $k$-small
clause. In particular, all axioms are in $D$, all resolutions in $U$ are on
the variables in $\var_\pi^k$ and, since $\Pi_k$ is ordered up to $k$, all
resolutions in $D$ are on the variables not in $\var_\pi^k$. Let
\begin{equation}\label{eq: $M$}
M \df {\min}_{D} \{w |\ \text{$c(w)$ is the result of resolving two clauses on $x_{k+1}$}\}.
\end{equation}

If $M$ is empty, $\Pi_{k+1}\df \Pi_k$.

Otherwise, suppose $M=\{w_1,\dots,w_s\}$ and let

\begin{equation}\label{eq: $A_j$}
A_i \df \ucl_{D}(\{w_i\}).
\end{equation}
We will eliminate all resolutions on $x_{k + 1}$ in $D$ by the following
process, during which {\em the set of nodes stays the same} while edges and
clause-labeling function will possibly change. More precisely, we update $D$
in $s$ rounds, defining $\pi$-half-ordered resolution $+$ weakening proofs
$D_1, D_2, \dots, D_s$. Initially $D_0 = D$, $i=1$. Let $c_{i-1}$ denote the
clause-labeling $c_{D_{i-1}}$. To define the transition $D_{i-1}\to D_i$, we
need the following structural properties of $D_{i-1}$ (that will also be
proved by induction simultaneously with the definition).

\begin{claim}\label{cl:p1,1}
\hfill

\begin{enumerate}[label=\alph*., ref=\alph*]
\item\label{c1:1} Compared to $D$, $D_{i-1}$ has no new relation of `one node being above another' (there might be fewer);
\item\label{c1:2} $\forall v\in V(D)$, $c_{i-1}(v)$ is either $c_{D}(v)$ or $c_{D}(v)\vee x_{k+1}$ or $c_{D}(v) \vee \overline{x_{k + 1}}$;
\item\label{c1:3} If $v\notin \mathop\cup\limits_{j=1}^{i-1} A_j$ then $c_{i-1}(v) = c_D(v)$ and, moreover, this clause is obtained in $D_{i-1}$ with the same resolution rule as in $D$;
\item\label{c1:4} $D_{i-1}$ is a $\pi$-half-ordered resolution $+$
    weakening proof.
\end{enumerate}
\end{claim}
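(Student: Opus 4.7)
\medskip
\noindent
\emph{Proof plan for Claim \ref{cl:p1,1}.} The plan is to prove the four properties by induction on $i$, where the transition $D_{i-1}\to D_i$ is defined simultaneously. The base case $i=1$ is trivial: we put $D_0\df D$, so every clause and every derivation is exactly as in $D$, no node belongs to $\bigcup_{j<1}A_j=\emptyset$, and $D$ itself is a $\pi$-half-ordered resolution proof (no weakenings yet). The key structural input we will rely on at each step is that, because the nodes of $M$ are pairwise incomparable, we have $w_i\notin A_j$ for $j\neq i$. By the inductive hypothesis \ref{c1:3} this gives $c_{i-1}(w_i)=c_D(w_i)$ and, moreover, $w_i$ is still a resolution node in $D_{i-1}$ whose parents $u_1,u_2$ carry clauses $E\lor x_{k+1}$ and $F\lor\overline{x_{k+1}}$ respectively (at least in $D$; by \ref{c1:2} they may have one more literal in $D_{i-1}$).

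To define $D_i$, I would keep every node outside of $A_i$ together with its incoming edges and clause label untouched, which takes care of \ref{c1:3} automatically. Inside $A_i$ I would proceed in topological order from $w_i$ upward. At $w_i$ itself I replace the resolution step by a weakening step from $u_1$, producing $c_i(w_i)\df c_D(w_i)\lor x_{k+1}$; this introduces exactly one weakening edge and removes the edge from $u_2$ to $w_i$, so the partial order weakens rather than strengthens, giving \ref{c1:1}. For a general $v\in A_i$ above $w_i$ with parents $p_1,p_2$, I reuse the same pair of parents but reread their (possibly updated) labels: if $v$ in $D_{i-1}$ resolves on some $x_j$ with $j>k+1$, the extra $x_{k+1}$ carried by (possibly) one parent simply propagates to $c_i(v)=c_D(v)\lor x_{k+1}$; if $v$ resolves on $x_{k+1}$ itself, the propagation terminates as the extra literal is now the pivot, so $c_i(v)=c_D(v)$. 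In either case the resulting clause matches one of the three forms in \ref{c1:2}, which gives that property.

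The main obstacle is \ref{c1:4}, and it comes with two subtleties. First, we must verify that each new resolution step respects the $\pi$-half-ordered rule \eqref{eq:half-ordered}. For a step in $D_{i-1}$ on a pivot $x_j$ with $j>k+1$, the half-ordered parent is some $j$-small clause; adjoining $x_{k+1}$ to a $j$-small clause keeps it $j$-small, so the condition is preserved. For a step on pivot $x_{k+1}$, the half-ordered parent is already $(k+1)$-small and I only adjoin $x_{k+1}$ to it, which either is absorbed (if that parent carries $x_{k+1}$ as the pivot literal) or is consistent with the direction of our propagation. The second subtlety, which is where the detailed bookkeeping is needed, is avoiding tautological clauses: the literal added as we climb up $A_i$ must always be $x_{k+1}$ (the side we committed to at $w_i$) and never $\overline{x_{k+1}}$, so it cannot collide with an existing $\overline{x_{k+1}}$ literal in the clause of a $\pi$-half-ordered resolution step; verifying this requires checking, using \ref{c1:3} and the fact that resolutions in $D$ are on variables $\geq x_{k+1}$, that no clause inside $A_i$ can have already carried $\overline{x_{k+1}}$ coming from below. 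Once these two points are established, \ref{c1:4} follows and the induction closes.
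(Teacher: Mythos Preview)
Your proposal has a genuine gap in the construction, at exactly the point where you try to rule out tautologies for part~\ref{c1:4}. The assertion that ``no clause inside $A_i$ can have already carried $\overline{x_{k+1}}$'' is false: a node $v\in A_i$ can have one parent $p_1\in A_i$ (on a path up from $w_i$) and another parent $p_2$ whose clause in $D$ contains $\overline{x_{k+1}}$; if the pivot at $v$ is some $x_j$ with $j>k+1$, then $c_D(v)$ (and hence $c_{i-1}(v)$) inherits $\overline{x_{k+1}}$. After you propagate $x_{k+1}$ into $c_i(p_1)$, the two parent clauses clash on $x_{k+1}$ and are no longer resolvable on $x_j$. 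The same obstruction appears when $v$ is itself a resolution on $x_{k+1}$ and the parent carrying $\overline{x_{k+1}}$ lies in $A_i$: your rule would add $x_{k+1}$ to a clause that already contains $\overline{x_{k+1}}$.

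The paper's construction avoids this, and the difference is exactly the missing idea. At $w_i$, half-orderedness guarantees that one parent clause has the form $B\lor x_{k+1}$ with $B$ $k$-small; since all resolutions in $D$ are on variables outside $\var_\pi^k$, this $B$ persists as a subclause of $c_{i-1}(v)$ for every $v\in A_i$ (this is \eqref{eq:subclause}). The paper then adds the \emph{opposite} literal $\overline{x_{k+1}}$ as one climbs $A_i$, and whenever a tautology would threaten (namely when $x_{k+1}\in c_{i-1}(v)$), it does not keep the old parents at all but replaces the incoming edges by a single weakening edge from $w'$, legitimate precisely because $B\lor x_{k+1}\subseteq c_{i-1}(v)$. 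Resolutions on $x_{k+1}$ inside $A_i$ are likewise replaced by weakenings from the $\overline{x_{k+1}}$-parent. Your policy of ``reusing the same pair of parents'' cannot replicate either move; and without these replacements the resolutions on $x_{k+1}$ inside $A_i$ also survive into $D_i$, which would later block Claim~\ref{cl:p1}\eqref{c2:1}.
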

In the base case ($i=1$), Claim \ref{cl:p1,1} holds simply because $D_0=D$.

Let us construct $D_i$. By Claim \ref{cl:p1,1}\eqref{c1:3}, the resolution step at $w_i$ (which is not in $\mathop\cup\limits_{j=1}^{i-1} A_j$) is unchanged from $D$ to $D_{i-1}$. Assume that it resolves $c_{D}(w')=B \vee x_{k + 1}$ and $c_{D}(w'')=C \vee \overline{x_{k + 1}}$. Since $\Pi_k$ is half-ordered, either $B$ or $C$ is $k$-small. Assume without loss of generality that $B$ is $k$-small.

Recall that there is no resolution in $D$ on variables in $\var_\pi^k$. Thus, for all $v\in A_i$, it follows that $B$ is a subclause of $c_D(v)$, and by Claim \ref{cl:p1,1}\eqref{c1:2}, we get the following crucial property:
\begin{equation}\label{eq:subclause}
\text{For all $v \in A_i$, $B$ is a subclause of $c_{i-1}(v)$.}
\end{equation}
Note that $A_i$ is upward closed in $D_{i-1}$ by Claim \ref{cl:p1,1}\eqref{c1:1}. Accordingly, as the first step, for any $v\not\in A_i$ we set $c_i(v)\coloneqq c_{i-1}(v)$ and we do not touch its incoming edges.

Next, we update vertices $v\in A_i$ in an arbitrary $D$-topological order maintaining the property $c_i(v)\in \{c_{i-1}(v),\ c_{i-1}(v) \lor \overline{x_{k+1}}\}$ (in particular, $c_i(v)=c_{i-1}(v)$ whenever $c_{i-1}(v)$ contains a literal of the variable $x_{k+1}$).

First we set $c_{i}(w_i) \coloneqq c_{i-1}(w_i) \vee \overline{x_{k + 1}}$ (recall that $c_{D_{i-1}}(w_i)=c_D(w_i)$ by Claim \ref{cl:p1,1}\eqref{c1:3} and hence does not contain $x_{k+1}$ by \eqref{eq: $M$}), and replace incoming edges by a weakening edge from $w''$.

For $v\in A_i\setminus \{w_i\}$, we proceed as follows.
\begin{enumerate}
    \item
        If $x_{k + 1} \in c_{i-1}(v)$, keep the clause but replace incoming edges with a weakening edge $(w', v)$. This is well-defined by \eqref{eq:subclause}, and note for the record that since $w'<_D w <_D  v$, we do not enlarge the ``above'' relation compared to $D$.

    \item
        If $c_{i-1}(v)= \res(c_{i-1}(u), c_{i-1}(w))$ on $x_{k+1}$ where $\overline{x_{k + 1}} \in c_{i - 1}(u)$, set $c_{i}(v) \coloneqq c_{i-1}(v)\lor \overline{x_{k+1}}$ (or, equivalently, $c_{i-1}(v) \vee c_i(u))$ and replace incoming edges by a weakening edge $(u, v)$.

    \item
        If $c_{i-1}(v)$ is weakened from $c_{i-1}(u)$ (and $x_{k+1}\not\in c_{i-1}(v)$), set $c_{i}(v) \coloneqq c_{i-1}(v) \vee c_{i}(u)$. In other words, we append the literal $\overline{x_{k+1}}$ to $c_i(v)$ if and only if this was previously done for $c_i(u)$.

    \item\label{last_step}
        Otherwise, $x_{k+1}\notin c_{i-1}(v)$, and $c_{i - 1}(v) = \res(c_{i - 1}(u), c_{i - 1}(w))$ on some $x_\ell$, $\ell > k + 1$. In particular,
        $x_{k+1}\notin \{c_{i - 1}(u), c_{i - 1}(w)\}$. Set $c_{i}(v)\coloneqq\res(c_{i}(u), c_{i}(w))$ that is, like in the previous item, we append $\overline{x_{k+1}}$ if and only if it was previously done for either $c_i(v)$ or $c_i(w)$. Note that since $\ell>k+1$, this step remains $\pi$-half-ordered.
\end{enumerate}

This completes our description of $D_i$, but we still have to check Claim \ref{cl:p1,1} for it.
For \eqref{c1:1}, note that the only new edges were added in item 1, and see the remark made there. The cases $\eqref{c1:2}$ and $\eqref{c1:3}$ are straightforward. For $\eqref{c1:4}$, the only new resolution rules were introduced in item \ref{last_step}; again, see the remark made there.

The next claim summarizes the necessary properties of the end result, $D_s$.
\begin{claim}\label{cl:p1}
\hfill

\begin{enumerate}[label=\alph*., ref=\alph*]
  \item\label{c2:1} $D_s$ is a $\pi$-half-ordered resolution + weakening proof
      without resolutions on $x_{k+1}$.
  \item\label{c2:2} If $c_s(v)\neq c_D(v)$ for some $v\in D$, then $\exists w\in\dcl_D(M)\setminus \{M\}$ such that $c_D(v) = \res(c_s(w), c_s(v))$ on $x_{k + 1}$, and this resolution is half-ordered. In fact, $w$ is a parent of some $w_i$ in $D$, $i\in[s]$.
\end{enumerate}
\end{claim}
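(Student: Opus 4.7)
I would obtain both parts of Claim \ref{cl:p1} by specializing Claim \ref{cl:p1,1} at $i=s$ and inspecting the effect of the final round of the construction.

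For part \eqref{c2:1}, $\pi$-half-orderedness is immediate from Claim \ref{cl:p1,1}\eqref{c1:4}. To show no resolution in $D_s$ is on $x_{k+1}$, I split on whether $v \in \bigcup_{j=1}^{s} A_j$. Outside $\bigcup_j A_j$, Claim \ref{cl:p1,1}\eqref{c1:3} says that the inference at $v$ in $D_s$ is the same as in $D$; and it cannot be a resolution on $x_{k+1}$, for then $v$ would be a descendant in $D$ of some minimal such node $w_i \in M$, putting $v \in A_i$, a contradiction. Inside $A_j$, the four cases of the construction at round $j$ replace the inference at $v$ either with a weakening (cases 1--3) or with a resolution on some $x_\ell$ with $\ell > k+1$ (case 4), so $x_{k+1}$ is never the resolved variable in $D_s$.

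For part \eqref{c2:2}, suppose $c_s(v) \neq c_D(v)$; by Claim \ref{cl:p1,1}\eqref{c1:2}, $c_s(v)$ equals $c_D(v)$ augmented by a single literal of the variable $x_{k+1}$. The plan is to trace the origin of this extra literal. By construction, new $x_{k+1}$-literals enter only at the step $w_i$ of some round, via the weakening edge from $w''_i$ that sets $c_i(w_i) := c_{i-1}(w_i) \vee \overline{x_{k+1}}$ (after the WLOG choice that the $k$-small antecedent is $B$); they then propagate upward through cases 2--4. Following this chain backwards from $v$, I would identify the responsible $w_i$ and set $w := w'_i$, the $k$-small-side parent of $w_i$ in $D$. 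Minimality of $w_i$ in $M$ gives $w \in \dcl_D(M)\setminus M$; property \eqref{eq:subclause} forces $c_s(w)$ to consist of $x_{k+1}$ together with literals from $B$, so its non-$x_{k+1}$ part lies in $\var_\pi^k$. Hence resolving $c_s(w)$ with $c_s(v)$ on $x_{k+1}$ returns $c_D(v)$, and this resolution step is half-ordered.

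The main obstacle I anticipate is the bookkeeping for the $x_{k+1}$-literal propagation across rounds, particularly when $v$ lies in several $A_j$'s at once: one must verify that the $w_i$ identified by the trace back is exactly the one governing the discrepancy at $v$ in the final $D_s$, and that $c_s(w'_i)$ is not subsequently corrupted in a way that breaks the intended resolution. Once those invariants are pinned down, the recovery of $c_D(v)$ by a single half-ordered resolution step on $x_{k+1}$ follows directly from the case analysis of the construction together with \eqref{eq:subclause}.
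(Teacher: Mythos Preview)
Your plan is correct and follows essentially the same route as the paper. Two small points are worth tightening.

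For part \eqref{c2:2}, the obstacle you flag about $v$ lying in several $A_j$'s dissolves once you notice the invariant the construction maintains: whenever $c_{i-1}(v)$ already contains a literal of $x_{k+1}$, the round leaves $c(v)$ untouched. Hence if $i$ is the \emph{first} round with $c_i(v)\neq c_{i-1}(v)$, then $c_{i-1}(v)=c_D(v)$ and $c_j(v)=c_i(v)$ for all $j\ge i$; no trace-back through a chain is needed. The paper's argument is exactly this one-line observation.

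Second, your invocation of \eqref{eq:subclause} for $w=w'_i$ is misdirected: \eqref{eq:subclause} concerns nodes of $A_i$, and $w'_i\notin A_i$ since $w'_i<w_i$. The right reason that $c_s(w'_i)=c_D(w'_i)=B\vee x_{k+1}$ is simply that $c_D(w'_i)$ already contains the literal $x_{k+1}$, so by the invariant above it is never altered. You then use \eqref{eq:subclause} for $v\in A_i$ (not for $w$) to get $B\subseteq c_{i-1}(v)=c_D(v)$, whence $\res(c_s(w'_i),c_s(v))=B\vee c_D(v)=c_D(v)$, and half-orderedness follows since $B$ is $k$-small. With these two adjustments your proof coincides with the paper's.
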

\begin{proof}
\hfill

\noindent\ref{c2:1}. No new resolution on the variable $x_{k+1}$ has been introduced, while all old ones are in $A_1\cup\dots\cup A_s$ and thus have been eliminated. The conclusion follows from this observation together with Claim \ref{cl:p1,1}\eqref{c1:4}.
\medskip

\noindent\ref{c2:2}. Suppose $c(v)$ was changed in $D_{i-1}\to D_i$ (and hence stayed unchanged afterwords) so that in particular $v\in A_i$. Set $w\coloneqq w'$, where $w'$ is the parent of $w_i$ from above. Note that $c_s(w)=c_D(w)$ since the latter contains the literal (say) $x_{k+1}$. Then we readily have $c_D(v)=c_{D_{i-1}}(v)= \res(B\lor x_{k+1}, c_D(v)\lor \overline{x_{k+1}})$ by \eqref{eq:subclause}, and it is half-ordered since $B$ is $k$-small.
\end{proof}

Now to get $\Pi_{k+1}$, we try to reconnect $D_s$ with $U$ along $L_k$ and then clear out weakenings. The problem with this approach is the added appearances of $x^a_{k + 1}$ in $c_s(v)$ for $v \in L_k$, as in Claim \ref{cl:p1,1}\eqref{c1:2}. We introduce new nodes to deal with them. Namely, for $v\in L_i$, if $c_s(v)\neq c_{D}(v)$, apply Claim \ref{cl:p1}\eqref{c2:2} to create a new node $\tilde{v}$ to add to $D_s$ with the clause $\res(c_s(w), c_s(v))$. Denote by $\widetilde\Pi_{k + 1}$ the result of connecting $D_s$ and $U$ along $L_k$ and this new set of vertices. Since neither $D_s$ nor $U$ contain resolutions on $x_{k+1}$, $\widetilde \Pi_{k+1}$ is a half-ordered refutation (with weakenings) that is {\em ordered up to $k+1$}. Let $\Pi_{k+1}$ be obtained by contracting all weakening rules. It will still be half-ordered and ordered up to $k+1$, and it only remains to analyze its size (note that \textit{a priori} it can be doubled at every step, which is unacceptable).

Since
\begin{equation} \label{eq:increment}
|\Pi_{k+1}|\leq |\Pi_k| + |L_k|,
\end{equation}
we only have to control $|L_k|$. For that we will keep track of the invariant $|\dcl_{\Pi_k}(L_k)|$; more precisely, we claim that

\begin{equation} \label{eq:induction}
|\dcl_{\Pi_{k+1}}(L_{k+1})| \leq |\dcl_{\Pi_k}(L_k)|.
\end{equation}
\noindent Let us prove this by constructing an injection from $\dcl_{\Pi_{k+1}}(L_{k+1})$ to $\dcl_{\Pi_k}(L_k)$; we will utilize the previous notation.

First note that the resolution + weakening refutation $\widetilde\Pi_{k+1}$
and its weakening-free contraction $\Pi_{k+1}$ can be related as follows. For
every node $v\in V(\Pi_{k+1})$ there exists a node $v^* \in
V(\widetilde\Pi_{k+1})$ with $c_{\widetilde\Pi_{k+1}}(v^*) \supseteq
c_{\Pi_{k+1}}(v)$ which is {\em minimal} among those contracting to $v$. If
$v$ is an axiom node of $\Pi_{k+1}$ then so is $v^*$ in $\tilde\Pi_{k+1}$.
Otherwise, if $u$ and $w$ are the two parents of $v$, and $u',w'$ are the
corresponding parents of $v^*$ ($v^*$ may not be obtained by weakening due to the minimality
assumption), then $c_{\widetilde\Pi_{k+1}}(u^*)$ is a subclause of
$c_{\widetilde\Pi_{k+1}}(u')$ and $c_{\widetilde\Pi_{k+1}}(w^*)$ is a
subclause of $c_{\widetilde\Pi_{k+1}}(w')$. We claim that $(v\mapsto
v^*)\mid_{\dcl_{\Pi_{k+1}}(L_{k+1})}$ (which is injective by definition) is
the desired injection. We have to check that its image is contained in $\dcl_{\Pi_k}(L_k)$.

Fix $v\in\dcl_{\Pi_{k+1}}(L_{k+1})$. Then {\em either $v$ is an axiom or both its parents are not $(k+1)$-small} (by \eqref{eq:$L_i$}). By the above mentioned facts about the contraction $\widetilde\Pi_{k+1}\to \Pi_{k+1}$, this property is inherited by $v^*$. In particular, $v^* \not\in\{\tilde w\mid w\in L_k\}$ as all nodes in this set have at least one $(k+1)$-small parent due to half-orderedness. Finally, since the corresponding clauses in $D$ and $D_s$ differ only in the variable $x_{k+1}$, $v^*$ cannot be in $U$, for the same reason (recall that all axioms are in $D$). Hence $v^* \in
V(D_s)=V(D)=\dcl_{\Pi_{k}}(L_{k})$.

Having thus proved \eqref{eq:induction}, we conclude by the obvious induction that $|L_k|\leq |\dcl_{\Pi_{k}}(L_{k})|\leq |\dcl_{\Pi_{0}}(L_{0})|\leq |\Pi|$. Then \eqref{eq:increment} implies $|\Pi_{n-1}|\leq n|\Pi|$, as desired.
\end{proof}

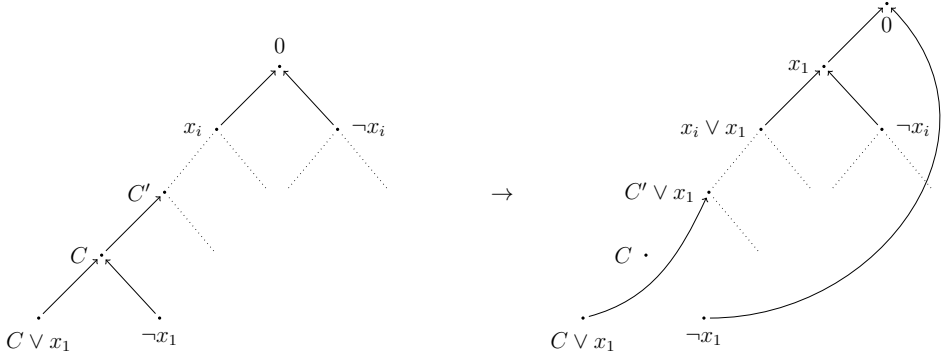
\begin{figure}
\centering
\resizebox{\textwidth}{!}{
\begin{tikzpicture}
    [point/.style={circle, line width=0.8mm,color=white,draw,inner sep=0.5mm,fill=black}]

    \node[point] (Cx1) [label=below:$C \vee x_1$] {};
    \node[point] (notx1) [label=below:$\neg x_1$, right=2 of Cx1] {};
    \node[point] (C) [label=left:$C$, above right=1 and 1 of Cx1] {};
    \node[point] (C') [label=left:$C'$, above right=1 and 1 of C] {};
    \node[point] (xi) [label=left:$x_i$, above right=1 and 0.8 of C'] {};
    \node[point] (notxi) [label=right:$\neg x_i$, right=2 of xi] {};
    \node[point] (0) [label=above:$0$, above right=1 and 1 of xi] {};

    \node (rpofC') [below right=1 and 0.8 of C'] {};
    \node (rpofxi) [below right=1 and 0.8 of xi] {};
    \node (lpofnotxi) [below left=1 and 0.8 of notxi] {};
    \node (rpofnotxi) [below right=1 and 0.8 of notxi] {};

    \path[->] (Cx1) edge (C);
    \path[->] (notx1) edge (C);
    \path[->] (C) edge (C');
    \path[->] (xi) edge (0);
    \path[->] (notxi) edge (0);
   
    \path[dotted] (rpofC') edge (C'); 
    \path[dotted] (C') edge (xi);
    \path[dotted] (rpofxi) edge (xi);
    \path[dotted] (lpofnotxi) edge (notxi);
    \path[dotted] (rpofnotxi) edge (notxi);

    \node (transform) [right=1.5 of rpofnotxi] {$\rightarrow$};

    \begin{scope}[shift={(10, 0)}]
        \node[point] (Cx1) [label=below:$C \vee x_1$] {};
        \node[point] (notx1) [label=below:$\neg x_1$, right=2 of Cx1] {};
        \node[point] (C) [label=left:$C$, above right=1 and 1 of Cx1] {};
        \node[point] (C') [label=left:$C' \vee x_1$, above right=1 and 1 of C] {};
        \node[point] (xi) [label=left:$x_i \vee x_1$, above right=1 and 0.8 of C'] {};
        \node[point] (notxi) [label=right:$\neg x_i$, right=2 of xi] {};
        \node[point] (0) [label=left:$x_1$, above right=1 and 1 of xi] {};
        \node[point] (new0) [label=below:$0$, above right=1 and 1 of 0] {};

        \node (rpofC') [below right=1 and 0.8 of C'] {};
        \node (rpofxi) [below right=1 and 0.8 of xi] {};
        \node (lpofnotxi) [below left=1 and 0.8 of notxi] {};
        \node (rpofnotxi) [below right=1 and 0.8 of notxi] {};

        \path[->] (Cx1) edge[out=15,in=245] (C');
        \path[->] (xi) edge (0);
        \path[->] (notxi) edge (0);
   
        \path[dotted] (rpofC') edge (C'); 
        \path[dotted] (C') edge (xi);
        \path[dotted] (rpofxi) edge (xi);
        \path[dotted] (lpofnotxi) edge (notxi);
        \path[dotted] (rpofnotxi) edge (notxi);

        \path[->] (notx1) edge[out=0,in=-45, distance=3.2cm] (new0);
        \path[->] (0) edge (new0);
    \end{scope}
\end{tikzpicture}
}
\caption{A toy example of the transformation from $\Pi_0$ to $\Pi_1$}
\end{figure}

\subsection{Half-Ordered $=_p \cdcl(\pid, \decisionl)$ } \label{sec:equiv1}

In this section, we prove the following theorem.

\begin{theorem}\label{thm:cdcl=p1}
The systems $\cdcl(\pid, \decisionl)$ and $\pi$-half-ordered resolution are
$p$-equivalent.
\end{theorem}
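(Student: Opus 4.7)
The plan is to prove the two directions of polynomial equivalence separately.

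For $\pi$-half-ordered $\leq_p \cdcl(\pid, \decisionl)$, I first invoke Theorem~\ref{thm:p1} to replace the given half-ordered proof with a $\pi$-ordered one of polynomial size, then simulate each of its resolution steps in CDCL. For the step deriving $C = \res(A,B)$ on pivot $x_i$, with $A = A'\vee x_i^a$ and $B = B'\vee x_i^{1-a}$ both $i$-small, I restart to the empty trail, decide $x_1, \ldots, x_{i-1}$ in $\pi$-order (falsifying the literals of $A' \cup B'$ and assigning the remaining variables arbitrarily), propagate $x_i = a$ via the unit clause $A|_t$, and take $B$ as the resulting conflict clause. Because $\decisionl$ skips decisions during conflict analysis, the learning resolution produces exactly $\res(B,A) = A' \vee B' = C$; the arbitrary values assigned to variables outside $C$ do not enter the learned clause, as they appear nowhere in $A$ or $B$. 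Iterating this simulation over the clauses of the ordered proof in topological order yields a CDCL run of polynomial size.

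For $\cdcl(\pid, \decisionl) \leq_p \pi$-half-ordered, I convert each $\decisionl$ learning step into a $\pi$-half-ordered subproof via a two-stage ``cleaning'' procedure. Let $D_0$ be the conflict clause and $C^{(1)}, \ldots, C^{(r)}$ the unit-propagation clauses along the trail, with $C^{(k)}$ propagating $x_{m_k}$. The naive $\decisionl$ resolution in reverse trail order can fail half-orderedness because $C^{(k)}$ may contain a unit-propagation ``intruder'' $u$ with $\pi(u) > \pi(m_k)$; this is possible since $\pid$ does not force unit propagation to follow $\pi$-order. In the first stage, I process the pivots $x_{m_k}$ in $\pi$-decreasing order and iteratively build cleaned clauses $\hat{C}^{(k)}$ by resolving $C^{(k)}$ against each already-cleaned $\hat{C}^{(u)}$ for every current intruder $u$; each such resolution is half-ordered because $\hat{C}^{(u)}$ will be $\pi$-small with respect to $u$. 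In the second stage, starting from $D_0$, I resolve against the cleaned clauses $\hat{C}^{(k)}$ in $\pi$-decreasing order of $x_{m_k}$; each of these resolutions is half-ordered thanks to the cleaned antecedent, and the final result coincides with the clause learned by $\decisionl$.

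The main task will be to establish, by induction on the $\pi$-rank of $x_{m_k}$, the invariant that $\hat{C}^{(k)} \setminus \{x_{m_k}^{a_{m_k}}\}$ contains only variables of $\pi$-rank strictly less than $\pi(m_k)$. Any decision entering $\hat{C}^{(k)}$ descends from some $C^{(k')}$ with $k' \leq k$, hence lies in the trail prefix $t[\leq k-1]$; the ordered decision strategy $\pid$ then forces each decision in this prefix to have $\pi$-rank below $\pi(m_k)$. Unit-propagation intruders are handled by the cleaning itself: each round replaces an intruder $u$ by variables of $\pi$-rank strictly less than $\pi(u)$, so the maximum $\pi$-rank of intruders strictly decreases from round to round, bounding the cleaning's length polynomially in $n$ and $r$. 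Summing over all learning steps of the CDCL run yields the desired polynomial simulation.
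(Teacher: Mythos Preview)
Your proposal is correct and rests on the same core insight as the paper (that under $\pid$, any decision preceding a unit propagation on $x_{m_k}$ in the trail must have $\pi$-rank below $\pi(m_k)$), but the execution differs in both directions.

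For the forward simulation, the paper (Proposition~\ref{prop:cdcl=p1,1}) simulates a half-ordered step $\res(C\vee x_i,\ D\vee\overline{x_i})$ directly: decide up to the $\pi$-largest variable of the small side $C$, unit-propagate $x_i$ via $C\vee x_i$, then decide the remaining variables so that $D\vee\overline{x_i}$ becomes the conflict. Your route instead invokes Theorem~\ref{thm:p1} to pass to an ordered proof and then simulates fully ordered steps. This is logically fine since Theorem~\ref{thm:p1} is established independently, but it costs an extra factor of~$n$ and makes Theorem~\ref{thm:cdcl=p1} depend on Theorem~\ref{thm:p1}, whereas the paper keeps the two results self-contained.

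For the backward simulation, both proofs ``clean'' each unit-propagation clause so that its pivot becomes $\pi$-maximal, but the paper (Proposition~\ref{prop:cdcl=p1,main}) processes the clauses in \emph{trail} order, setting $C'_\nu \df C_\nu \circ^{y_{i_{\nu-1}}} C'_{\nu-1}\cdots\circ^{y_{i_1}} C'_1$ and proving the single invariant $C'_\nu|_{t[\leq i_\nu-1]}=y_{i_\nu}^{a_{i_\nu}}$ (Lemma~\ref{lm:rest}), from which $\pi$-maximality of $y_{i_\nu}$ follows in one line. Your scheme processes in $\pi$-decreasing order and resolves only against actual intruders; this is slightly leaner but needs the extra observation (implicit in your write-up) that every clause used when cleaning $C^{(k)}$ comes from a strictly earlier trail position, so that its variables lie in $t[\leq p_k-1]$ and the restriction invariant is preserved. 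Your claim that the stage-two output ``coincides'' with the $\decisionl$ clause is correct, but since you only need a subclause (weakening being admissible in half-ordered resolution), you could spare yourself that verification.
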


One direction is almost trivial.

\begin{proposition}\label{prop:cdcl=p1,1}
$\cdcl(\pid, \decisionl)$ $p$-simulates $\pi$-half-ordered resolution.
\end{proposition}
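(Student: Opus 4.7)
The plan is to walk through $\Pi$ in a topological order and simulate each resolution step by a short CDCL conflict episode that ends by learning the derived clause and restarting. Since the axioms of $\Pi$ sit in $\mathbb C$ from the start, we maintain the invariant that after the first $k$ derivations of $\Pi$ have been processed, every clause produced so far lies in $\mathbb C$ and the current trail is $\Lambda$.

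For a single step $\res(C\lor x_i^a,\, D\lor x_i^{1-a}) = C\lor D$, half-orderedness tells us that every variable of $C$ is $\pi$-smaller than $x_i$. Starting from $\Lambda$, the episode proceeds in three phases. In Phase~1 we issue decisions in $\pi$-order for each variable $x_j$ with $\pi(j)<\pi(i)$, choosing the value so as to falsify every literal of $C\lor D$ that lies on such an $x_j$, and deciding the variables that do not appear in $C\lor D$ arbitrarily. By half-orderedness this falsifies $C$ completely, so $C\lor x_i^a$ has become the unit clause $x_i^a$; in Phase~2 we perform the corresponding unit propagation $x_i\stackrel u= a$. In Phase~3 we continue deciding in $\pi$-order up to the $\pi$-largest variable of $D$, again falsifying the relevant literal at each step; once this is done, $D\lor x_i^{1-a}$ evaluates to $0$ on the trail, so it is a conflict clause.

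We then invoke $\decisionl$. The only unit propagation on the trail is the one at $x_i$, for which $C\lor x_i^a$ is a reason clause, so walking $\mathbb C_{r+1}(S)\to\mathbb C_1(S)$ backwards through Definition~\ref{df:action} resolves $D\lor x_i^{1-a}$ against $C\lor x_i^a$ exactly once and places $C\lor D$ into $\mathbb C_1(S)$. Provided $C\lor D\notin\mathbb C$ (otherwise the step is already done), we choose the learning action $(C\lor D,\Lambda)$, which is legal because $(C\lor D)|_\Lambda = C\lor D\neq 0$; this adds $C\lor D$ to $\mathbb C$ and restarts the trail, restoring the invariant. When $\Pi$ finally derives $0$, the $0\in\mathbb C(S)$ branch of \eqref{eq:learning} fires and the run enters a terminal state. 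Each conflict episode contributes at most $n+1$ actions, so the total length of the simulating run is $O(n\,|\Pi|)$.

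The step we expect to require the most care is verifying that nothing in the environment derails this scheduled sequence of actions. Since {\sf CDCL}($\decisionl$, $\pid$) carries neither {\sf ALWAYS-C} nor {\sf ALWAYS-U}, any spurious unit clauses or prematurely falsified clauses arising during Phases~1 or~3 do not force us to abandon deciding in $\pi$-order or to perform an unwanted propagation; and the extra conflict clauses they may contribute to $\mathbb C_{r+1}(S)$ only enlarge $\mathbb C_1(S)$, so $C\lor D$ remains a legal learning choice.
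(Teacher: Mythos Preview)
Your proposal is correct and follows essentially the same approach as the paper: simulate each half-ordered step by building a trail of decisions that falsifies $C\lor D$, with a single unit propagation on $x_i$ justified by $C\lor x_i^a$, and then learn $C\lor D$ via $\decisionl$ and restart. The only cosmetic difference is where the unit propagation sits in the trail---the paper inserts $x_i\ueq a$ immediately after deciding the $\pi$-largest variable of $C$ (so $x_i$ ``jumps the queue'' over the remaining variables below it), whereas you first decide all variables $\pi$-smaller than $x_i$; both placements make $C\lor x_i^a$ unit and leave only decisions on either side, so the verification that $C\lor D\in\mathbb C_1(S)$ goes through identically.
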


\begin{proof}
As usual, assume $\pi = \text{id}$. Suppose $C \vee D= \res(C \vee x_i,\ D
\vee \overline{x_i})$ is any half-ordered resolution, and without loss of
generality assume $C$ is $i$-small. It is enough to present a partial run
from $(\tau, \Lambda)$ to $(\tau \cup \{C \vee D\}, \Lambda)$ of length at
most $n + 1$, where $\tau$ is any clause set containing $C \vee x_i$ and $D
\vee \overline{x_i}$.

Let $x_j$ be the largest variable in $C$ (thus $j<i$). Consider a trail of the
form
\[
    t = [x_1 \stackrel d = a_1, \dots,
         x_j \stackrel d = a_j,
         x_i \stackrel u = 1,
         x_{j + 1} \stackrel d = a_{j + 1}, \dots,
         x_{i-1} \stackrel d = a_{i-1}, x_{i+1} \stackrel d = a_{i+1}, \dots,
         x_n \stackrel d = a_n]
\]
such that $(C \vee D) |_t = 0$. By definition, $t[l]
\in \text{AllowedActions}((\tau, t[\leq l-1]))$ for all $l \not = j+1$. But
since
$C$ is $i$-small, $(C \vee x_i) |_{t[\leq j]} = x_i$ and thus $x_i \stackrel
u = 1 \in \text{AllowedActions}((\tau, t[\leq j]))$ as well. Therefore,
\[
    (\tau, \Lambda) \stackrel{t[1]}{\Longrightarrow}
    (\tau, t[\leq 1]) \stackrel{t[2]}{\Longrightarrow}
    (\tau, t[\leq 2]) \dots \stackrel{t[n]}{\Longrightarrow}
    (\tau, t)
\]
is a partial run from $(\tau, \Lambda)$ to $(\tau, t)$.
It now suffices to show $(\tau \cup \{C \vee D\}, \Lambda) \in L((\tau, t))$.
This follows by verifying Definition \ref{df:action} directly: $(D \vee
\overline{x_i}) |_t = 0$ so $D \vee \overline{x_i} \in \C_{n + 1}((\tau, t))$.
For $j' > j + 1$, the assignment $t[j']$ is a decision, so $D \vee \overline{x_i}
\in \C_{j + 2}((\tau, t))$.  Since $(C \vee x_i) |_{t[\leq j]}= x_i$, $C \vee D
= \res(C \vee x_i, D \vee \overline{x_i}) \in \C_{j + 1}((\tau, t))$.  Finally,
for $j' \leq j$, $t[j']$ is a decision, so $C \vee D \in \C_1(\tau, t)$ and
$(\tau \cup \{C \vee D\}, \Lambda) \in \AllowedActions((\tau, t))$.
\end{proof}

The other direction of Theorem \ref{thm:cdcl=p1} is less obvious. We begin with
some additional notation.

Previous works describe standard learning schemes like $\decisionl$ with
respect to so-called \textit{trivial resolution} on a set of particular clauses
(e.g., in \cite{2011-Pipatsrisawat-Darwiche, 2004-Beame-Kautz-Sabharwal}). We
can recast this notion in our model by the following lemma. Let

\[
    D \circ^{x} C \df
        \begin{cases}
            \res(D, C) & \text{if $C$ and $D$ are resolvable on $x$} \\
            D & \text{otherwise (``null case'')}
        \end{cases},
\]
and extend it by left associativity, i.e., $C_0\circ^{x_{i_1}} C_1
\circ^{x_{i_2}}\dots\circ^{x_{i_k}} C_k \df (\dots(C_0\circ^{x_{i_1}}
C_1)\circ^{x_{i_2}}\dots)\circ^{x_{i_k}} C_k$.  We note that if $x_{i_j}$
appears maximally in $C_{j}$ (according to $\pi$) for each $j \in [k]$, then
all the resolutions are $\pi$-half-ordered.

\begin{lemma}\label{lm:learned_deriv}
Assume that a clause $D$ is learned from the state $S=(\C,t = [y_1 \stackrel
{*_1} = a_1,\dots, y_r \stackrel {*_r} = a_r])$ as described in Definition
\ref{df:action}, i.e.\ $D\in\C_j(S)$ for some $j \in [r + 1]$ . Then, there
are clauses $C_1,\dots, C_{k + 1} \in \mathbb{C}$ and indices $j \leq i_1 <
\dots < i_k\le r$ such that
\begin{enumerate}
    \item $C_{k + 1}|_t = 0$,

    \item $C_\nu|_{t[\leq i_\nu - 1]} = y_{i_\nu}^{a_{i_\nu}}$ for $\nu \in [k]$, and

    \item $D = C_{k+1} \circ^{y_{i_k}} C_k \dots \circ^{y_{i_1}} C_{1}$,
        where all operators are not null.
\end{enumerate}

Moreover, let $\ell\in [r]$, $*_\ell=u$ and assume that there exists $\nu\in [k]$ such that $i_\nu>\ell$ and $C_\nu$ contains the literal $\overline{y_{\ell}^{a_{\ell}}}$. Then $\ell\in\{i_1,\ldots,i_k\}$.
\end{lemma}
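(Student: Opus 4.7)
The natural approach is reverse induction on $j$, mirroring the inductive construction of $\mathbb{C}_k(S)$ in Definition \ref{df:action}. For the base case $j = r+1$, any $D \in \mathbb{C}_{r+1}(S)$ is an original clause with $D|_t = 0$; setting $k = 0$ and $C_1 = D$, properties 1 and 3 are trivial, while 2 and the ``moreover'' clause are vacuous.

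For the inductive step from $j+1$ to $j$, I split into three subcases. If the $j$-th entry of $t$ is a decision ($*_j = d$), then $\mathbb{C}_j(S) = \mathbb{C}_{j+1}(S)$ and the inductive hypothesis applies to $D$ directly. If $*_j = u$ and $\overline{y_j^{a_j}} \notin D$, then $D \in \mathbb{C}_{j+1}(S)$ is inherited unchanged and again the inductive hypothesis applies verbatim. The substantive case is $*_j = u$ and $D = \res(C, D')$ on $y_j$, where $D' \in \mathbb{C}_{j+1}(S)$ contains $\overline{y_j^{a_j}}$ and $C \in \mathbb{C}$ satisfies $C|_{t[\leq j-1]} = y_j^{a_j}$. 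Applying the inductive hypothesis to $D'$ yields clauses $C'_1, \dots, C'_{k'+1}$ and indices $j+1 \leq i'_1 < \dots < i'_{k'} \leq r$, and I extend the chain by prepending $C$: set $k = k'+1$, $(C_1, i_1) = (C, j)$, $(C_\nu, i_\nu) = (C'_{\nu-1}, i'_{\nu-1})$ for $2 \leq \nu \leq k$, and $C_{k+1} = C'_{k'+1}$. Conditions 1, 2, 3 transfer directly; the new operator $\circ^{y_j}$ is not null because $D'$ contains $\overline{y_j^{a_j}}$ and $C$ contains $y_j^{a_j}$, and the resolution is well-defined (no other conflicting variables) by Claim \ref{clm:c_property} applied to $D'$ together with the fact that every non-$y_j^{a_j}$ literal of $C$ is falsified by $t[\leq j-1]$.

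The ``moreover'' clause I would carry through the same induction, using two structural observations. First, each $C_\nu$ with $\nu \in [k]$ satisfies $C_\nu|_{t[\leq i_\nu - 1]} = y_{i_\nu}^{a_{i_\nu}}$, so all of its non-unit literals take the form $\overline{y_m^{a_m}}$ with $m < i_\nu$ (matching the hypothesis $i_\nu > \ell$). Second, a literal $\overline{y_\ell^{a_\ell}}$ contributed by some $C_\nu$ persists through the chain into the final clause $D$ unless some later resolution step is on $y_\ell$, i.e., unless $\ell \in \{i_1, \dots, i_{\nu - 1}\}$. Combined with the complementary invariant that a clause in $\mathbb{C}_j(S)$ cannot contain $\overline{y_\ell^{a_\ell}}$ when $*_\ell = u$ and $\ell \geq j$ (since the backward reduction at step $\ell$ would have eliminated such a clause, and the unit-propagation clauses inserted at later steps $m < \ell$ have only literals $\overline{y_{m'}^{a_{m'}}}$ with $m' < m < \ell$, so they cannot reintroduce $\overline{y_\ell^{a_\ell}}$), any such $\ell$ must lie in $\{i_1, \dots, i_k\}$.

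The main obstacle I expect is the clean formalization of this second invariant. It is morally a direct consequence of the reduction rule in Definition \ref{df:action}, but its proof is itself a separate reverse induction on $m$ showing that for every $m$ with $*_m = u$, no clause in $\mathbb{C}_m(S)$ contains $\overline{y_m^{a_m}}$, and that this property is preserved under the subsequent reductions that only introduce literals of strictly smaller index. Once this invariant is in hand, the proof of the ``moreover'' clause closes the induction by a case-by-case tracking of literals through the resolution chain parallel to the construction of $C_1, \dots, C_{k+1}$.
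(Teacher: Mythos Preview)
Your proof of items 1--3 is correct and essentially identical to the paper's: reverse induction on $j$, with the substantive case being the resolution step where you prepend $C$ to the chain.

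For the ``moreover'' clause, your approach is correct but more roundabout than the paper's. You establish a global invariant about \emph{all} clauses in $\mathbb{C}_j(S)$ (none contains $\overline{y_\ell^{a_\ell}}$ once $j\le\ell$ and $*_\ell=u$) via a separate reverse induction. The paper instead stays inside the specific chain already constructed: the partial results $D_\mu \df C_{k+1}\circ^{y_{i_k}}\cdots\circ^{y_{i_\mu}}C_\mu$ are themselves the clauses that the inductive construction places in $\mathbb{C}_{i_\mu}(S)$, then in $\mathbb{C}_{i_\mu-1}(S)$, and so on. Since $\overline{y_\ell^{a_\ell}}\in C_\nu$ and all resolutions down to level $\ell+1$ are on variables $y_m$ with $m>\ell$, the literal survives into whatever member of $\mathbb{C}_{\ell+1}(S)$ the chain produces. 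At that point Definition~\ref{df:action} forces a resolution on $y_\ell$ (the clause cannot pass through unchanged when $*_\ell=u$ and it contains $\overline{y_\ell^{a_\ell}}$), so $\ell$ must be one of the $i_\mu$. This avoids your auxiliary induction entirely.

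One small point your write-up leaves implicit (as does the paper): the argument for the ``moreover'' clause tacitly uses $j\le\ell$, since otherwise the chain never reaches level $\ell+1$. This is automatic in the paper's sole application ($j=1$), but neither proof spells it out.
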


\begin{proof}
By tracing Definition \ref{df:action} using reverse induction on $j$. If $j =
r + 1$, the clause $D$ itself satisfies the above properties. For $j\leq r$,
$D \in \mathbb{C}_{j + 1}(S)$ or there are clauses $D' \in \mathbb{C}_{j +
1}(S)$ and $C \in \mathbb{C}$ such that $C|_{t[\leq j - 1]} = y_j^{a_j}$ and
$D=\res(D',C)$ on $y_{j}$. The first case follows by the inductive hypothesis.
In the second case, enlarge the index list by including $j$ and the clause
list by including $C$.

The last claim follows from the observation that the literal $\overline{y_{\ell}^{a_{\ell}}}$ will propagate all the way down to $D'\in\mathcal C_{\ell+1}(S)$. Hence, at the $\ell$th stage the second case above must take place.
\end{proof}

In other words, $C_{k + 1}$ is a conflict clause and the other $C_\nu$'s are
clauses in $\C$ chosen to do resolutions while backtracking in a learning step.
These clauses are not necessarily unique, but we fix a choice arbitrarily. For
convenience, we let $i_{k + 1} \df r + 1$ and $
I\df\{i_1,...,i_{k+1}\}$.

Proposition \ref{prop:cdcl=p1,main} completes the proof of Theorem
\ref{thm:cdcl=p1}, which together with Theorem \ref{thm:p1} finishes the proof
of Theorem \ref{thm:last-l}.

\begin{proposition}\label{prop:cdcl=p1,main} $\pi$-half-ordered resolution
$p$-simulates $\cdcl(\pid, \decisionl)$.
\end{proposition}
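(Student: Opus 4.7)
The plan is to simulate each learning step of the $\cdcl(\pid, \decisionl)$ run by a small $\pi$-half-ordered derivation and concatenate. For a learning step producing clause $D$, Lemma \ref{lm:learned_deriv} supplies the chain $D = C_{k+1} \circ^{y_{i_k}} \cdots \circ^{y_{i_1}} C_1$, where $C_{k+1}$ is the conflict clause and each $C_\nu$ is a clause of $\mathbb{C}$ that propagated $y_{i_\nu}$ under $t[\leq i_\nu - 1]$. This chain is not $\pi$-half-ordered in general, since unit propagation can assign a $\pi$-large variable before a $\pi$-small one (``cutting in line''). The task is therefore to reorganize this chain into a $\pi$-half-ordered derivation of $D$ of size polynomial in $k$ and $n$.

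The key structural observation is that $\pid$ forces decisions to respect $\pi$: the $j$-th decision variable $y_{d_j}$ is the $\pi$-smallest variable unassigned at trail position $d_j - 1$, so every decision appearing in $t[\leq i_\nu - 1]$ is $<_\pi y_{i_\nu}$. Combined with the final clause of Lemma \ref{lm:learned_deriv}, every literal of $C_\nu$ other than $y_{i_\nu}^{a_{i_\nu}}$ is either a decision with variable $<_\pi y_{i_\nu}$ (``good'') or a propagation literal on some $y_{i_\mu}$ with $\mu < \nu$; the only obstruction to half-orderedness comes from such $y_{i_\mu}$ with $y_{i_\mu} >_\pi y_{i_\nu}$ (``bad''). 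We inductively build, for each $\nu$ in trail order, a cleaned clause $\tilde C_\nu$ containing $y_{i_\nu}^{a_{i_\nu}}$ and having every other literal on a variable $<_\pi y_{i_\nu}$. We obtain $\tilde C_\nu$ from $C_\nu$ by eliminating bad literals one at a time in $\pi$-decreasing order, each time resolving with the already-built $\tilde C_\mu$. Each such resolution is $\pi$-half-ordered because $\tilde C_\mu \setminus \{y_{i_\mu}^{a_{i_\mu}}\}$ is $i_\mu$-small by the inductive construction, and since freshly introduced literals from $\tilde C_\mu$ are on variables $<_\pi y_{i_\mu}$, the process terminates with polynomial total cost.

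With the $\tilde C_\nu$'s in hand, we derive $D$ by a modified main chain: starting from $C_{k+1}$, repeatedly resolve with the appropriate $\tilde C_\nu$ on the $\pi$-largest propagation literal currently in the running clause. Each resolution is $\pi$-half-ordered, witnessed by $\tilde C_\nu \setminus \{y_{i_\nu}^{a_{i_\nu}}\}$, and the $\pi$-decreasing elimination order guarantees that newly introduced literals do not reopen already-handled variables, so the chain terminates at a clause whose literals are all decisions. The main obstacle is verifying that this final clause is exactly $D$: one must check that the set of decision literals surviving the modified chain equals the union of decision literals of $C_{k+1}$ and the $C_\nu$'s (which, by the original chain, is $D$), tracking that pre-processing only trades a bad propagation literal for smaller-$\pi$ propagation literals that the main chain later eliminates. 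Summing the polynomial cost of each learning step over the whole run yields a $\pi$-half-ordered refutation of $\tau$ of size polynomial in the length of the $\cdcl(\pid, \decisionl)$ run.
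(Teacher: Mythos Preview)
Your proposal is correct and follows essentially the same route as the paper: build, by induction on the trail index $\nu$, a cleaned clause in which $y_{i_\nu}$ is $\pi$-maximal (the paper's $C'_\nu \df C_\nu \circ^{y_{i_{\nu-1}}} C'_{\nu-1} \circ \cdots \circ^{y_{i_1}} C'_1$ is processed in trail-decreasing order rather than your $\pi$-decreasing order, and resolves away \emph{all} earlier propagation literals rather than only the bad ones), then observe that the chain rerun with these cleaned clauses is $\pi$-half-ordered, with total cost $O(k^2)$ per learning step. The paper sidesteps your self-identified ``main obstacle'' entirely by proving only that the final clause $C'_{k+1}$ is a \emph{subclause} of $D$ and invoking the admissibility of weakening in $\pi$-half-ordered resolution.
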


\begin{proof}
Fix a successful run in $\cdcl(\pid, \decisionl)$. Since the clause set only
changes after a learning step, it suffices to show that for each learning
step $S=(\C, t) \stackrel{(D, t^*)}{\Longrightarrow} (\C \cup \{D\}, t^*)$,
there is a short half-ordered resolution proof of $D$ from $\C$. Suppose $t =
[y_1 \stackrel {*_1} = a_1, \dots, y_r \stackrel {*_r} = a_r]$ and assume
$\pi=\text{id}$, as usual. Fix the clauses $C_\nu$ for $\nu \in [k + 1]$ and
the set $I$ as in Lemma \ref{lm:learned_deriv} with $j\coloneqq 1$ (it is crucial that $D\in
\mathbb C_1(S)$ due to the presence of the amendment $\decisionl$); recall that

\begin{equation} \label{eq:resolution_sequence}
D = C_{k+1} \circ^{y_{i_k}} C_k \dots \circ^{y_{i_1}} C_{1}.
\end{equation}

The sequence of resolutions \eqref{eq:resolution_sequence} is not all
half-ordered only if some $y_{i_\nu}^{a_{i_\nu}}$ is not the largest in
$C_\nu$ (which may happen since the assignments in $t$ need not necessarily
respect the order $\pi$). Thus our goal is to replace in this sequence, this time going from the right to the left, each
clause $C_\nu$ for $\nu \in [k]$ by a clause $C_\nu'$ in which $y_{i_\nu}$
appears maximally.

First, let $C'_{1} = C_{1}$. For $\nu\in[2,k+1]$, let

\begin{equation}\label{eq:c_prime}
C'_{\nu} \df C_\nu \circ^{y_{i_{\nu - 1}}} C_{\nu - 1}' \dots \circ^{y_{i_1}}
C_{1}'
\end{equation}
where this time some operators may be null.

It is immediate from \eqref{eq:c_prime} that
\begin{equation}\label{eq:c_prime1}
\text{$y^{a_{\nu}}_{i_\nu}\in C'_\nu$ for all $\nu\in[k]$}
\end{equation}
and
\begin{equation}\label{eq:c_prime2}
\text{$C'_\nu\subseteq\mathop{\bigcup}_{\mu=1}^\nu C_{\mu}$ for all $\nu\in[k+1]$ (by induction on $\nu$).}
\end{equation}

\begin{lemma}\label{lm:rest}
For all $\nu \in [k+1]$ and $\mu \leq \nu$,
\[
     (C_\nu \circ^{y_{i_{\nu - 1}}} C_{\nu - 1}' \dots \circ^{y_{i_\mu}} C_{\mu}')
|_{t[\leq i_\nu - 1]} = y_{i_\nu}^{a_{i_\nu}}\quad\text{where $y_{i_{k+1}}^{a_{i_{k+1}}} \coloneqq 0$}.
\]
In particular, if $\mu<\nu$ the operator $\circ^{y_{i_\mu}}$ in \eqref{eq:c_prime} can be null only if
$\overline{y_{i_{\mu}}^{a_\mu}} \not \in C_\nu \circ^{y_{i_{\nu - 1}}} C_{\nu - 1}' \circ^{y_{i_{\nu - 2}}}
\dots \circ^{y_{i_{\mu + 1}}} C_{\mu + 1}'$.
\end{lemma}

\begin{proof}
Use double induction, first on $\nu$ and then on $\mu=\nu\ldots 1$. For $\mu=\nu$, this is Lemma \ref{lm:learned_deriv}(2) (and Lemma \ref{lm:learned_deriv}(1) when $\mu=\nu=k+1$). If $\mu<\nu$ then let $E\df  (C_\nu \circ^{y_{i_{\nu - 1}}} C_{\nu - 1}' \dots \circ^{y_{i_\mu+1}} C_{\mu+1}')$; we have to prove that $(E\circ^{y_{i_\mu}}C_\mu')|_{t[\leq i_\nu - 1]}=y_{i_\nu}^{a_{i_\nu}}$ from $E|_{t[\leq i_\nu - 1]}=y_{i_\nu}^{a_{i_\nu}}$. We can assume w.l.o.g.\ that this operator is not null. Now we only have to remark that $C'_\mu\mid_{t[\leq i_\mu -1]}=y_{i_\mu}^{a_{\mu}}$ by the inductive assumption applied to the
pair $\nu \coloneqq \mu,\ \mu \coloneqq 1$. The ``in particular'' part also follows from
this remark.
\end{proof}

By Lemma \ref{lm:rest}, the variable $y_{i_\mu}$ does not appear in $C_\nu \circ^{y_{i_{\nu - 1}}} C_{\nu - 1}' \dots \circ^{y_{i_\mu}} C_{\mu}'$ ($\mu<\nu$). Also, it does not appear in $C_{\mu-1},\ldots,C_1$ (by Lemma \ref{lm:learned_deriv}(2)) and hence also in $C'_{\mu-1},\ldots, C_1'$ (by \eqref{eq:c_prime2}). Therefore it does not appear in $C_\nu'$ and we arrive at the following strengthening of \eqref{eq:c_prime2}:
\begin{equation}\label{eq:c_prime_supset}
    \forall \nu\in[k+1],\ C'_\nu\subseteq (\mathop\bigcup\limits_{\mu=1}^{\nu}
C_\mu)\backslash(\mathop\bigcup\limits_{\mu=1}^{\nu-1}\{y_{i_{\mu}}, \overline{
y_{i_{\mu}}}\}).
\end{equation}
By the last part of Lemma \ref{lm:rest}, \eqref{eq:c_prime_supset} means any variable different from $y_{i_\nu}$ in $C_\nu'$ is marked as $d$ in $t_{i_\nu-1}$. This clearly implies that $y_{i_\nu}$ is maximal in $C'_\nu$. Therefore, for all $\nu \in [k + 1]$ the sequence
$C_{\nu}\circ^{y_{i_{\nu-1}}}C'_{\nu-1}\dots\circ^{y_{i_1}}C'_1$ is
half-ordered. Taken together, these sequences yield a half-ordered derivation
of $C_{k + 1}'$ with $O(k^2)$ steps in total.

Finally, by \eqref{eq:c_prime_supset} $C'_{k+1}\subseteq
(\bigcup\limits_{\mu=1}^{k+1}
C_\mu)\backslash(\bigcup\limits_{\mu=1}^{k}\{y_{i_\mu},\overline{y_{i_\mu}}\})$, where the latter clause is contained in $D$ by Lemma \ref{lm:learned_deriv}(3). This suffices for the proposition since the weakening rule is admissible in
$\pi$-half-ordered resolution.
\end{proof}

\section{$\cdcl(\pid, \firstl)=_p$ Resolution} \label{sec:first-l}
    % Section 4
In this section we prove Theorem \ref{thm:main}.
We first show that $\cdcl(\pid,\firstl)$ and $\pi$-$P_0$ (see Definition
\ref{df:p0}) are $p$-equivalent and then prove size upper bounds for $\pi$-$P_0$.

\subsection{$\pi$-$P_0=_p\cdcl(\pid,\firstl)$ } \label{sec:equiv2}

\begin{theorem}\label{thm:cdcl=p0}
For any fixed order $\pi$, the systems $\cdcl(\pid)$, $\cdcl(\pid,\firstl)$ and $\pi$-$P_0$ are
$p$-equivalent.
\end{theorem}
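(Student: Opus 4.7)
The plan is to establish the theorem via a cycle of three $p$-simulations. Since the set of actions allowed by $\firstl$ is a subset of those allowed when no learning amendment is present, $\cdcl(\pid,\firstl)$ is literally a subsystem of $\cdcl(\pid)$ and so $\cdcl(\pid)$ $p$-simulates it with no blowup whatsoever. It therefore suffices to prove (i) that $\pi$-$P_0$ $p$-simulates $\cdcl(\pid)$, and (ii) that $\cdcl(\pid,\firstl)$ $p$-simulates $\pi$-$P_0$.

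For direction (i), I would walk along a successful run of $\cdcl(\pid)$ and translate each action. Decision and unit-propagation actions translate directly into their $\pi$-$P_0$ namesakes, with the convenient side effect that every prefix of every trail appearing in the run is derived along the way. To translate a learning action $(\mathbb C,t)\stackrel{(E,t^*)}{\Longrightarrow} (\mathbb C\cup\{E\},t^*)$ with $E\in\mathbb C_j(S)$, I would apply Lemma \ref{lm:learned_deriv} to write $E=C_{k+1}\circ^{y_{i_k}}C_k\circ^{y_{i_{k-1}}}\cdots\circ^{y_{i_1}}C_1$ with every $C_\nu\in\mathbb C$ and $C_\nu|_{t[\leq i_\nu-1]}=y_{i_\nu}^{a_{i_\nu}}$ for $\nu\in[k]$, and then derive the intermediate resolvents $R_k,R_{k-1},\ldots,R_1=E$ by iteratively invoking the $\pi$-$P_0$ learning rule on the triple $(C_\nu,R_{\nu+1},t)$, with $C_\nu$ cast as the asymmetric ``$C$-side'' clause. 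The hypotheses of the rule are met because $R_\nu|_t=0$ by Claim \ref{clm:c_property}, $y_{i_\nu}$ occurs in $t$, and the remaining variables of $C_\nu$ all precede $y_{i_\nu}$ in $t$. This produces $O(n)$ $P_0$ lines per CDCL action.

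For direction (ii), I would process the $\pi$-$P_0$ proof in topological order and simulate each learning step by a short segment of a $\cdcl(\pid,\firstl)$ run that both begins and ends at the empty trail $\Lambda$, so that successive segments concatenate freely. Given a $\pi$-$P_0$ learning step that derives $C\lor D$ from $\alpha=C\lor x_i^a$, $\beta=D\lor x_i^{1-a}$ and a trail $t$ containing $(x_i\stackrel*{=}a)$ at some position $k$, both $\alpha$ and $\beta$ lie in the current CDCL clause set by the inductive hypothesis. The segment first replays $t[\leq k-1]$ assignment by assignment, which is legal because $\pi$-$P_0$ decisions already obey the $\pi$-smallest rule (matching $\pi$-{\sf D}) and the unit-propagation rule's certifying clauses are already derived. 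At position $k$, irrespective of the annotation in $t$, I would unit propagate $x_i=a$ using $\alpha$; this is available because the hypothesis that all other variables of $C$ precede $x_i$ in $t$, combined with $(C\lor D)|_t=0$, forces $C|_{t[\leq k-1]}=0$. Then I would extend the CDCL trail by pure $\pi$-ordered decisions, choosing values so as to falsify $\beta$ and declining every optional unit propagation. At the resulting state $\firstl$ permits learning $\res(\alpha,\beta)=C\lor D$: the clause $\beta\in\mathbb C$ is itself a conflict clause, no $u$-annotation occurs after position $k$, so $\beta$ passes unchanged through the backward learning sweep all the way down to position $k$, where it is resolved against $\alpha\in\mathbb C$. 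The segment concludes by performing this learning action with backtrack to $\Lambda$, contributing $O(n)$ CDCL actions per $P_0$ learning step.

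The main obstacle lies in direction (ii), namely the need to engineer a single CDCL state at which the target resolvent $C\lor D$ actually appears among the $\firstl$-learnable clauses. The critical move is to convert the (possibly $d$-annotated) position of $x_i$ in $t$ into a $u$-annotation using $\alpha$, and then to fill the tail of the trail with pure decisions so that $\beta$ is not accidentally resolved away during the backward sweep. Both moves exploit the nondeterminism that $\pi$-{\sf D} leaves untouched in the unit-propagation layer: the solver is free both to invoke and to decline optional unit propagations, and this loophole is exactly what makes the simulation go through.
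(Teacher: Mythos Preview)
Your proposal is correct and follows essentially the same approach as the paper's proof: the paper also simulates each $\pi$-$P_0$ learning step by replaying the relevant prefix of $t$, forcing $x_i\stackrel u=a$ via the clause $C\lor x_i^a$, padding with $\pi$-ordered decisions to make $D\lor x_i^{1-a}$ a conflict, and then learning $C\lor D$ under $\firstl$; and in the other direction it likewise walks through the run, translating decisions and unit propagations directly and handling each learning action by iterating the $\pi$-$P_0$ Learning rule along the chain of resolutions in the definition of $\mathbb C_\gamma(S)$ (your use of Lemma~\ref{lm:learned_deriv} just makes this chain explicit). The paper records the same $O(n)$ multiplicative blowup for both simulations.
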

\begin{proof}
Let $\Pi$ be a
$\pi$-$P_0$ refutation of a contradictory CNF $\tau$. We simulate $\Pi$ step-by-step in $\cdcl(\pid,\firstl)$ by
directly deriving each clause in $\Pi$. Suppose we have arrived at a state $(\C,\Lambda)$, where
$\C$ contains both premises in the inference
\begin{equation} \label{eq:rule}
  \ternaryinf
    {C \lor x_i^a} {D \lor x_i^{1-a}} {t}
    {C\lor D},
\end{equation}
as well as all preceding clauses, and assume that all variables in $C$ appear before $x_i$ in $t$.
Let $t = [x_{j_1}\stackrel{*_1}{=}a_1,\ldots,
x_{j_r}\stackrel{*_r}{=}a_r,x_{i}\stackrel{*}{=}a,\ldots]$ and (for ease of notation) $t_s\df t[\leq s]$.  To derive $C
\lor D$, we first build the trail $t_r$; note that since $t$ might be derived in $\Pi$ using the
Unit Propagation rule, the sequence $j_1,\ldots,j_r$ need not necessarily be $\pi$-increasing.

We do it simply by performing the corresponding actions in $\cdcl(\pid,\firstl)$ for decisions and
unit propagations. By induction, assume that we have already built $t_{s-1}$. If $*_s=d$ then
$x_{j_s}$ is the smallest variable according to $\pi$ that is not in $t_{s-1}$, so by definition
$x_{j_s} \deq a_s \in D((\C, t_{s-1}))$. In the case of the Unit Propagation rule ($*_s=u$), there
is a clause $E$ in $\Pi$ preceding \eqref{eq:rule} such that $E|_{t_{s-1}} = x_{s-1}^{a_{s-1}}$. Since $E \in \C$ by
assumption, $x_{j_s} \ueq a_{j_s} \in U((\C, t_{s-1}))$.

Next, we build the trail $[t_r,x_i\ueq a]$ (note that it is different from $t_{r+1}$ if $*=d$). It
is possible since $C \lor x_i^a \in \C$ by our assumption. Then we can further extend it by making
decisions in $\pi$-ascending order on the rest of the variables until $D \lor x_i^{1-a}$ becomes a
conflict clause. Denote the resulting state by $S=(\C, t')$.

Since all assignments after $x_i$ in $t'$ are decisions, $D \lor x_i^{1 - a} \in \C_{r + 2}(S)$,
in the notation of Definition \ref{df:action}.
Therefore, $C \lor D\in \C_{r + 1}(S)$, and hence $(C\lor D,\Lambda)$ is in $\textnormal{AllowedActions}(S)$ even in the presence of $\firstl$.
Induction completes the simulation.

The other direction is more direct: $\pi$-$P_0$ $p$-simulates $\cdcl(\pid)$ by design.
Whenever a run arrives at a state $(\mathbb C,t)$, we infer in $\pi$-$P_0$ all clauses
$C\in\mathbb C$ as well as all suffixes of $t$, including $t$ itself. More specifically, for a transition $(\mathbb C,t)\stackrel A{\Longrightarrow} (\mathbb C',t')$, if $A$ is a decision action or a unit propagation action, then we can derive suffixes of $t'$ using the Decision rule and the Unit propagation rule, respectively.
If $A$ is a learning action, then it suffices to make the following simple observation: by construction, for any $\gamma \in [|t|]$, the clauses in $\C_{\gamma}((\mathbb C, t))$ can be derived from clauses in $\C$ and $\C_{\gamma + 1}((\mathbb C, t))$ using the Learning rule with the trail $t$.

It is easy to see that both simulations increase size by at most a multiplicative factor $n$.
\end{proof}

\subsection{$\pi$-$P_0=_p$ Resolution}\label{sec:main}

We start by considering $\pi$-$P_0$ with the weakening rule, for which the proof of $p$-equivalence with resolution is much easier but still demonstrates some ideas
necessary for the full proof of Theorem \ref{thm:p0}.

\subsubsection{$\pi$-$P_0$ + weakening $=_p$ Resolution}\label{sec:weakening}

We begin with a basic observation: if a clause is falsified by a trail, then weakening it may allow us to derive new trails with the Unit propagation rule.
It is, then, natural to expect that introducing the weakening rule may increase the power of $\pi$-$P_0$.
The following lemma formally demonstrates how to leverage this observation.

\begin{lemma}\label{lm:weakening}
Let $\pi$ be any order.
Suppose $C \lor x$, $D \lor \overline x$ are resolvable clauses and $t$ is a trail such that $(C \lor D)|_t \not = 1$ and $x$ is not assigned by $t$.
For any clause $E$ such that $E|_t = 0$, there is a $\pi$-$P_0 + \textnormal{weakening}$ derivation of $C \lor D$ from $C \lor x, D \lor \overline{x}, E$ and $t$ of length at most $2n + 1$.
\end{lemma}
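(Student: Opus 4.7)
The plan is to extend the given trail $t$ to a trail $t'$ that falsifies $C \lor D$ and contains $x$, arranging for every variable of $C$ to appear before $x$, and then apply the Learning rule of Definition~\ref{df:p0} to $C \lor x$, $D \lor \overline{x}$, and $t'$. The main observation is that since $E|_t = 0$, for any literal $\ell = y^b$ whose variable $y$ is not in the current trail, the weakening $E \lor \ell$ is a well-formed clause (neither $y^b$ nor $y^{1-b}$ can appear in $E$, since every literal of $E$ is falsified by $t$ and hence has its variable in $\var(t)$) and evaluates to $\ell$ under the current trail. So one Unit Propagation applied to $E \lor \ell$ appends $y \ueq b$ to the trail. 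This ``weakening trick'' effectively bypasses the Decision rule's restriction to the $\pi$-smallest unassigned variable and lets us append any literal at will, as long as its variable lies outside $\var(t)$.

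Concretely, I would first process each $y \in \var(C) \setminus \var(t)$ with $y^a \in C$ by weakening $E$ to $E \lor y^{1-a}$ and Unit-Propagating to append $y \ueq 1-a$; after this phase the extended trail $t_1$ falsifies every literal of $C$, so $(C \lor x)|_{t_1} = x$. One further Unit Propagation with $C \lor x$ then appends $x \ueq 1$, yielding a trail $t_2$. Finally, for each remaining $z \in \var(D) \setminus (\var(C) \cup \var(t) \cup \{x\})$ with $z^b \in D$, I would weaken $E$ to $E \lor z^{1-b}$ and Unit-Propagate. The hypothesis $(C \lor D)|_t \neq 1$ rules out any literal of $D$ being satisfied by $t$, and any variable shared by $C$ and $D$ must occur with the same polarity in both (else $C \lor D$ would not be a clause), so the resulting trail $t'$ indeed satisfies $(C \lor D)|_{t'} = 0$. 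Since $t'$ contains $x$ and every variable of $C$ precedes $x$ in $t'$, the Learning rule applies to $C \lor x$, $D \lor \overline{x}$, $t'$ and yields $C \lor D$ in one more line.

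For the length, the number of variables we add to the trail is at most $|\var(C \lor D)| + 1 \leq n$: each one but $x$ costs one weakening plus one Unit Propagation, the step introducing $x$ costs just one Unit Propagation, and the final Learning step is one more line, for a total bounded by $2(n-1) + 1 + 1 = 2n$. The only nontrivial point is that each weakening $E \lor \ell$ is a legitimate clause, which as noted reduces to $\var(E) \subseteq \var(t)$ and follows immediately from $E|_t = 0$; so I do not expect a real obstacle here, the conceptual content being simply the realization that weakening trivializes the Decision-order constraint. The genuine difficulty of Section~\ref{sec:main} lies in Theorem~\ref{thm:p0}, which must reproduce this kind of flexibility \emph{without} access to weakening.
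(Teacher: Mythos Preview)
Your proposal is correct and follows essentially the same approach as the paper: weaken $E$ by a single literal to force arbitrary unit propagations, thereby extending $t$ to a trail falsifying $C\lor D$ and containing $x$, then apply the Learning rule. The only cosmetic difference is that you insert $x$ mid-extension via a unit propagation on $C\lor x$ itself (saving one weakening step), whereas the paper appends $x$ last using one more weakening of $E$; both variants clearly satisfy the side conditions of the Learning rule.
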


\begin{proof}
Suppose $(C \lor D)|_t = x_{i_1}^{a_1} \lor x_{i_2}^{a_2} \lor \dots \lor x_{i_k}^{a_k}$ (in an arbitrary order).
Define the trail
\[
  t' \df [
    t,
    x_{i_1} \stackrel u = 1 - a_1,
    x_{i_2} \stackrel u = 1 - a_2,
    \dots,
    x_{i_k} \stackrel u = 1 - a_k,
    x \stackrel u = 0].
\]
so that $(C\lor D)|_{t'}=0$.
%For notational convenience,
Let $t'_\gamma$ denote $t'[\leq |t| + \gamma]$ for $\gamma\in[0, k + 1]$ and define $x_{i_{k + 1}} \df x$, $a_{i_{k + 1}} \df 0$.
For $\gamma \in [0, k + 1]$, define
\[
  E_{\gamma} \df E\lor x^{1-a_{\gamma+1}}_{i_{\gamma+1}}
\]
which can be derived from $E$ by weakening.
Since $E_\gamma|_{t_\gamma} = x_{i_{\gamma + 1}}^{1 - a_{\gamma + 1}}$, $t_{\gamma + 1}$ can be derived from $E_\gamma$ and $t_\gamma$ by the Unit Propagation rule.
Finally, $(C \lor D)|_{t'} = 0$ and $x$ is the last variable in $C$ assigned by $t'$, so $C \lor D$ can be derived using the Learning rule with $t'$.
Altogether, we have used $2(k + 1) + 1 \leq 2 n + 1$ steps.
\end{proof}

\begin{theorem}\label{thm:weakening}
For any order $\pi$, $\pi$-$P_0+\textnormal{weakening}$ is $p$-equivalent to resolution.
\end{theorem}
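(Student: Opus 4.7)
Resolution $p$-simulates $\pi$-$P_0$ + weakening trivially: drop the trail lines and observe that each clause line is either an axiom, a weakening of a previously-derived clause, or an instance of resolution (the Learning rule). For the converse, my plan is to convert a given resolution refutation $\Pi = (C_1, \ldots, C_m = 0)$ of $\tau$ into a $\pi$-$P_0$ + weakening refutation of size polynomial in $|\Pi|$ and $n$, using Lemma \ref{lm:weakening} as the workhorse.

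The basic scheme processes the clauses of $\Pi$ in order, carrying the derived clauses along. Axioms are copied. For a resolution line $C_i = C \lor D \leftarrow C \lor x, D \lor \overline{x}$, I plan to construct the inputs to Lemma \ref{lm:weakening} as follows: build $t$ from $\Lambda$ by $\pi$-ordered Decisions over the first $\pi(x) - 1$ variables, choosing polarities so that some already-derived clause $E$ with variables entirely $\pi$-below $x$ is falsified. Since we control the decision values, any such ``$x$-small'' $E$ can be falsified in this way; then $E|_t = 0$ with $x \notin t$, and the lemma produces $C \lor D$ in $O(n)$ additional lines. In particular, when the step is $\pi$-half-ordered (i.e., $C$ or $D$ is $x$-small), we may take $E$ to be (a weakening of) the appropriate premise, and the simulation is line-by-line with $O(n)$ overhead per step.

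The main obstacle is the case where no $x$-small $E$ has yet been derived at the moment we wish to resolve on $x$: the Decision rule forbids extending a $\pi$-prefix trail past $x$ without assigning $x$, so the trail is hard-capped at the first $\pi(x) - 1$ variables, and without a suitable $E$ the weakening-driven unit propagation inside Lemma \ref{lm:weakening} has nothing to start from. My plan to overcome this is to schedule the simulation in decreasing $\pi$-order of resolved variables, so that by the time we must handle the $x$-resolution steps of $\Pi$, the accumulated derived clauses include sufficiently many $x$-small clauses manufactured by previous (larger-variable) steps. When this stockpile is insufficient for a particular step, I intend to synthesize the needed $x$-small $E$ on demand via an inner induction that iterates the weakening-plus-propagation trick from the proof of Lemma \ref{lm:weakening} on variables $\pi$-above $x$, converting non-$x$-small derived clauses into $x$-small ones. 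Bounding this inner construction by $\mathrm{poly}(n, |\Pi|)$ is the main technical obstacle; once it is in hand, summing $O(n)$ per resolution step of $\Pi$ with the polynomial overhead for scheduling and auxiliary derivations yields the desired $p$-simulation.
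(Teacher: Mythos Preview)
Your diagnosis of the obstacle is exactly right: to invoke Lemma~\ref{lm:weakening} for a resolution on $x$, you need a derivable trail $t$ not assigning $x$ together with a clause $E$ satisfying $E|_t = 0$, and with decision-only trails this forces $E$ to be $(\pi(x)-1)$-small. But your proposed fix does not close the gap.

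First, scheduling the steps of $\Pi$ in decreasing $\pi$-order of the resolved variable is incompatible with the dependency structure of $\Pi$: a resolution on a large variable may have premises that were themselves produced by resolutions on smaller variables, so you cannot simply reorder. Second, and more fundamentally, the ``inner induction'' that is supposed to synthesize an $x$-small $E$ on demand is circular as described. Converting a non-$x$-small clause into an $x$-small one means resolving away variables $\pi$-above $x$; but performing those resolutions in $\pi$-$P_0$ is precisely the problem you are trying to solve, and each such step requires its own $E'$. There is no base case in your sketch: if every axiom contains the $\pi$-largest variable, you cannot get started at all. You acknowledge that bounding this construction is ``the main technical obstacle,'' but in fact the construction itself is not yet specified.

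The paper's proof sidesteps all of this by not simulating $\Pi$ line by line. Instead it fixes the single family of target clauses $C_i \df x_1 \lor \cdots \lor x_i$ and derives them by reverse induction on $i$. The base case $C_n$ is obtained by weakening any all-positive axiom (one exists since $\tau$ is contradictory). For the inductive step one restricts $\Pi$ by the full assignment $t_i^1 = [x_1{=}0,\dots,x_i{=}0,x_{i+1}{=}1]$, weakens each surviving axiom by $\lor\, C_i$, and simulates the resulting proof $\Pi_i$ of $C_i$; every resolution in $\Pi_i$ is on a variable strictly larger than $x_{i+1}$, and Lemma~\ref{lm:weakening} applies uniformly with the \emph{fixed} choice $E \df C_{i+1}$ and the decision trail $t_i^0 = [x_1{=}0,\dots,x_{i+1}{=}0]$. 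This yields $O(n|\Pi|)$ steps per level and $O(n^2|\Pi|)$ overall. The point you are missing is that a single canonical $E$ at each level, manufactured in advance by weakening, eliminates the need to search for or synthesize step-specific witnesses.
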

\begin{proof}
We prove the nontrivial direction.
Let $\Pi$ be a resolution refutation of a contradictory CNF $\tau$.
Assume, without loss of generality, that $\pi = \text{id}$ and all variables appear in $\tau$.
Define
\[
  C_i\df x_1\lor ...\lor x_i.
\]
We construct a $\pi$-$P_0 + \text{weakening}$ derivation of $C_i$ by reverse induction on $i$.
For $i = n$, since $\tau$ is contradictory, there must be a clause containing only positive literals, which can be weakened to $C_n$.
Now suppose $C_{i + 1}$ has been derived.
Using the Decision rule, derive the two trails
\[
  t_i^a\df[x_1\stackrel{d}{=}0,...,x_i\stackrel{d}{=}0, x_{i+1}\stackrel{d}{=}a].
\]
for $a \in \{0, 1\}$.

Let $\Pi_i$ denote the following resolution derivation from
$\Pi|_{t_i^1}$: replace each axiom $A$ with $A \lor C_i$ and then let the added literals be inherited naturally throughout.
Note that $\Pi_i$ ends in $C_i$ and does not contain any appearances of $x_{i + 1}$.
We use $\Pi_i$ to construct the desired proof by deriving each clause in order, applying Lemma \ref{lm:weakening} for each resolution step.

The axioms in $\Pi_i$ can be derived either by weakening or by $\pi$-half-ordered resolution with $C_{i + 1}$, depending on whether $\overline{x_{i + 1}}$ appears in the original axiom of $\Pi$.
For each resolution step $\res(C\lor x_j^{a}, D\lor x_j^{1-a})$ in $\Pi_i$, we can apply Lemma \ref{lm:weakening} with $E \coloneqq C_{i + 1}$, $t \coloneqq t_i^0$, and $x \coloneqq x_j$, noting that $x_j$ is greater than $x_{i + 1}$ by design.
As a result, there is a $\pi$-$P_0 + \textnormal{weakening}$ derivation of $C \lor D$ with $O(n)$ steps.

The number of steps to derive $C_i$ by this process is $O(n|\Pi|)$, so the total number of steps to derive each $C_j$ for $j \in [0, n]$ is $O(n^2|\Pi|)$. Finally note that $n \leq |\tau|\leq |\Pi|$ since all variables appear in $\tau$.
\end{proof}

\subsubsection{Proof of Theorem \ref{thm:p0}}\label{sec:final}

It remains to prove that $\pi$-$P_0$ simulates resolution.
This is the interesting direction of Theorem \ref{thm:p0} and follows from Theorem \ref{thm:final} below.

Throughout this section, assume that $\pi = \text{id}$.
We first introduce operators for \textit{lifting} $\pi$-$P_0$ proofs to include appearances of the literal $x_1$ and \textit{deleting variables} from resolution refutations, both of which we use extensively in the proof of Theorem \ref{thm:final}.

The lifting operator is primarily a bookkeeping mechanism for managing auxiliary appearances of the literal $x_1$ in proofs.

\begin{definition}
Let $\psi$ and $\tau$ be CNFs such that $x_1 \not \in \var(\psi)$ and for each $C \in \psi$, $\tau$ contains either $C$ or $C\lor x_1$.
For $C \in \psi$, define $\lift_\tau(C)$ to be the smallest of these two clauses that is in $\tau$.
For a $\pi$-$P_0$ proof $\Pi$ from $\psi$ define $\lift_\tau(\Pi)$ to be the $\pi$-$P_0$ proof resulting from the following operations on $\Pi$.
\begin{itemize}
    \item Add the derivation of $[x_1 \stackrel d = 0]$ by the Decision rule to the beginning of $\Pi$.
    \item Replace each trail $t$ in $\Pi$ with $[x_1 \stackrel d = 0, t]$.
    \item Replace each axiom $A$ appearing in $\Pi$ with $\lift_\tau(A)$ and then let the added appearances of $x_1$ be naturally inherited throughout the clauses of $\Pi$.
\end{itemize}
\end{definition}

Note that we used a similar lifting operation for resolution in the proof of Theorem \ref{thm:weakening} above.
It is straightforward to verify that $\lift_\tau(\Pi)$ is a $\pi$-$P_0$ proof and if $\Pi$ derives $C$ from $\psi$ then $\lift_\tau(\Pi)$ derives $C$ or $C\lor x_1$ from $\tau$.
Note also that this is only possible because $x_1$ is the smallest variable according
to $\pi$ and hence does not interfere with the Learning rule.
In the proof of Theorem \ref{thm:final}, we will want to construct $\Pi$ but will
only be able to derive clauses in $\tau$, so we construct $\lift_\tau(\Pi)$ instead
and then manage the additional appearances of $x_1$.

The second operator, \textit{variable deletion}, is an analog of restriction for sets of variables as opposed to assignments.
Let $S \subseteq V$ be a set of variables.
For a clause $C$, let $\del_S(C)$ denote the result of removing from $C$ all literals whose underlying variables are in $S$.
For a CNF $\tau$, define $\del_S(\tau) \df \{\del_S(C)$ : $C \in \tau\} \setminus \{0\}$.
Here we see the first interesting feature of variable deletion, namely that we ignore clauses that become $0$ after removing variables from $S$.
But, as we show below, if $\tau$ is contradictory and the subset $S$ is proper then $\del_S(\tau)$ is also contradictory.
This is not true in general for $\tau|_\rho \setminus \{0\}$ of course.

Variable deletion for refutations will be given in Definition \ref{df:deletion}.
It is presented as a (linear time) algorithm that operates on the underlying resolution graph as its input, by recursively changing edges and clauses while nodes keep their identity (although some may be deleted).
This is similar to the approach we took in Section \ref{sec:proofcomplexity}.
In order to more easily keep the node structure fixed, the algorithm first produces a proof in the subsystem of resolution $+$ weakening in which all applications of the weakening rule are dummy (that is, are of the form $\displaystyle{\unaryinf CC}$).
We call proofs in this system {\em generalized resolution proofs}.
We further emphasize that variable deletion is defined only on connected refutations, as connectedness is necessary for the output to be a refutation (cf. Claim  \ref{clm:del-subclause}(\ref{item:straightforward})).
Consequently, we ensure in the proof of Theorem \ref{thm:final} that we only apply it to connected refutations.

\begin{definition} \label{df:deletion}
Let $\Pi$ be a \textbf{connected} resolution refutation of $\tau$ and let $S$ be a \textbf{proper} subset of $\var(\Pi)$.
Let $\Gamma$ be the generalized resolution refutation of $\del_S(\tau)$ whose resolution graph is output by the algorithm below.
The resolution refutation $\del_S(\Pi)$ is the result of contracting dummy applications of the weakening rule in $\Gamma$.
\bigskip

\noindent\textbf{Deletion Algorithm}
\medskip

\begin{enumerate}

  \item \label{del:clause-replace}
    For each axiom node $v$, set $c(v) \gets \del_S(c_\Pi(v))$.
    If $c(v)$ becomes $0$ (that is, when $\var(c_\Pi(v))\subseteq S$), delete it.

  \item \label{del:main-step} Processing nodes in topological order, let $v$ be a resolution node and let $v_1,v_2$ be its parents.

    \begin{enumerate}
      \item
        If {\em both} $v_1$ and $v_2$ were previously deleted, delete $v$ as well.

      \item \label{del:problematic_step}
        If only one of them was deleted or none was deleted but $c(v_1), c(v_2)$ are not resolvable, then one of them, say, $c(v_1)$ is a subclause of  $\del_S(c(v))$ (we will see this in Claim \ref{clm:del-subclause}). Set $c(v)\gets c(v_1)$, and replace incoming edges with a dummy weakening edge from $v_1$.

      \item \label{del:good_step}
        If both $v_1$ and $v_2$ survived and $c(v_1), c(v_2)$ are resolvable, set $c(v)\gets \res(c(v_1), c(v_2))$.
\end{enumerate}
\end{enumerate}
\end{definition}

We claim that this algorithm is well-defined (that is, the condition in step \ref{del:problematic_step} is always met) and that the root vertex $v$ is not deleted and $c(v)=0$ (that is, it produces a generalized resolution refutation of $\del_S(\tau)$). Both statements are immediate corollaries of the following claim.

\begin{claim} \label{clm:del-subclause}
\hfill

\begin{enumerate}
\item \label{item:straightforward} A vertex $v$ is deleted if and only if for every axiom node $w\in \text{dcl}_\Pi(v)$ it holds that $\var(c_\Pi(v))\subseteq S$. In particular:
    \begin{itemize}
    \item The root vertex is not deleted (recall that $\Pi$ is connected);

    \item If $v$ is deleted then $\var(c_\Pi(v))\subseteq S$.
    \end{itemize}

\item \label{item:b} For every remaining vertex $v$, $c(v)$ is a subclause of $\del_S(c_\Pi(v))$.

\item \label{item:c} In the situation of step \ref{del:problematic_step}, there indeed exists $v_i$ such that $c(v_i)$ is a subclause of $\del_S(c_\Pi(v))$.
\end{enumerate}
\end{claim}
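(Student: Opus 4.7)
My plan is to establish (1), (2), (3) simultaneously by induction on the topological order in which the algorithm processes the nodes of $G(\Pi)$. This ordering is the convenient one, because when we process a resolution node $v$ with parents $v_1, v_2$, the deletion statuses of $v_1, v_2$ and (for surviving parents) the labels $c(v_1), c(v_2)$ are already fixed, so the inductive hypothesis for all three parts is available for them.

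I would dispose of (1) first. For an axiom $v$, step \ref{del:clause-replace} deletes $v$ exactly when $\del_S(c_\Pi(v)) = 0$, i.e. $\var(c_\Pi(v)) \subseteq S$. For a resolution node $v$, deletion happens (case 2a) iff both parents are deleted, which by the inductive hypothesis is equivalent to every axiom in $\dcl_\Pi(v_1) \cup \dcl_\Pi(v_2) = \dcl_\Pi(v) \setminus \{v\}$ having its variables in $S$. The ``in particular'' consequences then follow easily: resolution never introduces new variables, so if $v$ is deleted then $\var(c_\Pi(v))$ is contained in the union of $\var(c_\Pi(w))$ over the axiom nodes $w$ below $v$, all of which are inside $S$; and since $\Pi$ is connected, every axiom lies below the root, while $S \subsetneq \var(\Pi) = \bigcup_{w\ \text{axiom}} \var(c_\Pi(w))$ forces at least one axiom outside $S$, so the root survives.

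For (2) and (3), fix a resolution node $v$ with parents $v_1, v_2$, and let $x$ be the variable on which $\Pi$ resolves at $v$, with $x^a \in c_\Pi(v_1)$ and $x^{1-a} \in c_\Pi(v_2)$. The key unifying ingredient is the following observation: \emph{any variable on which $c(v_1)$ and $c(v_2)$ are resolvable must be $x$}. Indeed, by the inductive part (2), every literal of $c(v_i)$ also lies in $c_\Pi(v_i)$; a conflicting pair $y^0 \in c(v_1), y^1 \in c(v_2)$ with $y \neq x$ would survive into $c_\Pi(v) \supseteq (c_\Pi(v_1) \cup c_\Pi(v_2)) \setminus \{x^a, x^{1-a}\}$, contradicting that $c_\Pi(v)$ is a clause. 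Armed with this, case 2b splits cleanly: when one parent (WLOG $v_2$) is deleted, part (1) gives $x \in \var(c_\Pi(v_2)) \subseteq S$, whence $c(v_1) \subseteq \del_S(c_\Pi(v_1)) \subseteq \del_S(c_\Pi(v))$; when both parents survive but are not resolvable, the observation forces (WLOG) $x^a \notin c(v_1)$, and again $c(v_1) \subseteq \del_S(c_\Pi(v_1) \setminus \{x^a\}) \subseteq \del_S(c_\Pi(v))$. In either sub-case the algorithm sets $c(v) \gets c(v_1)$, delivering both (2) and (3) at $v$. In case 2c, the observation forces the resolution of $c(v_1), c(v_2)$ to be on $x$ (with $x \notin S$, or else $\del_S$ would already have stripped $x^a, x^{1-a}$), so $c(v) = \res(c(v_1), c(v_2))$ is a subclause of $\del_S(c_\Pi(v))$ by a one-line union computation.

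The main obstacle is isolating the observation above; it is the one place where the fact that $\Pi$ is a bona fide resolution proof (so that every internal clause has no variable appearing with both polarities) enters the argument in an essential way. Once it is in hand, cases 2b and 2c reduce to direct bookkeeping on $\del_S$, and parts (2), (3) follow mechanically from the case analysis in the algorithm.
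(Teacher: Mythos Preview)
Your proposal is correct and follows essentially the same inductive approach as the paper, which also proves items (2) and (3) by induction simultaneously with the construction and singles out the sub-case of step~2b with one parent deleted as the only interesting one, handling it identically via item~(1). Your write-up is simply more thorough: the paper leaves the other sub-case of~2b and case~2c implicit, whereas you explicitly isolate the observation that $c(v_1),c(v_2)$ can only be resolvable on the original pivot $x$, which cleanly handles both of those cases as well.
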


\begin{proof}
Item \ref{item:straightforward} is straightforward. The two remaining items \ref{item:b}
and \ref{item:c} are proved by induction, simultaneously with the construction.
The only interesting case is step \ref{del:problematic_step} when precisely one
of the two vertices (say, $v_2$) was deleted. But then $\var(c_\Pi(v_2))\subseteq S$ by Claim \ref{clm:del-subclause}(\ref{item:straightforward}) and hence $c_\Pi(v)$ was obtained by resolving on a variable $x_i$ in $S$. Applying Claim \ref{clm:del-subclause}(\ref{item:b}) to the other parent $v_1$, we see that $c(v_1)$ is a subclause of $\del_S(c_\Pi(v_1))$ which in turn is a subclause of $\del_S(c_\Pi(v))$ since $x_i\in S$.
\end{proof}

One key difference between variable deletion and restriction is that $\Pi|_\rho$ may be trivial, in the sense that it is a single empty clause, while $\del_{\var(\rho)}(\Pi)$ is not.
As a simple example, consider the CNF $\{x_1, \overline{x_1}\lor x_2, \overline{x_2}\}$ and the refutation
\begin{prooftree}
  \AxiomC{$x_1$}
  \AxiomC{$\overline{x_1} \lor x_2$}
  \BinaryInfC{$x_2$}
  \AxiomC{$\overline{x_2}$}
  \BinaryInfC{$0$}
\end{prooftree}
If $\rho =\{x_1=0\}$, then $\Pi|_{\rho}$ is trivial, whereas $\del{\{x_1\}}(\Pi)$ is
\begin{prooftree}
  \AxiomC{$x_2$}
  \AxiomC{$\overline{x_2}$}
  \BinaryInfC{$0$}
\end{prooftree}

The final property of $\del_S(\Pi)$ is that its size can be characterized with respect to the relationship between $\Pi$ and $S$.
This allows us to ``slough off" parts of the $\Pi$ that we might have already seen before.

\begin{lemma}\label{lm:size}
Let $\Pi$ be a connected resolution refutation and let $S \subsetneq \var(\Pi)$.
Let $t$ denote the number of resolution steps $\res(C, D)$ in $\Pi$ on variables in $S$.
Then
\[
  |\del_S(\Pi)| \leq |\Pi| -t.
\]
\end{lemma}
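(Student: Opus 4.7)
The plan is a counting argument based on the three cases of step \ref{del:main-step} of the deletion algorithm. Let $d$ denote the number of vertices deleted in step 2a and $w$ the number of vertices that receive a dummy weakening in step 2b. The intermediate generalized refutation $\Gamma$ then has $|\Pi|-d$ nodes, and contracting the $w$ dummy weakenings at the end identifies each such node with its unique parent, so $|\del_S(\Pi)|=|\Pi|-d-w$. Hence it suffices to show $d+w\geq t$.

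I will prove this by exhibiting, for each of the $t$ resolution nodes of $\Pi$ whose pivot lies in $S$, a distinct contribution to $d$ or $w$. Concretely, suppose $v\in V(\Pi)$ is a resolution of $c_\Pi(v_1)=C\lor x_i$ with $c_\Pi(v_2)=D\lor\overline{x_i}$ on some $x_i\in S$. If both $v_1,v_2$ are deleted, then step 2a deletes $v$ itself, contributing $v$ to $d$. Otherwise I claim step 2c cannot fire at $v$, so that step 2b applies and $v$ contributes to $w$.

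The reason step 2c cannot fire is the following. By Claim \ref{clm:del-subclause}(\ref{item:b}) the surviving parents satisfy $c(v_j)\subseteq\del_S(c_\Pi(v_j))$; since $x_i\in S$, this means $c(v_1)\subseteq \del_S(C)$ and $c(v_2)\subseteq\del_S(D)$, and in particular neither $c(v_1)$ nor $c(v_2)$ contains a literal of $x_i$. Moreover, because the resolution step at $v$ in $\Pi$ is valid, $C$ and $D$ share no variable with opposing polarities; therefore neither do the subclauses $c(v_1)\subseteq\del_S(C)$ and $c(v_2)\subseteq\del_S(D)$. Hence $c(v_1)$ and $c(v_2)$ have no conflicting literal at all, so they are not resolvable, and step 2c is not applicable. (If only one of $v_1,v_2$ survives, step 2b also applies by definition.) In every case, $v$ contributes either to $d$ or to $w$, and these contributions are clearly distinct as we range over the $t$ nodes in question, yielding $d+w\geq t$.

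Since the argument is essentially a case analysis aligned with the three cases of step \ref{del:main-step}, I do not expect a serious obstacle; the only point that requires care is verifying that the only way a $\Pi$-resolution on an $S$-pivot can survive as a genuine resolution (rather than a deletion or dummy weakening) in $\Gamma$ is forbidden, and this is exactly ruled out by Claim \ref{clm:del-subclause}(\ref{item:b}) together with the non-conflict property of the original resolution step.
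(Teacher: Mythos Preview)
Your argument is correct and is essentially the paper's one-line proof spelled out in full: both hinge on the observation, via Claim~\ref{clm:del-subclause}(\ref{item:b}), that a node of $\Pi$ whose pivot lies in $S$ can never reach case~\ref{del:good_step}, so every such node contributes to either $d$ or $w$. One small correction: since axiom nodes may also be deleted in step~\ref{del:clause-replace}, your equality $|\del_S(\Pi)|=|\Pi|-d-w$ should really be the inequality $\leq$, which of course only strengthens the conclusion.
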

\begin{proof}
By Claim \ref{clm:del-subclause}(\ref{item:b}), all remaining resolution steps \ref{del:good_step} are on variables that do not belong to $S$.
\end{proof}

We now have sufficient machinery to prove Theorem \ref{thm:final}. As is
sometimes useful, the simulation we define is more ambitious than
necessary. Rather than outputting a
refutation, it outputs a proof that derives all literals (as unit clauses)
appearing in the input. The motivation for this is twofold. First, unit
clauses make $\piPz$ significantly more powerful because they grant more
control over the trails that can be derived. In particular, if all literals
appearing in a refutation $\Pi$ have been derived, then $\Pi$ can be
simulated in $n|\Pi|$ steps by directly simulating each resolution appearing
in it. Second, in reference to the deletion operator, all clauses of
$\del_S(\tau)$ can be derived using clauses of $\tau$ and unit clauses $x^0$
and $x^1$ for $x \in S$.

Our simulation algorithm is based on the obvious restrict-and-branch method, by which one recurses on $\Pi|_{\{x_i = 0\}}$ and $\Pi|_{\{x_i = 1\}}$, lifts the resulting proofs to have axioms in $\tau$, and then derives $0$ (if it has not been derived already) by resolving the unit clauses $x_i$ and $\overline{x_i}$.
The clear issue with this approach is that we cannot afford to recurse on \textit{both} restricted proofs: there are parts of $\Pi$ that are ``double counted" as a consequence of its DAG structure and the size may blow up.
But recursing on just $\Pi|_{\{x_i = 0\}}$ may ignore relevant parts of $\Pi$, namely those resolutions on variables not even appearing in $\Pi|_{\{x_i = 0\}}$.
This is the purpose of the deletion operator.
The refutation $\del_{\var(\Pi|_{\{x_i = 0\}})}(\Pi)$ is a refutation with resolutions that correspond to resolutions in $\Pi$ but not in $\Pi|_{\{x_i = 0\}}$, so we can recurse on it without worrying about this double counting issue.
This can be iterated so that we eventually see all literals appearing in $\Pi$ without considering a particular resolution more than once.
So an incomplete but instructive outline of our algorithm is this: recurse on $\Pi|_{\{x_i = 0\}}$ and lift the proof to axioms of $\tau$, iterate the deletion operator to derive all literals appearing in $\Pi$ with possible additional appearances of $x_i$, and then simulate $\Pi|_{\{x_i = 1\}}$ directly to derive $\overline{x_i}$ and remove all additional appearances of $x_i$.

Before we finally state and prove Theorem \ref{thm:final}, we present two simple lemmas that are factored out of the proof to simplify its presentation.
The first essentially states that a variable in a connected refutation must play a nontrivial role, which intuitively should be true if we want to derive its corresponding literals.
The second tells us that once we can directly simulate a connected resolution refutation in $\piPz$, we can also directly simulate a proof of all its literals; this is essentially a stronger version of the observation in the previous paragraphs that is more suited to the goal of deriving all literals.

\begin{lemma} \label{lm:restrictedAxiom}
  Let $\Pi$ be a connected resolution refutation of $\tau$ such that $x \in \var(\Pi)$ and let $\Pi'$ be the downward closure of any appearance of $0$ in $\Pi|_{\{x = a\}}$.
  Then there is a clause $C \in \tau$ that contains $x^{1-a}$.
\end{lemma}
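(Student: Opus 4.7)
I would prove the lemma by combining two simple observations about weakening-free resolution; the subproof $\Pi'$ plays no explicit role in the argument itself and appears in the statement only to fix notation for later applications. The only point that deserves some attention is the role of connectedness, which is what actually forces the resolution on $x$ to occur.

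The first observation is that $\Pi$ must contain a resolution step on the variable $x$. Since $\Pi$ is a connected refutation, its graph $G(\Pi)$ has a unique sink, which is the empty clause $0$. Because $x\in \var(\Pi)$, some clause of $\Pi$ carries a literal of $x$; on any directed path from that clause to the sink this literal must disappear, and without weakening the only mechanism that deletes a literal is a resolution on its variable. Hence $\Pi$ resolves on $x$ somewhere, and one of the two antecedents of that step carries the literal $x^{1-a}$. Note that without connectedness the argument would fail: a non-connected refutation could contain an ``orphan'' maximal node whose clause carries a literal of $x$ that never needs to be resolved.

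The second observation is the standard ``literal traceback'' property: in a weakening-free resolution proof, if a literal $\ell$ appears in a clause $D$, then $\ell$ appears in some axiom ancestor of $D$. This is a routine induction on the depth of $D$: the base case is immediate, and at a resolution node $D=\res(C_1\lor y_j^c,\,C_2\lor y_j^{1-c})$ every literal of $D$ lies in one of the two antecedents, to which the induction hypothesis applies. Combining the two observations on the antecedent containing $x^{1-a}$ yields an axiom of $\Pi$, i.e., a clause $C\in\tau$, containing $x^{1-a}$, as desired.
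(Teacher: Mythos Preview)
Your argument is correct for the lemma exactly as stated, and it is a genuinely different route from the paper's. The paper argues by contradiction through $\Pi'$: assuming no axiom with $x^{1-a}$ exists, the restriction $\{x=a\}$ does nothing to the axioms that survive into $\Pi'$, so $\Pi'$ sits inside $G(\Pi)$ with unchanged labels; connectedness then forces $\Pi'=\Pi$, contradicting $x\in\var(\Pi)$. You instead work entirely in $\Pi$: connectedness forces a resolution on $x$, and literal traceback pushes $x^{1-a}$ down to an axiom. Both uses of connectedness are the same (every clause sees the sink), and your proof is arguably cleaner. The one thing the paper's structure buys is localization: its argument transparently upgrades to ``some axiom \emph{of $\Pi'$} contains $x^{1-a}$,'' which is the form actually invoked in Steps~3 and~8 of the simulation (there one needs the remaining variables of $C$ to lie in $\var(\Pi')$ so that $C$ can be stripped down to the unit $x^{1-a}$ using the already-derived literals). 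Your proof only places $C$ among the axioms of $\Pi$, so you are right that $\Pi'$ is idle in your argument---but that also means a reader following your version would need a separate remark at the application sites.
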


\begin{proof}
  Suppose for contradiction that there is no such clause.
  Then all axioms in $\Pi'$ are axioms in $\Pi$ not containing the variable $x$, so in the standard definition of restriction no edges are contracted and $G(\Pi')$ is a downward-closed subgraph of $G(\Pi)$ with identical labels.
  Since $\Pi$ is connected it has a unique appearance of $0$ (otherwise, $0$ would be the premise of some resolution step that is impossible).
  Therefore $\Pi' = \Pi$ which contradicts the fact that $x \in \var(\Pi)$.
\end{proof}

\begin{lemma} \label{lm:allLitsEasy}
For any connected resolution refutation $\Pi$ of $\tau$, there is a resolution proof from $\tau$ of size at most $|\Pi| + 2n^2$ that derives, as unit clauses, all literals of variables in $\var(\Pi)$.
\end{lemma}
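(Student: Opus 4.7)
The plan is to use $\Pi$ itself as the initial segment of the proof, contributing $|\Pi|$ to the size bound. Since $\Pi$ is a connected refutation, its last resolution step is necessarily of the form $\res(x_{i_0}^0, x_{i_0}^1) = 0$ for some $x_{i_0} \in \var(\Pi)$, so along the way we obtain the two unit clauses $x_{i_0}^0$ and $x_{i_0}^1$ for free. The remaining task is to derive the other literals on $\var(\Pi)$ as unit clauses, and the plan is to spend at most $n$ extra resolution steps per such literal.

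For each literal $\ell = x_j^a$ that still needs to be derived, Lemma \ref{lm:restrictedAxiom} (applied with the complementary value $1-a$) provides an axiom $C_\ell \in \tau$ containing $\ell$. Writing $C_\ell = \ell \vee D_\ell$ with $|D_\ell| \leq n - 1$, one derives $\ell$ by successively resolving $C_\ell$ against previously derived unit clauses of the form $\bar m$ for $m \in D_\ell$, eliminating one literal of $D_\ell$ per step and arriving at $\ell$ after at most $n - 1$ resolutions.

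The hard part is scheduling these derivations so that whenever it is time to produce $\ell$, all of the required $\bar m$'s ($m \in D_\ell$) are already available. I would address this by a greedy argument: at each stage, either one can find a new literal $\ell$ for which the axiom $C_\ell$ guaranteed by Lemma \ref{lm:restrictedAxiom} has all of its other literals already killable via existing unit clauses, or else the set of remaining ``undrivable'' literals would assemble (together with the already-derived ones) into a consistent partial assignment on $\var(\Pi)$ extendable to a satisfying assignment of $\tau$, contradicting the very existence of $\Pi$ as a refutation. If needed, this greedy step can be replaced by a more structural schedule read off from $\Pi$ itself, processing variables in some order linked to their first appearance in the resolution DAG.

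Summing, the derivation stage contributes at most $2|\var(\Pi)| \cdot (n-1) \leq 2n^2$ additional steps on top of the initial $|\Pi|$, for an overall bound of $|\Pi| + 2n^2$, as required.
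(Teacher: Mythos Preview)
Your overall shape matches the paper's: include $\Pi$ (contributing $|\Pi|$), observe that the root of a connected refutation is $\res(x_{i_0}^0,x_{i_0}^1)$, and then peel off the remaining unit literals at a cost of at most $n$ resolutions each. The difference is entirely in the scheduling, and there your argument has a genuine gap.

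The greedy justification does not work. You have already derived the inconsistent pair $x_{i_0}^0,x_{i_0}^1$, so the phrase ``together with the already-derived ones \dots\ a consistent partial assignment extendable to a satisfying assignment'' is vacuous, and no contradiction with unsatisfiability can be extracted from being stuck. Concretely, let $\tau$ be the eight width-$3$ clauses $\{x_1^{a_1}\lor x_2^{a_2}\lor x_3^{a_3}: a\in\{0,1\}^3\}$, with the obvious refutation that first derives $x_1$ and $\bar x_1$ and then $0$. After including $\Pi$ you have only $x_1,\bar x_1$ as unit clauses; but every axiom containing a literal on $x_2$ also contains a literal on $x_3$, and vice versa, so no axiom can be whittled down to a single literal using only $x_1,\bar x_1$. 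Your greedy procedure (which commits to using axioms $C_\ell$ supplied by Lemma~\ref{lm:restrictedAxiom}) is simply stuck.

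The fix, and what the paper does, is to abandon axioms and instead use the \emph{intermediate} clauses of $\Pi$, which you already own since you included $\Pi$. Process the resolution nodes of $\Pi$ in reverse topological order (root first). When you reach a node with resolvent $C\lor D=\res(C\lor x_i^0,\,D\lor x_i^1)$, you inductively have both unit literals for every variable in $C\lor D$; resolving the parent $C\lor x_i^0\in\Pi$ against those unit clauses gives $x_i^0$ in at most $n$ steps, and likewise $x_i^1$. Connectedness ensures every variable is reached. Your vague fallback (``a more structural schedule read off from $\Pi$'') gestures in this direction but misses the key point: it is the \emph{parents of each resolution step}, not the axioms, that supply the clauses you need.
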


\begin{proof}
  It suffices to note that if literals of all variables in $\res(C \lor x_i^0, D \lor x_i^1)$ have been derived as unit clauses, then there is a proof of size at most $2n$ that derives $x_i^0$ and $x_i^1$.
  This process can be repeated on clauses in $\Pi$ in reverse topological order (skipping clauses for which $x_i^0$ and $x_i^1$ have already been derived).
  Connectedness guarantees that every clause appearing in $\Pi$ (and hence every variable) is processed.
\end{proof}

\begin{theorem} \label{thm:final}
There is a polynomial time algorithm that, given a connected resolution refutation $\Pi$ of $\tau$, outputs a $\piPz$ proof of size $O(n^2|\tau||\Pi|)$ that derives,
as unit clauses, all literals of variables in $\var(\Pi)$.
\end{theorem}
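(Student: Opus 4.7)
The plan is to construct a recursive algorithm and verify it by induction on the lexicographic pair $(n, |\Pi|)$. The base case $|\Pi| = 1$ is trivial: $\Pi$ is a single empty clause axiom and $\var(\Pi) = \emptyset$. If $x_1 \notin \var(\Pi)$, recurse on $(n-1, |\Pi|)$ by viewing $\Pi$ as a refutation in $n-1$ variables; otherwise I proceed with the three phases below, following the informal outline preceding the theorem.

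Phase 1 derives every literal of $\var(\Pi) \setminus \{x_1\}$, each as $\ell$ or $\ell \lor x_1$. Let $\Pi_0$ be the downward closure of the sink in $\Pi|_{\{x_1 = 0\}}$; it is a connected refutation of a sub-CNF of $\tau|_{\{x_1 = 0\}}$ with $|\Pi_0| < |\Pi|$, the strict inequality holding because $x_1 \in \var(\Pi)$ forces an $x_1$-resolution in $\Pi$ that becomes trivial after restriction. The inductive hypothesis yields a $\piPz$-proof $P_0$ from $\tau|_{\{x_1 = 0\}}$ deriving all literals of $\var(\Pi_0)$; applying $\lift_\tau$ promotes this to a $\piPz$-proof from $\tau$ that derives each such literal as $\ell$ or $\ell \lor x_1$. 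To cover $\var(\Pi) \setminus (\var(\Pi_0) \cup \{x_1\})$, iterate with $S := \var(\Pi_0)$: the downward closure of the sink in $\del_S(\Pi)$ is a connected refutation of a sub-CNF of $\del_S(\tau)$ with strictly smaller size by Lemma \ref{lm:size}, and each of its axioms is obtained from the corresponding axiom of $\tau$ by resolving away the $S$-literals (at worst introducing one extra $x_1$-attachment per clause).

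Phase 2 derives $\overline{x_1}$ (and symmetrically $x_1$). A symmetric version of Phase 1 applied to $\Pi|_{\{x_1 = 1\}}$ produces each non-$x_1$ literal as $\ell$ or $\ell \lor \overline{x_1}$. Starting from the trail $[x_1 \stackrel d = 1]$, these clauses become unit under the trail and can drive a sequence of unit propagations that, combined with Lemma \ref{lm:restrictedAxiom} to locate an axiom of $\tau$ containing $\overline{x_1}$, yields a conflict; the Learning rule then produces $\overline{x_1}$ as a unit clause. Phase 3 uses the now-derived $x_1$ and $\overline{x_1}$ to strip all $x_1$-attachments from Phase 1's clauses by one resolution each, giving clean unit clauses for every variable in $\var(\Pi)$.

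The main obstacle is Phase 1's iterative deletion. I must verify that (a) the downward closure of the sink in $\del_S(\Pi)$ is indeed a connected refutation of a sub-CNF of $\del_S(\tau)$ of the claimed size, and (b) the cumulative $x_1$-attachment bookkeeping does not inflate the proof beyond the target $O(n^2|\tau||\Pi|)$; this reduces to a recurrence of the form $T(\Pi) \leq \sum_i T(\Pi_i) + O(n|\tau|)$ with $\sum_i |\Pi_i| < |\Pi|$ and recursion depth $O(n)$. A secondary subtlety is Phase 2: deriving $\overline{x_1}$ depends critically on the ``loophole'' in the $\pi$-D amendment that leaves unit propagations unconstrained, since otherwise the ordered-decision rule alone would not permit the trail extensions needed to trigger the correct conflict.
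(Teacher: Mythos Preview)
Your outline tracks the paper's argument closely through Phase~1, but Phase~2 as written contains a genuine gap that destroys the polynomial bound. The recurrence you state, $T(\Pi) \leq \sum_i T(\Pi_i) + O(n|\tau|)$ with $\sum_i |\Pi_i| < |\Pi|$, accounts only for the recursive calls of Phase~1. Your Phase~2 then performs ``a symmetric version of Phase~1 applied to $\Pi|_{\{x_1 = 1\}}$'', i.e.\ a second full round of recursion whose subproblem sizes again sum to nearly $|\Pi|$. Since $|\Pi|_{\{x_1=0\}}|$ and $|\Pi|_{\{x_1=1\}}|$ can each be as large as $|\Pi|-1$ (the DAG structure lets almost the whole proof survive both restrictions), the combined recurrence is $T(n,m)\le 2\,T(n-1,m)+\cdots$, which is $2^n$. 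This is exactly the double-counting danger the paper flags in the paragraph preceding the theorem.

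The paper breaks the symmetry. It derives the literal $x_1$ \emph{immediately} after the single recursion on $\Pi^0$: either two of the lifted units $l_{y,0},l_{y,1}$ contain $x_1$ and resolve to it, or an axiom of $\tau$ containing $x_1$ (guaranteed by Lemma~\ref{lm:restrictedAxiom}) is resolved against the lifted units. With $x_1$ in hand, $\overline{x_1}$ is stripped from every axiom, and the iterated deletions are taken over $\mathcal S\cup\{x_1\}$ rather than $\mathcal S$; this also forces each recursive subproblem to lose a variable, which is what actually justifies the depth-$n$ claim (your $\del_S(\Pi)$ may still contain $x_1$). For $\overline{x_1}$ the paper does \emph{not} recurse at all. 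Phase~1 already yields every non-$x_1$ literal as $\ell$ or $\ell\lor x_1$; these become unit under the trail $[x_1\stackrel d=0]$ (not $[x_1\stackrel d=1]$ as you wrote), so arbitrary trails extending $[x_1\stackrel d=0]$ are reachable via unit propagation. That permits \emph{direct} step-by-step simulation of the $x_1$-free proof $\Pi|_{\{x_1=1\}}$ and its extension from Lemma~\ref{lm:allLitsEasy}, at cost $O(n|\Pi|)$, after which an axiom containing $\overline{x_1}$ yields $\overline{x_1}$ by a short chain of half-ordered resolutions.

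A smaller point: the iteration is cleanest when carried out entirely from the virtual axiom set $\psi=\del_{\{x_1\}}(\tau)\cup\bigcup_{y\in\mathcal S}\{y^0,y^1\}$ and lifted \emph{once} at the end via $\lift_{\tau^\ast}$. Your description (``resolving away the $S$-literals \ldots\ at worst introducing one extra $x_1$-attachment'') would feed $x_1$-contaminated axioms into a recursive call that cannot accept them, since both $\lift$ and the inductive hypothesis require $x_1\notin\var(\cdot)$.
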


\begin{proof}
We present the algorithm $\psim$ recursively.

\bigskip
\noindent \textbf{Simulation Algorithm} ($\psim$)

\begin{enumerate}

\item
If $|\var(\Pi)| = 1$, then for some variable $x_i$, $\Pi$ contains only a resolution of $x_i$ and $\neg x_i$.
In this case, output the axioms $x_i$ and $\overline{x_i}$.

\item \label{itm:firstRecurse}
Assume without loss of generality that all variables appear in $\Pi$.
Define $\Pi^0$ to be the downward closure of some appearance of $0$ in $\Pi|_{\{x_1 = 0\}}$.
Derive $\lift_\tau(\psim(\Pi^0))$ and let $l_{y, a} \in \{y^a, y^a \lor x_1\}$ for $y \in \mathcal \var(\Pi^0)$ denote the lifted unit clauses appearing in it.
Note that $\Pi^0$ might be trivial, in which case $x_1$ is an axiom in $\tau$ and the next step can be skipped.

\item
If $x_1$ appears in any $l_{y, a}$ from the previous step, then derive $x_1 = \res(l_{y, 0}, l_{y, 1})$.
Otherwise, by Lemma \ref{lm:restrictedAxiom}, there is a clause $C \in \tau$ containing the literal $x_1$.
Derive $x_1$ by consecutively resolving $C$ with literals $\overline{\ell_{y,a}}$, for all $\ell_{y,a}$ in $C$.
We note here that these are half-ordered resolutions and hence admissible in $\piPz$, but we refrain from pointing this out in similar cases below.

\item \label{itm:lifted}
Derive the clauses $\{C \circ^{x_1} x_1 : C \in \tau\}$. At this point we have derived a set of clauses $\tau^\ast$ such that for every clause $C$ in
\[
\psi\df \del_{\{x_1\}}(\tau) \cup \bigcup_{y\in\var(\Pi^0)} \{y^0,y^1\},
\]
$\tau^\ast$ contains either $C$ or $C\lor x_1$.

\item \label{itm:secondRecurse} Set $\mathcal S \gets \var(\Pi^0)$. While
    $\mathcal S \cup \{x_1\} \not = V$ perform the following procedure
    constructing a $\pi$-$P_0$ proof from the set of axioms $\psi$. We
    maintain that at the start of each iteration, all unit clauses in
    $\bigcup_{y \in \mathcal S} \{y^0, y^1\}$ have been derived.
    Also, to make clear, the proof constructed in this step is \textbf{not} part of the output, as
    not all the clauses in $\psi$ have actually been derived, but is used in the next step to derive part of the output.

\begin{enumerate}

\item
Construct the clauses of $\del_{\mathcal S \cup \{x_1\}} (\tau)$ by resolving each clause in $\del_{\{x_1\}}(\tau)$ with the unit clauses $x^a$ for $x \in \mathcal S$.
Then build $\del_{\mathcal S \cup \{x_1\}}(\Pi)$ using the deletion algorithm.

\item
Assume without loss of generality that $\del_{\mathcal S \cup \{x_1\}}(\Pi)$ is connected; otherwise, as usual, take the downward closure of any appearance of $0$.
Construct the proof $\psim(\del_{\mathcal S \cup \{x_1\}}(\Pi))$.

\item
Set $\mathcal S \gets \mathcal S \cup \var(\del_{\mathcal S \cup \{x_1\}}(\Pi))$.
\end{enumerate}

\item \label{itm:liftRecurse}
Since $\del_{\mathcal S \cup \{x_1\}}(\Pi)$ is always nontrivial when $\mathcal S \cup \{x_1\} \not = V$ (this follows from the well-definedness of the Deletion operator, Claim \ref{clm:del-subclause}), the previous step terminates. Call the resulting proof $\Upsilon$. It derives from $\psi$ all unit clauses $x_i^a$ for $i \in [2, n]$.
Derive the proof $\lift_{\tau^\ast}(\Upsilon)$, where $\tau^\ast$ is the set of clauses in step \ref{itm:lifted}.
This proof derives (this time from $\tau$) $l_{i, a} \in \{x_i^a, x_i^a \lor x_1\}$ for $i \in [2, n]$.
It remains to derive $\overline{x_1}$.

\item
For that purpose, it is now possible to build any trail (up to annotations) that extends $[x_1 \deq 0]$ by using the Unit Propagation Rule with the lifted unit clauses from the previous step.
Therefore, we can simulate any resolution proof not containing the variable $x_1$ by directly simulating each resolution step.
Do this to the resolution proof extending $\Pi|_{\{x_1 = 1\}}$ that derives all literals appearing in it (Lemma \ref{lm:allLitsEasy}).

\item
By Lemma \ref{lm:restrictedAxiom}, there is a clause $C \in \tau$ containing $\overline{x_1}$ that appears restricted in $\Pi|_{\{x_1 = 1\}}$.
Derive $\overline{x_1}$ by resolving $C$ with all new literals from the previous step, when possible.

\item
Derive all remaining literals by resolving $l_{i, a}$ with $\overline{x_1}$ when necessary.

\end{enumerate}

Let $f(n, m)$ and $s(n, m)$ be upper bounds on the running time of $\psim$ and the size of $\piPz$ proof output by $\psim$, respectively, when $\psim$ is run on a proof containing at most $n$ variables and whose size is at most $m$.
Our primary focus is understanding the contributions of step \ref{itm:firstRecurse} and \ref{itm:secondRecurse} since the algorithm is called recursively in these steps.
Step \ref{itm:firstRecurse} adds at most $s(n - 1, |\Pi^0|)$ to $s(n, |\Pi|)$ and
\[
  f(n - 1, |\Pi^0|) + O(n\cdot s(n - 1, |\Pi|))
\]
to $f(n, |\Pi|)$.

Suppose that step \ref{itm:secondRecurse} iterates $T(\leq n)$ times.
For $i \in [T]$, define $\mathcal S^i$ to be the state of $\mathcal S$ before the $i^{\text{th}}$ iteration and define $\Pi^i$ to be $\del_{\mathcal S^i \cup \{x_1\}}(\Pi)$.
Then steps \ref{itm:secondRecurse}-\ref{itm:liftRecurse} contribute at most $\sum_{i = 1}^T s(n - 1, |\Pi^i|)$ to the size bound and
\[
  \sum_{i = 1}^T f(n - 1, |\Pi^i|) + O(n|\tau|\cdot |\Pi|)
\]
to the running time bound.

The most important fact here is that, by Lemma \ref{lm:size}, $\sum_{i = 0}^T |\Pi^i| \leq |\Pi|$.
This is because the sets $\var(\Pi^i)$ for $i \in [0, T]$ are pairwise disjoint and so the resolutions in each proof $\Pi^i$ correspond to unique resolutions in $\Pi$.
Note the special case of $\Pi^0$, which uses the fact that restrictions, like variable deletion, have the property that all resolutions in the resulting proof correspond to resolutions in $\Pi$ on the same variable.

The auxiliary operations performed throughout the algorithm (e.g., recreating trails by adding assignments to $x_1$ at the start) are clearly $O(n|\tau|\cdot |\Pi|)$ that yields the bounds
\[
  s(n, |\Pi|)
  \leq \sum_{i = 0}^T s(n - 1, |\Pi^i|)
  + O(n|\tau|\cdot|\Pi|)
\]
and
\[
  f(n, |\Pi|)
  \leq \sum_{i = 0}^T f(n - 1, |\Pi^i|)
  + O\left(\sum_{i = 0}^T s(n - 1, |\Pi^i|) \right)
  + O(n^2 |\tau|\cdot|\Pi|).
\]
By induction on $n$, first for $s$ and then $f$, it follows that $s(n, \Pi) = O(n^2|\tau|\cdot |\Pi|)$ and $f(n, |\Pi|) = O(n^3|\tau|\cdot |\Pi|)$.
\end{proof}

\section{Width lower bound} \label{sec:width}
    %
% Width Lower Bound
%
Our last piece of technical work is Theorem \ref{thm:width}, which demonstrates the limitations of bounded width clause learning in the presence of the ordered decision strategy.
Using the connection to $\piPz$ from the previous section, Theorem \ref{thm:width} follows from a general width lower bound for $\piPz$.
Some of the formulas to which this bound applies have constant width refutations and hence, by Theorem \ref{thm:main}, automatically have polynomial size $\piPz$ refutations. Thus this result also shows that there is no size-width relationship for $\piPz$ like the one for resolution proved by Ben-Sasson and Wigderson \cite{1999-Ben-Sasson-Wigderson}.

\newcommand{\varp}[1]{\textnormal{Var}_\pi^{#1}}
Say that a clause $C$ is \textit{almost-$k$-small} if $|\var (C)\setminus
\varp k| \leq 1$, and that a trail  $t=[x_{i_1} \stackrel{*_1}= a_1,\ldots,
x_{i_r} \stackrel{*_r}= a_r]$ is \textit{$k$-trivial} if for $s\df\min(r,k)$,
all assignments in $t[\leq s]$ are decisions on variables in $\varp k$ in
$\pi$-increasing order: $t[\leq s] = [x_{\pi(1)}\stackrel{d}{=}{a_1},\ldots,
x_{\pi(s)}\stackrel d= a_s]$.

\begin{definition} \label{def:robust}
The order $\pi$ is \textit{$k$-robust} for a contradictory CNF $\tau$ if for any restriction $\rho$ such that $|\var(\rho) \setminus \varp k| \leq 1$, the following properties hold:
\begin{itemize}
  \item the formula $\tau|_\rho$ is minimally unsatisfiable, i.e., all strict subsets of $\tau|_\rho$ are satisfiable;
  \item for all $i \in [n]$, if $x_i \in \var(\rho)$ then there is a clause in $\tau$ that appears restricted in $\tau|_\rho$ (i.e., is not satisfied by $\rho$) and contains the variable $x_i$.
\end{itemize}
\end{definition}

For a CNF $\tau_n$, the \textit{$r$-ary parity substitution} of $\tau_n$, denoted by $\tau_n[\oplus_r]$, is the formula in which for all $i \in [n]$, each variable $x_i$ is replaced with $\bigoplus_{j = 1}^r y_{i, j}$ where the variables $y_{i, 1}, y_{i, 2}, \dots, y_{i, r}$ are new and distinct.
As described, $\tau_n[\oplus_r]$ is technically not a CNF, but its encoding as a CNF is straightforward and natural; see \cite{2013-Nordstrom} for full details.
It is also straightforward to check that whenever $\tau_n$ is minimally unsatisfiable and contains all variables $x_1,\ldots,x_n$, the order $\pi$ on the variables of $\tau_n[\oplus_r]$ given by
\begin{align*}
  &\pi(y_{1, 1}) < \pi(y_{2, 1}) < \dots < \pi(y_{n, 1}) < \\
  &\pi(y_{1, 2}) < \pi(y_{2, 2}) < \dots < \pi(y_{n, 2}) < \dots < \\
  &\pi(y_{1, r}) < \pi(y_{2, r}) < \dots < \pi(y_{n, r})
\end{align*}
is $((r - 2)n)$-robust. In fact, this readily follows from the observation that any restriction $\rho$ as in Definition \ref{def:robust} must leave unassigned at least one variable in each group $\{y_{i,1},\ldots,y_{i,r}\}$.

The following theorem shows that robustness implies large width in $\piPz$.

\begin{theorem}\label{thm:width-piPz}
Let $\tau$ be a contradictory CNF formula and let $\pi$ be an $w$-robust order for $\tau$. Then the width of any $\piPz$ refutation of $\tau$ is at least $w$.
\end{theorem}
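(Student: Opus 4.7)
The plan is to prove the contrapositive: assume there is a $\pi$-$P_0$ refutation $\Pi$ of $\tau$ in which every clause has width less than $w$, and derive a violation of $w$-robustness. WLOG $\pi = \mathrm{id}$. The strategy is to establish, by simultaneous induction on the derivation of $\Pi$, two invariants: (I1) every clause $C$ in $\Pi$ is almost-$w$-small, i.e.\ $|\var(C)\setminus\varp{w}|\leq 1$, and (I2) every trail $t$ in $\Pi$ is $w$-trivial. Once these hold, the final Learning step deriving $0$ requires two complementary unit clauses $x_i^a$ and $x_i^{1-a}$ together with a trail $t$ containing $x_i$; tracing their derivations back and restricting by the canonical $\varp{w}$-prefix of $t$ should expose a proper unsatisfiable subfamily of $\tau|_\rho$, contradicting the minimal unsatisfiability guaranteed by $w$-robustness.

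The key inductive step for (I1) is the Learning rule producing $C\lor D$ from antecedents $C\lor x_i^a$ and $D\lor x_i^{1-a}$ together with a trail $t$. The condition that all variables of $C$ appear before $x_i$ in the $w$-trivial trail $t$ forces $\var(C)\subseteq\varp{w}$: if $x_i\in\varp{w}$, then by (I2) $x_i$ sits at position $i$ in $t$ and every earlier position holds a $\varp{w}$-variable; if $x_i\not\in\varp{w}$, then (I1) on the antecedent $C\lor x_i^a$ pins the unique high variable to $x_i$ itself, so again $\var(C)\subseteq\varp{w}$. Hence $\var(C\lor D)\setminus\varp{w}\subseteq\var(D)\setminus\varp{w}$, which has size $\leq 1$ by (I1) on the second antecedent. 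Axioms require a preliminary argument: one shows via the minimality condition of $w$-robustness that an axiom of $\tau$ with two high variables would admit a restriction fixing all but one of its literals and violating either the minimality or support part of Definition \ref{def:robust}, hence every axiom of $\tau$ is already almost-$w$-small.

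The main obstacle lies in preserving (I2) under the Unit Propagation rule on a short trail: appending a $u$-annotated assignment to a $w$-trivial trail $t$ with $|t|<w$ breaks the decision-only prefix requirement. To rule this case out I plan to apply $w$-robustness to the restriction $\rho = \rho_t\cup\{x_j = 1-a\}$, where $C|_t = x_j^a$ is the propagating literal: this $\rho$ has $|\var(\rho)\setminus\varp{w}|\leq 1$, and under it the almost-$w$-small clause $C$ reduces to $0$. The support part of $w$-robustness then forces every variable of $\rho_t$ to appear in some restricted clause of $\tau$, while minimality tightly constrains the structure of $\tau|_\rho$; combined with the narrow-width hypothesis on axioms and (I1) on any derived almost-$w$-small clauses below, this should yield a contradiction. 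Threading minimality, support, and the trail structure together across this family of trail-induced restrictions will be the most delicate part of the argument, and will likely require a few auxiliary lemmas characterizing exactly when almost-$w$-small clauses can propagate on $w$-trivial trails of various lengths.
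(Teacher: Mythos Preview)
Your invariant (I1) fails at the base case: axioms of $\tau$ need not be almost-$w$-small, and in fact in the paper's own witness family $\mathrm{Ind}_m[\oplus_r]$ with the $((r-2)m)$-robust order, \emph{every} axiom contains at least two variables from the top two blocks $\{y_{i,r-1},y_{i,r}\}_i$, hence at least two variables outside $\varp{(r-2)m}$. Your sketched argument that a two-high-variable axiom would violate robustness cannot work: the restrictions $\rho$ in Definition~\ref{def:robust} have at most one high variable in their domain, so they can never falsify a clause with two high variables, and no contradiction with minimality or support arises.

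There is also a structural tension between (I1) and (I2) that your plan does not resolve. If an almost-$w$-small clause $C$ with $|\var(C)|<w$ is available, then a short $\pi$-ordered decision trail can make it unit, so unit propagation on a trail of length $<w$ is a legal step in $\Pi$; nothing in robustness prevents the prover from taking it. In other words, (I2) is sustained precisely by the \emph{absence} of short almost-$w$-small clauses, not by (I1).

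The paper's proof exploits exactly this tension. Instead of claiming all clauses are almost-$w$-small, it lets $C$ be the \emph{first} almost-$w$-small clause in $\Pi$ (one exists, since $0$ is almost-$w$-small). Before $C$ there are no almost-$w$-small clauses, and this is what forces every earlier trail to be $(w+1)$-trivial (any failure of triviality would require unit propagation by an almost-$w$-small clause) and hence every earlier resolution to be on a variable outside $\varp{w+1}$. One then passes to the resolution subproof $\Gamma$ of $C$, restricts by any $\rho$ with domain $\varp{w}\cup\var(C)$ falsifying $C$, and applies robustness: minimality of $\tau|_\rho$ forces all of its clauses to appear as axioms of $\Gamma|_\rho$, the support condition places every variable of $\varp{w}$ in one such axiom, and since no resolution in $\Gamma$ touches $\varp{w+1}$ these variables survive into $C$. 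Thus $\varp{w}\subseteq\var(C)$ and the width bound follows directly---no separate endgame tracing unit clauses is needed.
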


\begin{proof}
Assume without loss of generality that $\pi=\text{id}$.
Let $\Pi$ be a $\piPz$ refutation of $\tau$ and let $C$ be the first almost-$w$-small clause appearing in $\Pi$. We will actually prove  that $\varp w \subseteq\var(C)$.

First, we claim that all trails that appear before $C$ in $\Pi$ are $(w + 1)$-trivial.
Suppose otherwise and let $t$ be the first trail in $\Pi$ that is not.
Since $\Pi$ contains all prefixes of $t$, and all such prefixes precede $t$, it follows that $t$ is of the form $[t', x_i \ueq a]$, where $t' = [x_1 \deq a_1, x_2 \deq a_2 \dots, x_j \deq a_j]$ and $j < w + 1$.
Suppose that $t$ follows from $t'$ by the Unit Propagation rule with the clause $D$.
This means $D|_{t'}$ is a unit clause, which implies $D$ is almost-$w$-small, contradicting the assumption that $C$ is the first almost-$w$-small clause in $\Pi$.

It then follows that all resolutions (corresponding to applications of the Learning rule) that appear before $C$ are on variables not in $\varp {w + 1}$.
Indeed, suppose that the inference
\[
  \ternaryinf{D \lor x_i^a}{E \lor \overline{x_i^a}}{t}{D \lor E}
\]
appears before $C$ in $\Pi$.
By the claim in the previous paragraph, $t$ is $(w + 1)$-trivial.
Therefore if $x_i \in \varp {w + 1}$, then it is actually assigned in $t[\leq w + 1]$ and so are all variables appearing in $D$.
This implies $D$ is almost-$w$-small, contradicting the assumption that $C$ is the first such clause.

Finally let $\Pi^*$ be the \textit{resolution refutation} corresponding to $\Pi$; that is, the refutation constructed from $\Pi$ by ignoring all trails.
Let $\Gamma$ be the connected subproof of $C$ in $\Pi^*$ on the downward closure of $C$.
By the remark in the previous paragraph, all resolutions in $\Gamma$ are on variables not in $\varp {w + 1}$.
Lastly, let $\rho$ be any restriction with the domain $\varp {w} \cup \var(C)$ that falsifies $C$, so that $\Gamma|_\rho$ is a refutation of $\tau|_\rho$.
By the first property in the definition of robustness, $\tau|_\rho$ is minimal, which implies that all clauses in $\tau|_\rho$ appear as axioms of $\Gamma|_\rho$.
Therefore, there are paths from these clauses (unrestricted) to $C$ in $\Gamma$.
By the second property of robustness, each variable in $\varp w$ appears at least one of these clauses.
Since all resolutions in $\Gamma$ are on variables not in $\varp {w + 1}$, it follows that $\varp w \subseteq \var(C)$.
\end{proof}

Finally, we prove Theorem \ref{thm:width}, which is restated here for convenience.
The proof is a simple variation of the one above (we only have to make sure that the variables in $\varp w$ appear in a {\em learned} clause).

\begin{theorem}
(Theorem \ref{thm:width} restated)
For any fixed order $\pi$ on the variables and every $\epsilon>0$ there exist contradictory CNFs $\tau_n$
with $w(\tau_n\vdash 0)\leq O(1)$ not provable in {\sf CDCL}{\rm (}$\pi$-\sf D, {\sf WIDTH}-$(1-\epsilon)n${\rm )}.
\end{theorem}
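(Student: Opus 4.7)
The plan is to adapt the proof of Theorem \ref{thm:width-piPz} to the CDCL setting, the only extra thing to arrange being that the first almost-$w$-small clause to appear in the run is a \emph{learned} clause and not an axiom. Pick $r = \lceil 4/\epsilon\rceil + 2$ and a minimally unsatisfiable $\sigma_m$ on $m = n/r$ variables with a constant-width refutation and all variables appearing (a convenient choice is $\{x_i\lor \overline{x_{i+1}} : i\in[m-1]\}\cup\{x_m, \overline{x_1}\}$). Set $\tau_n \df \sigma_m[\oplus_r]$ and label its $n$ variables $y_{i,j}$ ($i\in[m],\ j\in[r]$) so that the block ordering $y_{*,1} < y_{*,2} < \cdots < y_{*,r}$ coincides with the ambient $\pi$. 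By the remark preceding Theorem \ref{thm:width-piPz}, $\pi$ is then $w$-robust for $\tau_n$ with $w \df (r-2)m$, and our choice of $r$ gives $w \geq (1-\epsilon/2)n > (1-\epsilon)n$; moreover, the $O(r)$ blow-up of XOR substitution keeps $w(\tau_n \vdash 0) = O(1)$. A key structural observation is that no axiom of $\tau_n$ is almost-$w$-small: each width-$k$ clause of $\sigma_m$ lifts to CNF clauses touching all $r$ blocks, hence contains at least $2k \geq 2$ variables in the last two blocks, which lie entirely outside $\varp w$.

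Assume for contradiction that there is a $\mathsf{CDCL}(\pi\text{-}\mathsf{D}, \mathsf{WIDTH}\text{-}(1-\epsilon)n)$ refutation of $\tau_n$. Since the terminal clause $0$ is learned and is almost-$w$-small, let $C$ be the first learned clause in the run that is almost-$w$-small. By the observation of the previous paragraph and the choice of $C$, the set $\mathbb C$ contains no almost-$w$-small clause prior to the learning of $C$. Following the proof of Theorem \ref{thm:width-piPz} almost verbatim, first show that every trail built in the run prior to learning $C$ is $(w+1)$-trivial: $\pi\text{-}\mathsf{D}$ forces in-order decisions, so any unit-propagation violation within the first $w+1$ positions would require an almost-$w$-small clause in $\mathbb C$. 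Then conclude, by tracing Definition \ref{df:action}, that every resolution performed in a conflict analysis prior to and including the one producing $C$ is on a variable outside $\varp{w+1}$. Taking $\Gamma$ to be the downward closure of $C$ in the implicit resolution refutation extracted from the run, and $\rho$ to be any restriction with $\dom(\rho) = \varp w \cup \var(C)$ falsifying $C$, the $w$-robustness of $\pi$ forces every $y \in \varp w$ into some axiom unsatisfied by $\rho$ that belongs to $\Gamma$; this $y$ then propagates down through $\Gamma$ to $C$ because no resolution in $\Gamma$ is on a variable of $\varp{w+1}$. Therefore $\varp w \subseteq \var(C)$ and $|\var(C)| \geq w > (1-\epsilon)n$, contradicting the $\mathsf{WIDTH}$ constraint on the learned clause $C$.

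The main obstacle is isolating, within a single CDCL learning action, the fact that the chain of resolutions producing the final learned clause $C$ operates only on variables outside $\varp{w+1}$. This reduces to the combinatorial fact that a conflict analysis on a $(w+1)$-trivial trail only resolves at unit-propagation positions, all of which live beyond the first $w+1$ slots, so the resolved variables necessarily lie outside $\varp{w+1}$. Propagating this invariant across the whole run relies critically on the choice of $C$ as the \emph{first} almost-$w$-small learned clause, so that the running $\mathbb C$ never contains a clause capable of triggering an early unit propagation on a variable in $\varp{w+1}$; the axiom analysis for $\sigma_m[\oplus_r]$ is precisely what rules out such clauses originating from $\tau_n$ itself.
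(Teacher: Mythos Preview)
Your proof is correct and follows essentially the same route as the paper: show that prior to the first almost-$w$-small clause all trails are $(w+1)$-trivial, deduce that every conflict-analysis resolution avoids $\varp{w+1}$, and then invoke $w$-robustness on the downward closure to force $\varp w$ into that clause. The paper routes through the $\piPz$ translation (Theorem~\ref{thm:cdcl=p0}) and takes the first almost-$w$-small clause $D$ anywhere in the resulting proof, possibly an \emph{intermediate} conflict-analysis clause, and then argues separately that the enclosing \emph{learned} clause $C$ inherits $\varp w$ from $D$ because the remaining resolutions in that learning step are on unit-propagated variables of a $(w+1)$-trivial trail. You instead work directly in the CDCL run and pre-empt this case distinction with the observation that no axiom of $\sigma_m[\oplus_r]$ is almost-$w$-small (each lifted clause meets the last two blocks in at least two variables), so the first almost-$w$-small clause ever present in $\mathbb C$ is automatically a learned clause. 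Both arguments use the same robustness endgame; yours is a touch more direct at the cost of that short axiom check, while the paper's version avoids the check by handling intermediate clauses explicitly.
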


\begin{proof}
The formula used here is $\text{Ind}_m[\oplus_r]$ where $\text{Ind}_m$ is the {\em Induction principle}
\[
x_1\land\bigwedge_{i=1}^{m-1} (\overline{x_i}\lor x_{i+1}) \land \overline{x_m},
\]
and $r$ will be chosen as a sufficiently large constant.
The natural resolution refutation of this formula has width $O(r)$.

Fix $\epsilon > 0$.
Let $R$ be a successful run in $\cdcl(\pid, \mathsf{WIDTH}\textnormal{-}(1 - \epsilon)rm)$ on $\text{Ind}_m[\oplus_r]$ and let $\Pi$ be the natural $\piPz$ simulation of this run given by Theorem \ref{thm:cdcl=p0}.
We begin with some observations about $\Pi$ that are easily verified by examining the proof of Theorem \ref{thm:cdcl=p0}.
First, all clauses learned in $R$ are derived exactly in $\Pi$, in the order they appear in $R$.
Second, for any learning step $(C, t')$ in $R$ from the state $(\C, t)$, the proof $\Pi$ contains the connected subproof of $C$ from $\C$ corresponding exactly to the sequence of resolutions used to learn $C$ (Lemma \ref{lm:learned_deriv}).
Furthermore, the trail $t$ appears before this subproof in $\Pi$.

Let $w = (r - 2)m$ and let $D$ be the first almost-$w$-small clause in $\Pi$.
As in the proof of Theorem \ref{thm:width-piPz}, it follows that $\varp w \subseteq \var(D)$ and all trails appearing before $D$ in $\Pi$ are $(w + 1)$-trivial.
If $D$ is not a learned clause, then it appears in the subproof of some learned clause $C$.
Suppose that $C$ follows from the state $(\C, t)$ in $R$.
As is made clear in Lemma \ref{lm:learned_deriv}, all resolutions in the subproof of $C$ are on variables whose assignments are unit propagations in $t$.
Since $t$ appears before $D$, it is $(w + 1)$-trivial, so none of the variables in $\varp w$ are resolved on to derive $C$.
This implies all variables in $\varp w$ are inherited in $C$ from $D$.

The result follows by taking $r > 2 / \epsilon$ so that $(r - 2)m > (1 - \epsilon)rm$.
\end{proof}

\section{Conclusion} \label{sec:concl}
    This paper continues the line of research aimed at better understanding theoretical limitations of CDCL solvers. We have focused on the impact of decision strategies, and we have considered the simplest version that always requires to choose the first available variable, under a fixed orderings. We have shown that, somewhat surprisingly, the power of this model heavily depends on the learning scheme employed and may vary from ordered resolution to general resolution.

Practically speaking, the fact that $\cdcl(\pid, \alwaysc, \alwaysu, \decisionl)$ is not as powerful as resolution supports the observation that CDCL solvers with the ordered decision strategy are less efficient than those with more powerful decision strategies like VSIDS.
But, although $\decisionl$ is an asserting learning strategy, most solvers use more efficient asserting strategies like \textit{1-UIP}.
What can be proved if $\decisionl$ is replaced with some other amendment modeling a different, possibly more practical asserting learning scheme?
Furthermore, is it possible that $\cdcl(\pid, \alwaysc, \alwaysu)$ does not simulate general resolution?

Just as in \cite{2004-Beame-Kautz-Sabharwal, 2011-Pipatsrisawat-Darwiche, 2011-Atserias-Fichte-Thurley}, our simulations use very frequent restarts. 
Perhaps the most interesting open question in this area is whether it is actually necessary. In the language we have introduced, this amounts to understanding the power of proof systems  {\sf CDCL}({\sf NEVER-R}) and {\sf CDCL}({\sf ALWAYS-C}, {\sf ALWAYS-U}, {\sf NEVER-R}), the latter version being more oriented towards actual CDCL solvers.

We have also proved that our simulations fail quite badly with respect to width (as opposed to size): there are contradictory CNFs $\tau_n$ refutable in constant width but not belonging to $\cdcl(\pid, \mathsf{WIDTH}\textnormal{-}(1-\epsilon)n)$. The ordering $\pi$ in our result, however, essentially  depends on $\tau_n$. Is a uniform version possible? That is, do there exist contradictory CNFs $\tau_n$ refutable in small width that do not belong to (say) $\mathsf{CDCL}(\pid, \mathsf{WIDTH}\textnormal{-}\Omega(n))$ for {\em any} ordering $\pi$? Another interesting question, extracted from \cite{2011-Atserias-Fichte-Thurley}, asks if $\tau_n$ refutable in small width are always in (say) $\cdcl(\alwaysc, \alwaysu, \mathsf{WIDTH}\textnormal{-}\Omega(n))$.

Finally (cf.\ Remark \ref{rem:positional}) our model is geared towards ``positional solvers'', i.e., those that are allowed to carry along only the set of learned clauses and the current trail. This restriction is of little importance in the theoretical, nondeterministic part of the spectrum, but it will make a big difference if we would like to study dynamic decision strategies like VSIDS, further strengthen the amendments {\sf ALWAYS-C} and {\sf ALWAYS-U} by postulating the behavior in the presence of multiple choices, etc. It would be interesting to develop a rigorous mathematical formalism that would include nonpositional behavior as well.

\bibliographystyle{plain}
\bibliography{refs}
\end{document}